\theoremstyle{plain}
\long\def\comment#1{}
\newcommand{\Prob}{\ensuremath{{\mathbb{P}}}}
\newtheorem{claim}{Claim}[section]
\newtheorem{lemma}[claim]{Lemma}
\newtheorem{conjecture}[claim]{Conjecture}
\newtheorem{theorem}{Theorem}
\newtheorem{definition}[claim]{Definition}
\theoremstyle{myremark}
\tikzset{
data/.style={circle, draw, text centered, minimum height=3em ,minimum width = .5em, inner sep = 2pt},
empty/.style={circle, text centered, minimum height=3em ,minimum width = .5em, inner sep = 2pt},
}
\pgfplotsset{compat=1.5}
\newcommand{\G}{{\sf G}}
\newcommand{\A}{\mathcal{A}}
\newcommand{\poly}{\operatorname*{poly}}
\title{Logspace Reducibility From Secret Leakage Planted Clique}
\author{Jay Mardia\thanks{Department of Electrical Engineering, Stanford University. jmardia@stanford.edu}}
\date{}
\begin{document}

\maketitle
\begin{abstract}
The planted clique problem is well-studied in the context of observing, explaining, and predicting interesting computational phenomena associated with statistical problems. When equating computational efficiency with the existence of polynomial time algorithms, the computational hardness of (some variant of) the planted clique problem can be used to infer the computational hardness of a host of other statistical problems.
\begin{quote}
	\textit{Is this ability to transfer computational hardness from (some variant of) the planted clique problem to other statistical problems robust to changing our notion of computational efficiency to space efficiency?}
\end{quote}
We answer this question affirmatively for three different statistical problems, namely Sparse PCA \cite{berthet2013complexity}, submatrix detection \cite{ma2015computational}, and testing almost $k$-wise independence \cite{alon2007testing}. The key challenge is that space efficient randomized reductions need to repeatedly access the randomness they use. Known reductions to these problems are all randomized and need polynomially many random bits to implement. Since we can not store polynomially many random bits in memory, it is unclear how to implement these existing reductions space efficiently. There are two ideas involved in circumventing this issue and implementing known reductions to these problems space efficiently.
\begin{enumerate}
\item When solving statistical problems, we can use parts of the input itself as randomness. This idea was pioneered by \cite{goldreich2002derandomization} in the context of derandomizing `typically correct' algorithms. Our observation is that this is useful even if we do not care about derandomization, and instead care about repeatedly accessing randomness that can not be stored in working memory.
\item Secret leakage variants of the planted clique problem with appropriate secret leakage can be more useful than the standard planted clique problem when we want to use parts of the input as randomness. The idea that secret leakage is helpful in the context of polynomial time reducibility to statistical problems was introduced by \cite{pmlr-v125-brennan20a}. They demonstrated the polynomial time hardness of a wide variety of statistical problems by assuming polynomial time hardness of secret leakage variants of the planted clique problem. Our observation is that leakage is useful even for the seemingly unrelated task of finding randomness in the input.
\end{enumerate}
\end{abstract}
%where
%\end{center}

%%%%%%%%%%%%%%%%%%%%%%%%%%%%%%%%%%%%%%%%%%%%%%%%%%%%%%%%%%%%%%%%%%%%%%%%%
\setcounter{tocdepth}{2}
%\tableofcontents

\section{Introduction}
\label{sec:Introduction}

Over the past century, the field of statistics has developed a rich theory to explain the ease or difficulty of various inference problems. In parallel, computational complexity theorists have also developed a different rich theory to explain the ease or difficulty of various computational problems. It has long been observed \cite{valiant1984theory,decatur2000computational,servedio1999computational} that these notions governing statistical and computational tractability of a problem do not always coincide. In recent decades this misalignment between the two notions of tractability (called a statistical-computational gap) has emerged as a widespread conjectural phenomenon in high-dimensional statistics and learning theory. Even well-studied problems where no such gap exists might display such a gap with minor tweaks. For example, this happens when the statistician is simply given additional information that they are unable to efficiently exploit \cite{berthet2013complexity,berthet2013optimal,wang2016statistical}, or is required to handle small amounts of data contamination \cite[Robust sparse mean estimation]{li2017robust}. Hence, there has been much interest and success in developing tools to understand and predict the occurrence of such gaps. The introduction of \cite{pmlr-v125-brennan20a} has an excellent survey on the current landscape of statistical-computational phenomena.

Most research on the computational complexity of statistical inference has used the existence of polynomial time algorithms as a benchmark for computational tractability. This is an excellent model for computational efficiency whose use in the statistical context has been validated by it enabling the discovery and study of statistical-computational gaps. That being said, traditional complexity theory has identified a host of resources such as memory, communication, and randomness that capture the effect of some fundamental resource  other than time. Studying the complexity theories of these resources is not only interesting in their own right, but has produced insights in seemingly unrelated areas. For example, space complexity (one way of formalizing limited memory) turns out to have fundamental connections to the complexity of 2-player games \cite[The essence of PSPACE]{arora2009computational} as well as that of interactive proofs \cite{shamir1992ip}. Communication complexity turns out to be very useful in studying streaming complexity (another way of formalizing limited memory).

Even in the context of statistical problems, studying alternate notions of resource has proved fruitful. Evidence for the computational hardness of several problems has been built up by showing that various restricted algorithmic classes can not solve these problems (see Section~\ref{sec:related-work} for details). While these classes all correspond to some non-standard notion of resource, to the best of our knowledge, they are usually viewed as proxies for the class of ``polynomial time algorithms''. That is, these results are usually used to draw heuristic conclusions about the entire class of polynomial time algorithms. One way to interpret this state of affairs is that in the context of statistical problems, studying alternate notions of complexity is not merely interesting; it is also a promising avenue towards results that are actually within reach of our current mathematical technology.

The preceding discussion recommends, in the statistical context, an explicit study of resource-driven notions of efficient computation other than ``polynomial time''\footnote{In this spirit, \cite{moitra2020parallels} studied connections between small depth circuit classes and phase transitions in the broadcast tree model and \cite{rashtchian2021average} studied the communication complexity of statistical problems.}. A natural question that arises is-
\begin{quote}\centering
	\textit{Are the conjectured statistical-computational phenomena we know robust to changes in our notion of computational efficiency?}
\end{quote}
While this is a very broad question, in this work we will be concerned with one specific alternate notion of efficiency (space efficiency) and the statistical-computational phenomena centered around the planted clique problem\footnote{This problem concerns an $n$-vertex graph with a statistical signal whose strength is denoted by an integer $1 \leq k \leq n$.}. Space is one of the most well-studied computational resources other than time while the planted clique problem has emerged as a testbed for the study of statistical-computational gaps. When equating computational efficiency with the existence of polynomial time algorithms, this problem displays the following interesting statistical-computational phenomena.
\begin{enumerate}[leftmargin=0pt]
\item There is a conjectured threshold phenomenon at signal strength $k = \Theta(\sqrt{n})$, irrespective of whether the problem is formalized using its detection or recovery variant.
\item The computational hardness of (some variant of) the planted clique problem below the $\Theta(\sqrt{n})$ signal strength threshold can be used to infer the computational hardness of a host of other statistical problems.
\end{enumerate}
Recently \cite{mardia2020space} provided an $O(\log^{*}n \cdot \log n)$ space algorithm for planted clique recovery when $k = \Theta(\sqrt{n})$ (and it is easy to see that a logspace detection algorithm exists). This is some evidence that the threshold phenomena at $k = \Theta(\sqrt{n})$ also occurs at the same threshold when we use space efficiency as our notion of computational efficiency.
\begin{quote}
\textit{In this work, we investigate whether the ability to transfer computational hardness from (some variant of) the planted clique problem to other statistical problems is also robust to changing our notion of computational efficiency to space efficiency.}
\end{quote}

The conjectured non-existence of polynomial time algorithms\footnote{We do not discuss the exact model of polynomial time algorithms used by previous works. However, a crucial aspect is that the model allows the use of randomness.} for the planted clique problem and its variants below the $k=\Theta(\sqrt{n})$ signal strength threshold has been used to show non-existence of polynomial time algorithms for various other statistical problems. Our goal is to replace the phrase `polynomial time algorithms' in the sentence above with `randomized logspace algorithms', which is the correct notion of space efficent algorithms. Section~\ref{subsec:obstacle} provides details about this class of algorithms, but for now we just need to recall that any randomized logspace algorithm is also a polynomial time algorithm. As a result, this work \textit{will not} provide any \textit{new} hardness beliefs. That is, we already believe these other statistical problems do not have polynomial time algorithms, so we also already believe they do not have randomized logspace algorithms. Our motivation is, instead, the following.
\begin{enumerate}
\item As discussed, our main motivation is to examine the robustness of statistical-computational phenomena (particularly, the ability to transfer hardness from one statistical problem to another) to changing our notion of computational efficiency from time bounded to space bounded.
\item The most natural route towards proving unconditional computational hardness results against the class of polynomial time algorithms is by showing hardness against larger and larger restricted classes of algorithms. It is plausible that technology showing unconditional impossibility results for the planted clique problem and its variants for randomized logspace algorithms will appear long before corresponding technology for the entire class of polynomial time algorithms. In such a scenario, our results will imply unconditional randomized logspace hardness for various other statistical problems while existing randomized polynomial time reductions would not.
\item In worst-case complexity, many complexity classes are known to have natural complete problems that are complete under reductions far more efficient than ``polynomial time'' \cite{immerman2012descriptive,allender1997first}. For example, NP has many natural complete problems under ``first-order projections'' (fops), a type of reduction even more restrictive than logspace computation \cite{dahlhaus1983reduction,medina1994syntactic}. Similarly L, NL, and P also have problems complete under fops reductions \cite[Chapter 3]{immerman2012descriptive}, and PSPACE has problems complete under logspace reductions \cite{arratia2003note}.

The planted clique problem, with its variants, is emerging as a useful central problem for the complexity of statistical problems, analogous to central problems in worst-case complexity. Showing that it shares other characteristics of important problems in worst-case complexity, such as reducibility using reductions much more efficient than polynomial time, would strengthen this correspondence.

\item Another reason to implement known reductions space efficiently is that even if the conjectured polynomial time hardness of the planted clique problem turns out to be false, it is possible that the problem is still logspace hard. Some of the strongest evidence for the polynomial time hardness of the planted clique problem comes from the failure of low-degree polynomials and SoS convex relaxations for this problem. However, \cite{zadik2021lattice} (building on \cite{pmlr-v65-andoni17a,song2021cryptographic}) showed that the LLL algorithm implies polynomial time algorithms for some problems even though low-degree polynomials and SoS convex relaxations fail. Since it is unlikely that the LLL algorithm can be implemented space efficiently, these problems may still be logspace hard.
\end{enumerate}

\section{Our Results and Techniques}
\label{sec:results-and-techniques}
Our results suggest that the ability to transfer computational hardness from (some variant of) the planted clique problem to other statistical problems is indeed robust to changing our notion of computational efficiency to space efficiency. Informally, these results are as follows.

\noindent\fbox{%
	\parbox{\textwidth}{%
		The non-existence of randomized logspace algorithms solving (some variant of) the planted clique problem implies the non-existence of randomized logspace algorithms solving Sparse PCA hypothesis testing \cite{berthet2013complexity}, submatrix detection \cite{ma2015computational}, and testing almost $k$-wise independence \cite{alon2007testing}.
	}%
}

We spend the rest of Section~\ref{sec:results-and-techniques} explaining the statement above. We do not formally describe these other statistical problems\footnote{The formulation of Sparse PCA used by \cite{berthet2013complexity} (and by us) is just one of many formulations considered in the literature. There has been a significant amount of work \cite{berthet2013optimal,wang2016statistical,gao2017sparse,pmlr-v75-brennan18a,brennan2019optimal} on showing hardness of more canonical formulations of Sparse PCA. In this work, however, we restrict ourselves to the \cite{berthet2013complexity} reduction and refer the reader to \cite{brennan2019optimal} for a comprehensive discussion of issues surrounding these various formulations.} (or the known reductions in \cite{berthet2013complexity,ma2015computational,alon2007testing}) in this technical overview, since their details are not needed for our current discussion. The only thing the reader needs to know is that these known reductions all seem to require $\poly(n)$ bits of randomness to implement.

The planted clique detection problem ${\sf{PC_D}}(n,k)$ (Definition~\ref{def:pc-detection}) is a hypothesis testing problem where the input is either an Erd\H{o}s-R\'enyi graph $\G(n, 1/2)$ (a uniformly random $n$-vertex graph) or a planted clique graph $\G(n,1/2,k)$ ($\G(n, 1/2)$ with an additional clique on a uniformly random $k$-vertex subset). We focus on $k=O(n^{\frac{1}{2}-\delta})$ for any constant $\delta >0$, where efficient algorithms for the planted clique problem are believed not to exist.

Some of our results will use the hardness of variants of the planted clique problem that we introduce later. When designing these variants, we need to ensure that they still remain computationally hard in the appropriate parameter regime. To do this, we follow principles used in \cite{pmlr-v125-brennan20a}. We want the distribution of possible planted clique locations (which is no longer uniform) to have `high enough entropy' (so any potential algorithm still does not know much about the clique location), and be `symmetric enough'. The latter requirement is because evidence for hardness of the variants used in \cite{pmlr-v125-brennan20a} comes from the failure of low-degree polynomials. \cite{holmgren2021counterexamples} showed that inferring hardness from the failure of low-degree polynomials can be misleading if the problem instances are not `symmetric enough'. We discuss evidence for the hardness of the variants we use in Section~\ref{subsec:security-of-cl}.

Let $\mathcal{P}$ denote any hypothesis testing problem (Definition~\ref{defn:hyp-test}) such that there exist distributions $P_0$ and $P_1$ (not necessarily in the null or alternate hypotheses of $\mathcal{P}$) on $\poly(n)$ sized bit strings with the following properties.
\begin{enumerate}
\item There is a polynomial time algorithm using $\poly(n)$ independent uniform random bits that maps a sample drawn according to $\G(n, 1/2)$ (respectively $\G(n,1/2,k)$) to a sample drawn according to $P_0$ (respectively $P_1$).
\item $P_0$ (respectively $P_1$) is close in total variation distance to some distribution in the null (respectively alternate) hypothesis of $\mathcal{P}$.
\end{enumerate}
\cite{berthet2013complexity}, \cite{ma2015computational}, and \cite{alon2007testing} respectively showed that the hypothesis testing problems Sparse PCA, submatrix detection, and testing almost $k$-wise independence all satisfy such conditions. The non-existence of polynomial time algorithms solving the planted clique problem then immediately implies the non-existence of polynomial time algorithms solving such a hypothesis testing problem $\mathcal{P}$.

\subsection{An obstacle to space efficient randomized reductions: access to randomness}
\label{subsec:obstacle}
The class of logarithmic space bounded computation is widely agreed to be the benchmark of ‘space efficient’ computation. A deterministic or randomized logspace algorithm\footnote{Our algorithms are defined using any fixed reasonable Random-Access Turing Machine model.} (see \cite{arora2009computational} or Definition~\ref{defn:bdd-space-alg}) has read-only multiple access to an $\poly(n)$-bit input, along with read-and-write multiple access to a working tape consisting of $O(\log n)$ bits. Its output must be written on a write-only tape and the algorithm must terminate within $\poly(n)$ time\footnote{The section on $BP_HSPACE$ in \cite{saks1996randomization} and the section on Finite Automata and Counting in \cite{wigderson2019mathematics} discuss why a reasonable model for randomized space bounded computation should also have an associated time bound.}. Further, if the algorithm is randomized, it is allowed access to a \textbf{read-once} bit string of independent uniform random bits of size $\poly(n)$.

One aspect of randomized logspace algorithms worth keeping in mind (because it is the central nuisance we deal with in this work) is that the algorithm can access its random bits only in a read-once fashion. That is, if the algorithm wants to look at a random bit it has already used, it must have stored that bit for the future using its $O(\log n)$ bits of working space.

It will also be convenient to talk about ``multiple access randomized logspace algorithms'' (Definition~\ref{defn:multacc-bdd-space-alg}). These are randomized logspace algorithms given the extra ability to access not only their input and work tapes, but also polynomially many auxiliary random bits as many times as desired without having to explicitly store them in memory.

Of course, while this is a useful definition, this class of algorithms \textit{does not} correspond to efficient space bounded computation \cite{nisan1993read}, since it lets us store polynomially many random bits `for free'. Another way to see which formulation should correspond to efficient space bounded randomized computation is to see which definition allows results analogous to those known for efficient time bounded randomized computation. For example, allowing only read-once access to randomness lets us easily and unconditionally obtain results like RL $\subseteq$ NL, analogous to RP $\subseteq$ NP. See \cite{nisan1993read} for further discussion about this issue.

The reductions in \cite{berthet2013complexity,ma2015computational,alon2007testing} as well as the transference of computational hardness from the planted clique problem are still valid if we replace `polynomial time algorithms' with `multiple access randomized logspace algorithms' in the discussion above. This follows by examining the reductions implemented by these works and verifying that all the steps can be performed space efficiently when given multiple access randomness\footnote{The only additional primitive we need is the ability to approximately implement a uniformly random permutation using a multiple access randomized logspace algorithm. This is not hard, and we provide an algorithm in Lemma~\ref{lem:permutation}.}. For completeness, we provide the straightforward formal proofs of these facts in Section~\ref{sec:reductions-to-other-probs}. 

Unfortunately, such a reduction \textit{does not} let us transfer randomized logspace hardness from the planted clique problem (or its variants) to other statistical problems. The existence of a randomized logspace algorithm that solves the various other statistical problems would only imply the existence of a multiple access randomized logspace algorithm that solves some variant of the planted clique problem. This would not contradict the conjectured non-existence of a randomized logspace algorithm solving some variant of the planted clique problem\footnote{If we assume that no multiple access randomized logspace algorithm (or the even stronger non-uniform model of computation concerning polynomial sized branching programs) can solve the planted clique problem or its variants, we would, of course, be done. This strong assumption (just like the strong assumption about non-existence of polynomial time algorithms) would imply the randomized logspace hardness of our other statistical problems of interest. However, our aim in this work, as motivated in Section~\ref{sec:Introduction}, is to show that such a result also follows from the much weaker assumption concerning only efficient space bounded computation. We want to show that randomized logspace hardness can indeed be transferred from the planted clique problem or its variants to other statistical problems.}.

The reason we need to resort to multiple access randomized logspace reductions when trying to implement a randomized reduction space efficiently is as follows. A sample drawn from the distribution we are reducing to ($P_0$ or $P_1$) has size $\poly(n)$, so unlike polynomial time algorithms, a logspace algorithm of any variety can not compute and simultaneously store every bit of this sample in its memory. What is feasible, however, is that any individual bit of such a sample can be computed space efficiently given the planted clique input and randomness. A space efficient algorithm $\mathcal{A}$ that solves the hypothesis testing problem $\mathcal{P}$ can be used to space efficiently solve the planted clique problem by recomputing bits of the sample we have reduced to whenever the algorithm $\mathcal{A}$ needs them. As a logspace algorithm, $\mathcal{A}$ can inspect any bit of its input sample many times. The reduction from the planted clique problem thus needs to be able to recompute its output bits multiple times. This is not possible if the randomness it uses is only accessible once. This constitutes a barrier to implementing randomized reductions space efficiently. In fact, such a barrier has been encountered even in worst-case complexity, necessitating the use of multiple access randomized logspace reductions \cite{cai1996existence,van1998deterministic}. \cite{van1998deterministic} even states it as an open problem to implement such reductions in the more natural read-once model of randomized logspace algorithms.

\subsection{Can we use pseudorandom generators that fool space bounded computation?}

Since we only want to run space efficient reductions, a natural idea to obtain multiple access randomness for our purposes is to use a pseudorandom generator that fools space bounded computation.

\begin{enumerate}[leftmargin=0pt]
\item \textbf{Unconditional PRGs:} One might hope that the unconditional pseudorandom generators from \cite{babai1992multiparty,nisan1992pseudorandom,impagliazzo1994pseudorandomness}, combined with an idea from  Section~\ref{subsec:main-basic-pc-harvesting} to obtain the seed, would let us implement all our desired reductions.

This natural idea fails for the simple reason that in our applications, we must use these pseudorandom bits to simulate the randomness of a multiple access randomized logspace algorithm rather than a randomized logspace algorithm. All the known unconditional pseudorandom generators which can be implemented in logspace and which fool space bounded computation \cite{babai1992multiparty,nisan1992pseudorandom,impagliazzo1994pseudorandomness} need the algorithms they fool to only access their random bits in a read-once fashion. They do not fool multiple access randomized logspace algorithms and are not useful for us.

\item \textbf{Conditional PRGs which imply that every multiple access randomized logspace algorithm can be simulated by a deterministic logspace algorithm:} \cite{klivans2002graph} provides a pseudorandom generator that can be implemented in logspace and fools polynomial sized circuit families, and hence multiple access randomized logspace algorithms. In particular, this PRG implies that every multiple access randomized logspace algorithm can be simulated using a deterministic logspace algorithm. This would be enough for us to implement the reductions we consider in this work.

However, the \cite{klivans2002graph} pseudorandom generator is based on unproven assumptions. It would let us \textit{conditionally} transfer randomized logspace hardness from the planted clique problem to Sparse PCA \cite{berthet2013complexity}, submatrix detection \cite{ma2015computational}, and testing almost $k$-wise independence \cite{alon2007testing}. While this gives us some reason to believe that our desired reductions can be carried out space efficiently, this \textit{conditionality} would be less than satisfactory for the following reasons.\begin{enumerate}		
		\item We can already show the impossibility of space efficient algorithms solving the statistical problems from \cite{berthet2013complexity,ma2015computational,alon2007testing} by assuming the impossibility of time efficient algorithms solving some other problem (like the planted clique problem or its variants). Our goal is to derive such conclusions while only assuming randomized logspace hardness of some initial problem. Using the \cite{klivans2002graph} pseudorandom generator does not achieve this goal since it assumes impossibility results against `circuit size' (and hence time).
		\item One benefit of the efficient unconditional reductions we will implement arises when technology to prove impossibility results has progressed to showing unconditional impossibility results for randomized logspace algorithms solving the planted clique problem or its variants. Our results will then imply unconditional randomized logspace hardness of the statistical problems in \cite{berthet2013complexity,ma2015computational,alon2007testing}. If we used the conditional pseudorandom generator of \cite{klivans2002graph}, this would not be true.\end{enumerate}
\end{enumerate}

\noindent\fbox{%
	\parbox{\textwidth}{%
In this paper we will see that very easy to implement ideas can let us \textit{unconditionally} transfer randomized logspace hardness to various \textit{statistical problems} without making any computational assumptions beyond the \textbf{randomized logspace hardness} of the planted clique problem and its variants.
	}%
}

\subsection{Harvesting multiple access randomness from the input: hardness of Sparse PCA for a restricted range of parameters}
\label{subsec:main-basic-pc-harvesting}
The first of the two themes in this work is the following.
\begin{quote}
\textit{There is an algorithmic technique available in the context of statistical problems (but not for worst-case problems) which might enable space efficient randomized reductions using randomized logspace algorithms that do not have their own supply of auxiliary multiple access randomness. We can harvest multiple access randomness from the input itself.}
\end{quote}

\cite{goldreich2002derandomization} pioneered the idea that randomized algorithms which are ``typically correct'' can be derandomized by harvesting randomness from the input itself. This idea has often been used for various kinds of derandomization tasks (see the surveys \cite{shaltiel2010typically,hemaspaandra2012sigact} or the related work section in \cite{hoza2017typically}). Our observation is that harvesting randomness from average case / statistical inputs has a use beyond derandomization. Namely, to simulate multiple access to randomness in situations where we otherwise only have read-once randomness.

We show that when solving statistical problems with nice structure such as the planted clique problem and its variants, one can simulate a multiple access randomized logspace algorithm using a randomized logspace algorithm. The non-existence of the latter then implies the non-existence of the former, which, as discussed earlier, implies the non-existence of randomized logspace algorithms for other statistical problems such as Sparse PCA, submatrix detection, and testing almost $k$-wise independence.
\begin{quote}
(Lemma~\ref{lem:basic-randomness-harvesting}, Informal) If no randomized logspace algorithm can solve the planted clique problem ${\sf{PC_D}}(n,k)$, no multiple access randomized logspace algorithm using at most ${n \choose 2} - {n-m \choose 2} = o(n^2 / k)$ multiple access random bits can solve ${\sf{PC_D}}(n-m,k)$ for `nice enough'\footnote{This niceness involves a very mild technical assumption.} values of $m=o(n/k)$.
\end{quote}

We sketch the the proof of Lemma~\ref{lem:basic-randomness-harvesting}. Given an instance of the planted clique detection problem ${\sf{PC_D}}(n,k)$, we can harvest $\Theta(n \cdot m)$ independent multiple access random bits to use as long as $m = o(n/k)$. To do this, use the edge indicators for the ${n \choose 2}-{n-m \choose 2} = \Theta(n \cdot m)$ possible edges involving at least one of the last $m$ vertices as multiple access randomness. Then view the subgraph induced on the first $n-m$ vertices as an instance of ${\sf{PC_D}}(n-m,k)$. With high probability there will be no planted clique vertex in the last $m= o(n/k)$ vertices, so both our `multiple access randomness' and our `planted clique instance' on the first $n-m$ vertices will behave as desired. The claims of Lemma~\ref{lem:basic-randomness-harvesting} then follow from a simple contradiction argument. See Figure~\ref{fig:basic-pc} for a  pictorial representation.

\tikzmath{\coord = 3;\random = 0.07 * \coord;\reduc = 0.37*\coord;\reducrand = 0.39*\coord;}
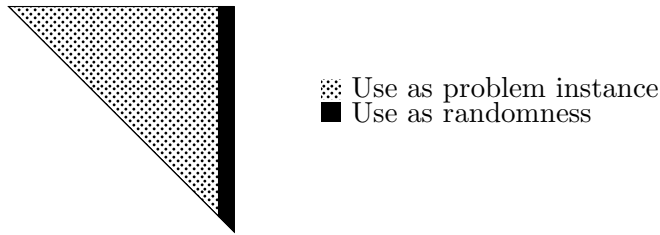
\begin{figure}[h]
	\begin{center}
	\begin{tikzpicture}
	\draw  (\coord,\coord) to (\coord,0) to (0,\coord) to (\coord,\coord);
	\draw [fill] (\coord,\coord) to (\coord-\random,\coord) to (\coord-\random,\random) to (\coord,0) to (\coord,\coord);
	\draw [pattern = crosshatch dots] (\coord-\random,\coord) to (0,\coord) to (\coord-\random,\random) to (\coord-\random,\coord) ;
	\end{tikzpicture}
	\begin{tikzpicture}
	\matrix {
		\node at (0,0) (D) {};
		\node at (0.3*\coord,0.55*\coord) [pattern = crosshatch dots,label=right: Use as problem instance] {};
		\node at (0.3*\coord,0.45*\coord) [fill,label=right: Use as randomness] {};\\
	};
	\end{tikzpicture}
	\caption{Adjacency matrix of the input graph above the diagonal as used in Lemma~\ref{lem:basic-randomness-harvesting}.}
	\label{fig:basic-pc}
\end{center}
\end{figure}

One drawback of the scheme from Lemma~\ref{lem:basic-randomness-harvesting} is that it can only rule out multiple access randomized logspace algorithms solving ${\sf{PC_D}}(n-m,k)$ if they use $o(n^2/k)$ independent multiple access random bits. In principle, however, a multiple access randomized logspace algorithm solving ${\sf{PC_D}}(n-m,k)$ can use any arbitrarily large $\poly(n)$ number of independent multiple access random bits.

The Sparse PCA problem \cite{berthet2013complexity} has an input consisting of $\bar{n}$ vectors in $\bar{d}$ dimensions. The reduction in Lemma~\ref{lem: spca-br13-redn} shows that when $\bar{d}$ is not much larger than $\bar{n}$, a randomized logspace algorithm solving Sparse PCA implies a multiple access randomized logspace algorithm solving ${\sf{PC_D}}(n-m,k)$ whose multiple access randomness usage contradicts the logspace planted clique conjecture by Lemma~\ref{lem:basic-randomness-harvesting}. For this reasonable but restricted parameter range, this lets us deduce non-existence of randomized logspace algorithms solving Sparse PCA from the conjectured non-existence of randomized logspace algorithms for the planted clique problem (Conjecture~\ref{conj:logspace-pc}). See Theorem~\ref{thm:spca-restricted} in Section~\ref{subsec:spca} for a formal statement and details about the parameters involved.

\textbf{The remaining challenge:} Reductions to Sparse PCA for more general values of $\bar{d}$, to submatrix detection, and to testing almost $k$-wise independence can not be carried out with the amount of multiple access randomness provided by Lemma~\ref{lem:basic-randomness-harvesting}. For example, when reducing from ${\sf PC_D}(n-m,k)$ to submatrix detection, we need $\widetilde{\Theta}(n^2)$ multiple access random bits \cite{ma2015computational}. We need some further ideas to make progress.

\subsection{Self-reducibility and $k$-partite planted clique: hardness of submatrix detection for a restricted range of parameters}

In Section~\ref{subsec:main-basic-pc-harvesting} we saw that within a planted clique instance we can find two things; another slightly smaller planted clique instance as well as independent random bits that are disjoint from this smaller instance. These multiple access random bits can be used to simulate a multiple access randomized logspace algorithm that is run on the smaller planted clique instance. Unfortunately, the number of multiple access random bits we get from this scheme is not as large as we might want.

Our next observation is that to reduce to other statistical problems, we do not care solely about the number of multiple access random bits we can harvest from the original planted clique instance. Instead, we care about this number \textit{in comparison to} the size of the smaller planted clique instance, since this smaller instance is what we reduce from. We can thus make progress not just by harvesting more multiple access randomness from the input, but also by locating a smaller instance of the planted clique problem in the input!

The planted clique problem almost has a useful `self-reducibility' property. Given an instance of ${\sf PC_D}(n,k)$, the subgraph induced on the first $n_s$ vertices is morally like an instance of ${\sf PC_D}(n_s,k \cdot (n_s/n))$. We can use the same randomness harvesting technique from Section~\ref{subsec:main-basic-pc-harvesting} to get multiple access to slightly less than $n^2/k$ random bits. See Figure~\ref{fig:basic-kpc} for a pictorial representation.

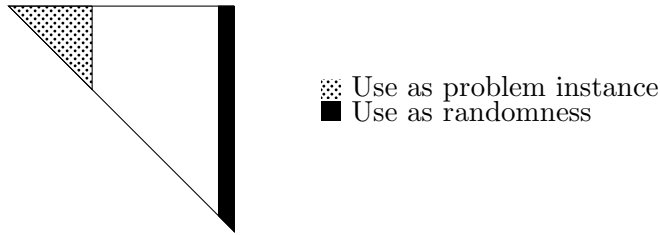
\begin{figure}[h]
	\begin{center}
		\begin{tikzpicture}
\draw  (\coord,\coord) to (\coord,0) to (0,\coord) to (\coord,\coord);
\draw [fill] (\coord,\coord) to (\coord-\random,\coord) to (\coord-\random,\random) to (\coord,0) to (\coord,\coord);
\draw [pattern = crosshatch dots] (\reduc,\coord) to (0,\coord) to (\reduc,\coord-\reduc) to (\reduc,\coord) ;
\end{tikzpicture}
\begin{tikzpicture}
		\matrix {
			\node at (0,0) (D) {};
			\node at (0.3*\coord,0.55*\coord) [pattern = crosshatch dots,label=right: Use as problem instance] {};
			\node at (0.3*\coord,0.45*\coord) [fill,label=right: Use as randomness] {};\\
		};
		\end{tikzpicture}
		\caption{Adjacency matrix of the input graph above the diagonal as used in Lemma~\ref{lem:basic-kpc-randomness-harvesting}.}
		\label{fig:basic-kpc}
	\end{center}
\end{figure}

Unfortunately, the subgraph induced on the first $n_s$ vertices is not actually an instance of ${\sf PC_D}(n_s,k \cdot (n_s/n))$. This is because the size of the planted clique in this subgraph is a binomial random variable with mean $k \cdot (n_s/n)$. With probability tending to $1$, this size will in fact \textit{not} be $k \cdot (n_s/n)$. However, several close variants of the planted clique problem do have an exact self-reducibility property. One such variant is where every vertex of the graph is included in the planted clique independently with probability $k/n$. It is easy to see that the subgraph induced on any subset of vertices of such a graph will also exactly be distributed like such a graph ensemble with a different signal strength\footnote{The self-reducibility property of this variant was crucially used in \cite{mardia2020finding} to study the sublinear time complexity of the planted clique problem.}.

Another variant called the $k$-partite planted clique problem ${\sf{kPC_D}}(\ell \cdot k, k)$ (Definition~\ref{def:kpc-detection}) studied in \cite{pmlr-v125-brennan20a} also has such an exact self-reducibility property. In this variant, the number of vertices in the graph is an integer multiple of the number of the planted clique vertices. This is why we prefer to parametrize the problem as ${\sf{kPC_D}}(\ell \cdot k, k)$, denoting a graph with $\ell \cdot k$ vertices and a planted clique of size $k$. The difference between this formulation and the vanilla planted clique problem ${\sf{PC_D}}(\ell \cdot k, k)$ is that in the planted graph, every set of $\ell$ vertices of the form $\{(i-1)\cdot \ell + 1,...,i \cdot \ell\}$ (for $i \in \{1,2,...,k\}$) has exactly one planted clique vertex, which is distributed uniformly over this set. Given an instance of ${\sf{kPC_D}}(\ell \cdot k, k)$, the subgraph induced on the first $\ell \cdot k_s$ vertices is exactly an instance of ${\sf{kPC_D}}(\ell \cdot k_s, k_s)$.

This problem ${\sf{kPC_D}}(\ell \cdot k, k)$ is similar enough to the planted clique problem that it is still believed to be unsolvable by polynomial time (and hence randomized logspace) algorithms when $k = O(\ell^{1-\delta})$ for some constant $\delta > 0$. \cite{pmlr-v125-brennan20a} provide evidence for this belief through impossibility results for low degree algorithms. This means the scheme depicted in Figure~\ref{fig:basic-kpc} can be implemented for the $k$-partite planted clique problem.
\begin{quote}
(Lemma~\ref{lem:basic-kpc-randomness-harvesting}, Informal) If no randomized logspace algorithm can solve the $k$-partite planted clique problem ${\sf{kPC_D}}(\ell \cdot k,k)$, no multiple access randomized logspace algorithm using at most ${\ell \cdot k \choose 2} - {\ell \cdot k-m \choose 2} = o(\ell^2 \cdot k)$ multiple access random bits can solve ${\sf{kPC_D}}(\ell \cdot k_s,k_s)$ for `nice enough' values of $m=o(\ell)$ and $k_s \leq k-1$.
\end{quote}

The \cite{ma2015computational} reduction from a $k$-partite planted clique instance on $\ell \cdot k_s$ vertices to submatrix detection requires $\widetilde{\Theta}((\ell \cdot k_s)^2)$ multiple access random bits (Lemma~\ref{lem:submat-redn}). Hence, the existence of a randomized logspace algorithm solving submatrix detection yields a multiple access randomized logspace solving ${\sf{kPC_D}}(\ell \cdot k_s,k_s)$ with $\widetilde{\Theta}((\ell \cdot k_s)^2)$ multiple access random bits. This only contradicts Lemma~\ref{lem:basic-kpc-randomness-harvesting} if $k_s$ is small enough. Even if we take $k$ in Lemma~\ref{lem:basic-kpc-randomness-harvesting} as only slightly smaller than $\ell$, we need $k_s$ to be smaller than $o(\sqrt{\ell})$ for a contradiction to occur. As a result, we can only use this idea to infer randomized logspace hardness of the submatrix detection problem for a restricted set of parameters. Those which can be reduced to from $k$-partite planted clique instances with signal strength $k_s$ much much smaller than the threshold at which we believe the problem becomes infeasible. 

Luckily, this still allows us to show randomized logspace hardness for a non-trivial parameter range of submatrix detection problems from the randomized logspace hardness of the $k$-partite planted clique ${\sf{kPC_D}}(\ell \cdot k,k)$ problem when $k = O(\ell^{1-\delta})$ (Conjecture~\ref{conj:logspace-kpc}). See Theorem~\ref{thm:submat-restricted} for a formal statement and discussion about the parameter ranges involved.

\textbf{The remaining challenge:} The restriction on $k_s$ described above means we can not show randomized logspace hardness of the submatrix detection problem for all settings of parameters in which we believe the problem to be hard. To do that, we need to reduce from a $k$-partite planted clique instance ${\sf kPC_D}(\ell \cdot k_s,k_s)$ with $k_s$ only slightly smaller than $\ell$, rather than smaller than $o(\sqrt{\ell})$. 

\subsection{Trying to get more multiple access randomness from the $k$-partite planted clique problem}

A natural attempt to circumvent the restrictions of Lemma~\ref{lem:basic-kpc-randomness-harvesting} is as follows.

Consider a $k$-partite planted clique problem instance ${\sf kPC_D}(\ell \cdot 2k,2k)$ with $k = O(\ell^{1-\delta})$. We do not believe any randomized logspace algorithm can solve this problem (Conjecture~\ref{conj:logspace-kpc}). For some $k_s \leq k$, we could again view the subgraph on the first $\ell \cdot k_s$ vertices of the input as ${\sf kPC_D}(\ell \cdot k_s,k_s)$. Unlike the scheme in Lemma~\ref{lem:basic-kpc-randomness-harvesting} and Figure~\ref{fig:basic-kpc}, we might hope that we can use the edge indicators of the subgraph induced on the last $\ell \cdot k$ vertices as a source of multiple access randomness. The intuition for this is that even though these edge indicators are not independent random bits, efficient algorithms should not be able to distinguish them from independent random bits. This is, after all, what $k$-partite planted clique hardness entails. The advantage now is that we can use as many as ${\ell \cdot k \choose 2}$ (rather than $o(\ell^2 \cdot k)$) multiple access random bits to reduce to other statistical problems from the smaller $k$-partite planted clique instance ${\sf kPC_D}(\ell \cdot k_s,k_s)$. This could potentially allow us to implement the reduction to submatrix detection with larger $k_s$ than in Theorem~\ref{thm:submat-restricted}. See Figure~\ref{fig:advanced-kpc} for a pictorial representation.

\begin{figure}[h]
	\begin{center}
		\begin{tikzpicture}
\draw  (\coord,\coord) to (\coord,0) to (0,\coord) to (\coord,\coord);
\draw [fill] (\coord,\reducrand) to (\coord-\reducrand,\reducrand) to (\coord,0) to (\coord,\reducrand);
\draw [pattern = crosshatch dots] (\reduc,\coord) to (0,\coord) to (\reduc,\coord-\reduc) to (\reduc,\coord) ;
\end{tikzpicture}
\begin{tikzpicture}
		\matrix {
			\node at (0,0) (D) {};
			\node at (0.3*\coord,0.55*\coord) [pattern = crosshatch dots,label=right: Use as problem instance] {};
			\node at (0.3*\coord,0.45*\coord) [fill,label=right: Use as randomness] {};\\
		};
		\end{tikzpicture}
		\caption{Adjacency matrix of the input graph above the diagonal as used in Lemma~\ref{lem:advanced-kpc-randomness-harvesting}.}
		\label{fig:advanced-kpc}
	\end{center}
\end{figure}
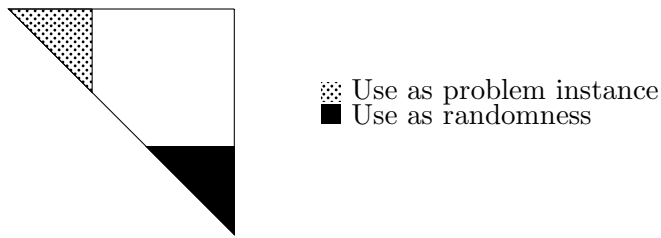

Showing that this idea works involves a slight subtlety. For our harvested randomness to behave as desired, we need to show that no randomized logspace algorithm can distinguish between the following. An Erd\H{o}s-R\'enyi graph ${\sf G}(\ell \cdot k,1/2)$ (which represents what we wish our multiple access randomness was) and a $k$-partite planted clique graph ${\sf kG}(\ell \cdot k,1/2,k)$ (Definition~\ref{def:kpc}) (which represents what we use as our multiple access randomness) \textit{even if} the algorithm has multiple access to another independent $k$-partite planted clique graph ${\sf kG}(\ell \cdot k_s,1/2,k_s)$ (which is the problem instance we reduce from). Unfortunately, the only way we know of doing this is by showing that this auxiliary graph can be simulated by $\Theta((\ell \cdot k_s)^2)$ multiple access random bits and using Lemma~\ref{lem:basic-kpc-randomness-harvesting}. However, we can only invoke Lemma~\ref{lem:basic-kpc-randomness-harvesting} if the amount of auxiliary randomness needed, $\Theta((\ell \cdot k_s)^2)$, is appropriately small. This is, after all, the drawback to Lemma~\ref{lem:basic-kpc-randomness-harvesting} we are trying to circumvent in the first place.

The upshot is that this idea does actually work, but once again with a restriction on $k_s$.
\begin{quote}
	(Lemma~\ref{lem:advanced-kpc-randomness-harvesting}, Informal) If no randomized logspace algorithm can solve the $k$-partite planted clique problem ${\sf{kPC_D}}(\ell \cdot 2k,2k)$, no multiple access randomized logspace algorithm using at most ${\ell \cdot k \choose 2}$ multiple access random bits can solve ${\sf{kPC_D}}(\ell \cdot k_s,k_s)$ for `nice enough' values of $k_s \leq \frac{k}{\sqrt{\ell}}$.
\end{quote}
We need $k=O(\ell^{1-\delta})$ for the conjectured non-existence of randomized logspace algorithms solving ${\sf{kPC_D}}(\ell \cdot 2k,2k)$. This implies a restriction on $k_s$ once again. We can not infer randomized logspace hardness of other statistical problems that we reduce to from a $k$-partite planted clique instance ${\sf{kPC_D}}(\ell \cdot k_s,k_s)$ with $k_s = \Omega(\sqrt{\ell})$. More ideas are needed to make progress on the reduction to submatrix detection for all relevant parameter ranges.

While the scheme of Lemma~\ref{lem:advanced-kpc-randomness-harvesting} does not help us in our goal of reducing to submatrix detection, it is a non-trivial improvement on Lemma~\ref{lem:basic-kpc-randomness-harvesting} in the regime where $k_s$ is small. There could be applications involving hardness of other statistical problems we have not considered where Lemma~\ref{lem:advanced-kpc-randomness-harvesting} is useful even though Lemma~\ref{lem:basic-kpc-randomness-harvesting} is not. Most known reductions between statistical problems are statistical. That is, the reduction maps the input problem to something that is statistically indistinguishable from the output problem. Interestingly, Lemma~\ref{lem:advanced-kpc-randomness-harvesting} is a reduction that itself has a computational aspect.

\subsection{Clique leakage $k$-partite planted clique: hardness of submatrix detection}

Examining the utility of $k$-partite planted clique hardness over planted clique hardness in harvesting multiple access randomness suggests the second theme in this work.
\begin{quote}
\textit{Randomized logspace hardness of more structured variants of the planted clique problem (i.e. where the algorithm solving the problem has more information) allows us to harvest larger amounts of multiple access randomness from the input.}
\end{quote}
\cite{pmlr-v125-brennan20a} pioneered the idea that more structured ``secret leakage''\footnote{\cite{pmlr-v125-brennan20a} introduced the terminology secret leakage because the more structured variants can all be seen as instances of the planted clique problem with additional side information about the location of the planted clique vertices.} variants of the planted clique problem seem to allow polynomial time reducibility to a wider array of statistical problems than from the vanilla planted clique problem. This is why they studied the $k$-partite planted clique problem and its hardness. Our observation is that even the seemingly unrelated task of harvesting larger amounts of multiple access randomness from the input is made easier by assuming hardness of the planted clique problem despite secret leakage.

Our goal, then, is to discover structure or secret leakage that balances two competing requirements. First, the planted clique problem should still be computationally hard despite the secret leakage. Second, the secret leakage should be informative enough to help in randomness harvesting.

We propose the clique leakage $k$-partite planted clique problem ${\sf{clkPC_D}}(\ell \cdot k, k)$ (Definition~\ref{def:clkpc-detection}). As its name suggests, this is the $k$-partite planted clique problem with the additional constraint that the first vertex is always in the planted clique. The algorithm knows that if the input is a planted instance rather than a null instance, the first vertex is in the planted clique. We discuss evidence for the randomized logspace hardness of this problem in Section~\ref{subsec:security-of-cl}. For now, we focus on using this conjectured hardness to show randomized logspace hardness of the submatrix detection problem in full generality.

Consider a clique leakage $k$-partite planted clique problem ${\sf{clkPC_D}}(\ell \cdot 4k, 4k)$. Our belief is that no randomized logspace algorithm can solve this problem when $k = O(\ell^{1-\delta})$ for some constant $\delta > 0$ (Conjecture~\ref{conj:logspace-clkpc}). For some $k_s \leq k$, consider the subgraph on the first $\ell \cdot k_s$ vertices and note that it is an instance of ${\sf{clkPC_D}}(\ell \cdot k_s, k_s)$. Now consider the subgraph induced on the subset of the last $3k$ vertices which \textit{do not} have an edge to the first vertex. We interpret the edge indicator bits of this second subgraph as multiple access randomness. We can use this to simulate any multiple access randomized logspace algorithm on the smaller instance ${\sf{clkPC_D}}(\ell \cdot k_s, k_s)$ that requires at most ${\ell \cdot k \choose 2}$ multiple access random bits. Because the first vertex is guaranteed to be a planted clique vertex if the original large graph is a planted instance, the edge indicator bits we use are always independent random bits. Further, with high probability, the subgraph we use as our source of multiple access randomness will have at least $\ell \cdot k$ vertices, which means we have access to enough multiple access random bits. See Figure~\ref{fig:clkpc} for a pictorial representation of this scheme.

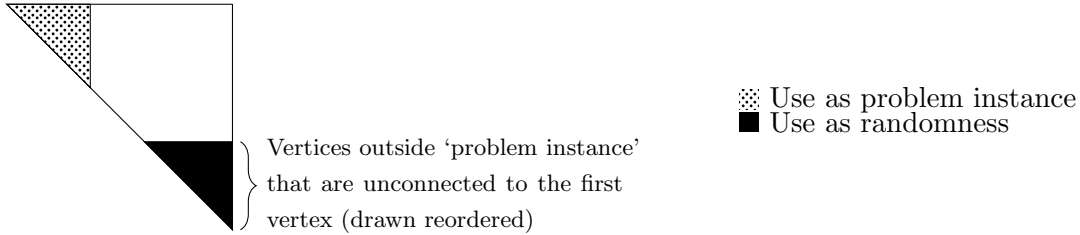
\begin{figure}[h]
	\begin{center}
		\begin{tikzpicture}
		\draw  (\coord,\coord) to (\coord,0) to (0,\coord) to (\coord,\coord);
		\draw [fill] (\coord,\reducrand) to (\coord-\reducrand,\reducrand) to (\coord,0) to (\coord,\reducrand);
		\draw [decorate,decoration={brace,amplitude=2*\coord pt, mirror},xshift=1*\coord pt,yshift=0, text width = 5cm]
		(\coord,0) -- (\coord,\reducrand) node [black,midway,xshift=+2.85cm,yshift=+0cm]
		{\footnotesize Vertices outside `problem instance' that are unconnected to the first vertex (drawn reordered)};
		\draw [pattern = crosshatch dots] (\reduc,\coord) to (0,\coord) to (\reduc,\coord-\reduc) to (\reduc,\coord) ;
		\end{tikzpicture}
		\begin{tikzpicture}
		\matrix {
			\node at (0,0) (D) {};
			\node at (0.3*\coord,0.55*\coord) [pattern = crosshatch dots,label=right: Use as problem instance] {};
			\node at (0.3*\coord,0.45*\coord) [fill,label=right: Use as randomness] {};\\
		};
		\end{tikzpicture}
		\caption{Adjacency matrix of the input graph above the diagonal as used in Lemma~\ref{lem:clkpc-randomness-harvesting}. The vertices outside the `problem instance' have been drawn reordered, so that the vertices unconnected to the first vertex of the graph are all at the end.}
		\label{fig:clkpc}
	\end{center}
\end{figure}

\begin{quote}
	(Lemma~\ref{lem:clkpc-randomness-harvesting}, Informal) If no randomized logspace algorithm can solve the clique leakage $k$-partite planted clique problem ${\sf{clkPC_D}}(\ell \cdot 4k,4k)$, no multiple access randomized logspace algorithm using at most ${\ell \cdot k \choose 2}$ multiple access random bits can solve ${\sf{clkPC_D}}(\ell \cdot k_s,k_s)$ for `nice enough' values of $k_s \leq k$.
\end{quote}
This lets us rule out multiple access randomized logspace algorithms for ${\sf{clkPC_D}}(\ell \cdot k_s,k_s)$, and hence also ${\sf{PC_D}}(\ell \cdot k_s,k_s)$ (see Lemma~\ref{lem:kpc-to-pc}) that use much larger amounts of multiple access randomness than Lemma~\ref{lem:basic-kpc-randomness-harvesting}, without the restriction on $k_s$ needed for Lemma~\ref{lem:advanced-kpc-randomness-harvesting}. This improvement is enough to let us show randomized logspace hardness of the submatrix detection problem for all scalings of parameters in which we believe this problem to be hard. See Theorem~\ref{thm:submat-full} for a formal statement and proof.

\textbf{The remaining challenge:} Lemma~\ref{lem:clkpc-randomness-harvesting} does not let us rule out multiple access randomized algorithms that use arbitrarily large polynomial amounts of multiple access randomness from solving the planted clique problem. Consider the reduction to testing almost $k$-wise independence \cite{alon2007testing} and the reduction to Sparse PCA without the restrictions of Theorem~\ref{thm:spca-restricted}. These only yield multiple access randomized logspace algorithms solving the planted clique problem that use arbitrarily large polynomial amounts of multiple access randomness. To show randomized logspace hardness of these problems, we introduce even more structure in our hardness assumption.

\subsection{Clique leakage hypergraph planted clique: hardness of testing almost $k$-wise independence and Sparse PCA}
\cite{pmlr-v125-brennan20a} studied a ($k$-partite version of a) hypergraph planted clique problem. They showed that its polynomial time hardness implied the polynomial time hardness of all the other secret leakage planted clique problems they used. Further, it allowed them to reduce to a wider array of statistical problems. To provide evidence for the computational hardness of this problem, they showed that low-degree algorithms can not solve it. We show that adding a little more structure to this problem also makes it useful for our randomness harvesting task. Taking inspiration from our previous success with clique leakage, this additional structure too will involve leaking some clique vertex locations to the algorithm.

The hypergraph planted clique problem ${\sf HPC}_{\sf D}^s(n,k)$ (Definition~\ref{def:hpc-detection}) concerns an $s$-uniform hypergraph on $n$ vertices. Here $s \geq 3$ can be any constant positive integer. In the null instance ${\sf HG}^s(n, 1/2)$, every possible $s$-uniform hyperedge is included in this hypergraph independently with probability $1/2$. The alternate instance ${\sf HG}^s(n, 1/2,k)$, is an instance of ${\sf HG}^s(n, 1/2)$ with an additional $s$-uniform clique planted on a uniformly random subset of $k$ vertices.

We propose the clique leakage hypergraph planted clique problem ${\sf clHPC}_{\sf D}^s(n,k)$ (Definition~\ref{def:clhpc-detection}), which is the same problem as above with an additional constraint. In the planted instance of the problem, the first $s-2$ (which is a constant) vertices are guaranteed to be in the planted clique. Our belief is that no randomized logspace algorithm can solve this problem if $k=O(n^{\frac{1}{2}-\delta})$ for some constant $\delta > 0$ (Conjecture~\ref{conj:logspace-clhpc}). We provide evidence for this belief in Section~\ref{subsec:security-of-cl} and focus on using this belief to achieve our desired randomness harvesting goals.

Consider an instance of ${\sf clHPC}_{\sf D}^s(n+s-1,k+s-2)$. It is easy to observe that the subgraph induced by this hypergraph on the $n$ vertices $\{s-1,s,...,n+s-2\}$ is an instance of ${\sf PC_D}(n,k)$. Further, with high probability, the last vertex in the hypergraph is not in the planted clique. We can use the hyperedge indicators for all ${n+s-2 \choose s-1}$ possible hyperedges involving this last vertex as multiple access randomness to solve the aforementioned instance of ${\sf PC_D}(n,k)$. Since $s$ can be any constant we desire, this lets us rule out multiple access randomized logspace algorithms solving the planted clique problem irrespective of the amount of multiple access randomness needed.

\begin{quote}
	(Lemma~\ref{lem:clhpc-randomness-harvesting}, Informal) If no randomized logspace algorithm can solve the clique leakage hypergraph planted clique problem ${\sf clHPC}_{\sf D}^s(n+s-1,k+s-2)$ for any constant integer $s \geq 3$, no multiple access randomized logspace algorithm can solve ${\sf{PC_D}}(n,k)$.
\end{quote}

We know that randomized logspace algorithms for testing almost $k$-wise independence (Lemma~\ref{lem:k-wise-redn}) and Sparse PCA (Lemma~\ref{lem: spca-br13-redn}) imply a multiple access randomized logspace that solves the planted clique problem. Lemma~\ref{lem:clhpc-randomness-harvesting} implies that the aforementioned statistical problems can not be solved by a randomized logspace algorithm. See Theorems~\ref{thm:k-wise} and \ref{thm:spca-full} for formal statements.

We have now shown that the randomized logspace hardness of Sparse PCA \cite{berthet2013complexity}, submatrix detection \cite{ma2015computational}, and testing almost $k$-wise independence \cite{alon2007testing} follows from the randomized logspace hardness (rather than polynomial time hardness) of the planted clique problem or its variants. Table~\ref{table:summary} contains a summary of our results.

\newcolumntype{C}{>{\Centering\arraybackslash}X} % centered "X" column

\begin{table}
	\setlength\extrarowheight{3pt} % for a bit of visual "breathing space"
	\begin{tabularx}{\textwidth}{||C|C||}
		\hline
		\textbf{Conjecture} & \textbf{Statistical problem}  \\ \hline
		Logspace planted clique conjecture (Conjecture~\ref{conj:logspace-pc}) & Theorem~\ref{thm:spca-restricted}: Sparse PCA (restricted parameter range) \\
		\hline
		Logspace $k$-partite planted clique conjecture (Conjecture~\ref{conj:logspace-kpc}) & Theorem~\ref{thm:submat-restricted}: Submatrix detection (restricted parameter range) \\
		\hline
		 Clique leakage logspace $k$-partite planted clique conjecture (Conjecture~\ref{conj:logspace-clkpc}) & Theorem~\ref{thm:submat-full}: Submatrix detection \\
		\hline
		Clique leakage logspace hypergraph planted clique conjecture (Conjecture~\ref{conj:logspace-clhpc}) & Theorem~\ref{thm:k-wise}: Testing almost $k$-wise independence and Theorem~\ref{thm:spca-full}: Sparse PCA \\
		\hline
	\end{tabularx}
	\caption{A summary of our results. The conjecture in the first column implies the randomized logspace hardness of the statistical problem shown in the second column.}
	\label{table:summary}
\end{table}

\subsection{Are the $k$-partite and hypegraph planted clique problems computationally hard even with clique leakage?}
\label{subsec:security-of-cl}

Assuming randomized logspace hardness of the clique leakage $k$-partite and clique leakage hypergraph versions of the planted clique problem is useful in showing randomized logspace hardness of other statistical problems. But is this hardness assumption true?

First, \cite{pmlr-v125-brennan20a} provided evidence (by studying low-degree algorithms) that if there is no clique leakage, the $k$-partite and hypergraph versions of the planted clique problem are indeed computationally hard\footnote{Technically, \cite{pmlr-v125-brennan20a} studied the hardness of the $k$-partite version of the hypergraph planted clique problem. However, the hardness of this problem can easily be transferred to the hypergraph planted clique problem with a small amount of multiple access randomness, similar to Lemma~\ref{lem:kpc-to-pc}.}. Clique leakage, which tells the algorithm the location of only a constant number of planted clique vertices, should not make the problem much easier.

A simple idea (also used in \cite{alon1998finding}) can show that a deterministic logspace algorithm solving the \textit{recovery}\footnote{In the recovery task, the input is a (hyper)graph that is guaranteed to have a planted clique. The logspace algorithm must decide, for every vertex, if that vertex is in the planted clique or not. Any reasonable notion of `solving' the recovery task seems to require some form of determinism in the algorithm. Using such an algorithm even to simply verify its correctness (i.e. checking if the vertices it says are in the clique indeed form a clique) requires running it multiple times on the same input and vertex combination. We would want it to output the same answer every time.} problem for clique leakage $k$-partite or clique leakage hypergraph planted clique implies a deterministic logspace algorithm for the detection version of these problems with no clique leakage. Suppose $\omega((\log n)^{\frac{1}{s-1}}) = k = O(n^{\frac{1}{2}-\delta})$ for some constants $s \geq 3$ and $\delta > 0$. If we had a deterministic algorithm to solve clique leakage hypergraph planted recovery (Definition~\ref{defn:recovery-planted-clique-variants}), we could solve hypergraph planted clique detection ${\sf HPC}_{\sf D}^s(n,k)$ (Definition~\ref{def:hpc-detection}) as follows. Iterate over all vertex sets of size $s-2$ (we can do this in logspace because $s$ is a constant), and run the deterministic clique leakage hypergraph recovery algorithm in each of these iterations by pretending this vertex set is actually in the planted clique. Check if the output of this algorithm is indeed a planted clique of size $k$. If this is true for any of the iterations of our algorithm, say that the input graph is a planted clique instance. Otherwise, say it is a null instance.

This algorithm works because of two reasons. First, a null instance (an $s$-uniform Erd\H{o}s-R\'enyi hypergraph ${\sf HG}^s(n, 1/2)$) does not have a clique of size $k$ with high probability. Second, in the planted instance, at least one of the vertex sets of size $s-2$ we run the clique leakage recovery algorithm with will indeed be a part of the planted clique, so the recovery algorithm will output a planted clique of size $k$ in this iteration. A formal proof that this reduction works can be found in Lemma~\ref{lem:clhpc-recovery-hardness}. A similar reduction also holds for the $k$-partite planted clique problem and its clique leakage variant (Lemma~\ref{lem:clkpc-recovery-hardness}).

Combined with the hardness beliefs from \cite{pmlr-v125-brennan20a}, this implies that the recovery problem for $k$-partite and hypergraph planted clique is hard despite clique leakage. For the vanilla planted clique problem ${\sf PC_D}(n,k)$, polynomial time algorithms for detection are known if an only if polynomial time algorithms for recovery are known. \cite{mardia2020space} showed that this tight connection between the complexity of the detection and recovery variants also holds for space complexity (except possibly for a minuscule range of parameters).

It is then reasonable to believe that even for the secret leakage variants of the planted clique problem, space efficient algorithms for detection exist if and only if deterministic space efficient algorithms exist for recovery. It would be an interesting surprise if this were not true for some secret leakage variant of the planted clique problem. If we believe in this tight connection between the recovery and detection problems, we have evidence for the computational hardness of the detection version of the clique leakage $k$-partite and hypergraph planted clique problems.

\subsection{Future outlook and further discussion}

There are two bottlenecks to implementing randomized reductions in logspace. First, we need the reduction to only involve `simple' computations, which do not require large memory. Second, we need multiple access to the random bits the reduction uses. In this work, we have focused on the latter issue. We demonstrated our ideas using three examples, namely Sparse PCA \cite{berthet2013complexity}, submatrix detection \cite{ma2015computational}, and testing $k$-wise independence \cite{alon2007testing}. One reason for choosing these examples was that the first requirement, the reduction being `simple', is satisfied by already existing reductions to these three problems. Hence we could focus on our ideas to tackle the second issue.

\begin{enumerate}
\item That being said, we have not carefully handpicked the only known examples of reductions from the planted clique problem where this simplicity condition holds \cite{applebaum2010public,hazan2011hard,hajek2015computational,gao2017sparse}. Hence our randomness harvesting ideas should be applicable beyond the examples we have provided.

However, the latest generation of reductions from the planted clique problem and its variants \cite{pmlr-v75-brennan18a,brennan2019optimal,pmlr-v125-brennan20a} do not seem to satisfy this simplicity condition. A glance at the reductions implemented in these works reveals some obstacles to obtaining logspace versions of their results\footnote{One not so serious obstacle is that these works use a `real-valued' model of computation rather than a more standard model of computation where inputs are represented using bits. To port these results to logspace, one would either have to use an appropriate `real-valued' model of computation that meaningfully captures space bounds, or use ideas like in \cite{ma2015computational} to relate the real-valued problems to asymptotically equivalent discretized problems.}. Some of the primitives require taking the `square-root' of matrices (after checking their positive semidefiniteness), or sampling a Haar random matrix from the orthogonal group. Standard implementations of either of these tasks use linear algebraic techniques that require large amounts of memory. It is an intriguing open problem to see if the hardness results in these works can be made robust to changing the notion of computational efficiency to space efficiency. This might be either by implementing known reductions more efficiently\footnote{To demonstrate that there is indeed hope of doing this, we sketch a heuristic idea to bypass the difficulty in taking square roots of matrices. Carefully inspecting the reduction CLIQUE-TO-WISHART in \cite{brennan2019optimal}, we see that the matrix we want a square root of is a small perturbation of the identity matrix. It is plausible, though not certain, that the reduction goes through even if we replace the square root function by a truncated taylor series of the square root function about the identity matrix (for example, $\sqrt{1+x} \approx 1+\frac{x}{2}$). It would be easy to implement such a function space efficiently.}, or designing alternate reductions that are space efficient.

\item Like \cite{pmlr-v125-brennan20a}, we too see a tradeoff between the conservativeness of our hardness assumption and the utility of that assumption. In our context, we could harvest more multiple access randomness if we assumed computational hardness of more structured variants of the planted clique problem. This suggests two directions of study. The first is to provide more evidence that the secret leakage variants whose hardness we have assumed are indeed computationally hard, particularly the clique leakage variants. The second is to try and devise cleverer techniques to harvest more multiple access randomness from less structured versions of the planted clique problem. That is, versions in which there is less secret leakage. Lemma~\ref{lem:advanced-kpc-randomness-harvesting} is an attempt of this sort.
\end{enumerate}

Harvesting multiple access randomness from (all variants of) the planted clique problem turned out to be particularly easy. In all our schemes, we simply used bits from the input as multiple access randomness without any post-processing. This is unlike works that use randomness harvesting for derandomization, where the the input is often passed through sophisticated randomness processing tools \cite{goldreich2002derandomization,kinne2012pseudorandom,hoza2017typically,Chen2021HardnessVR}. If we believe that the ability to transfer computational hardness for statistical problems should be robust to changing our notion of reducibility from polynomial time to logarithmic space, this is further evidence for the utility of the planted clique problem and its variants. Not all problems would be so amenable to harvesting randomness from the input.

\section{Related Work}
\label{sec:related-work}

\textbf{The planted clique problem is believed to be hard, and this hardness is useful:} It is widely believed that polynomial time algorithms can only detect or recover the planted clique for clique sizes above $k=\Omega(\sqrt{n})$. One piece of evidence for this belief is the long line of algorithmic progress using a variety of techniques that has been unable to break this barrier \cite{kuvcera1995expected,alon1998finding,feige2000finding,feige2010finding,ames2011nuclear,dekel2014finding,chen2014statistical,deshpande2015,hajek2015computational,mardia2020finding}. The other piece of evidence comes from studying restricted but powerful classes of algorithms. \cite{jerrum1992large} showed that a natural Markov chain based technique requires more than polynomial time below this threshold. Similar hardness results for the planted clique problem or its variants have been shown for statistical query algorithms \cite{feldman2017statistical}, circuit classes \cite{rossman2008constant,rossman2010monotone}, the Lov{\'a}sz--Schrijver hierarchy \cite{feige2003probable}, and the sum-of-squares hierarchy \cite{meka2015sum,deshpande2015improved,hopkins2018integrality,barak2019nearly}. 

Further evidence comes from the low-degree-likelihood method \cite{hopkins2017bayesian,hopkins2017power,hopkins2018statistical,kunisky2019notes} and through concepts from statistical physics \cite{gamarnik2019landscape}. This conjectured polynomial time hardness of the planted clique problem has also been widely used to show the polynomial time hardness of other problems, and we refer the reader to \cite{pmlr-v125-brennan20a} for a comprehensive list of examples. In the spirit of studying the utility of different notions of `efficient computation', in this work we show interesting reductions from \textit{weaker} computational assumptions (space rather than time). In the same spirit but in the opposite direction, \cite{manurangsi2021strongish} demonstrated the utility of \textit{stronger} computational assumptions for the planted clique problem.

\textbf{Sparse PCA, submatrix detection, and testing almost $k$-wise independence:}
The Sparse PCA problem has been well-studied from a computational lower bound perspective \cite{berthet2013complexity,berthet2013optimal,wang2016statistical,brennan2019optimal}. See \cite{ding2019subexponential} for a discussion about the extensive algorithmic progress on the Sparse PCA problem. Submatrix detection \cite{butucea2013detection,montanari2015limitation,kolar2011minimax,balakrishnan2011statistical,chen2016statistical,cai2017computational,ma2015computational,brennan2019universality} and testing almost $k$-wise independence \cite{alon2007testing,o2018closeness} have also been widely studied.

\textbf{The hypergraph planted clique problem:}
Computational problems involving the Erd\H{o}s-R\'enyi or planted clique hypergraphs have also been studied before \cite{pmlr-v125-brennan20a}, though less comprehensively than the corresponding graph problems. \cite{boix2019average} studied the average-case complexity of counting cliques in Erd\H{o}s-R\'enyi hypergraphs. \cite{zhang2018tensor,luo2020tensor} used the conjectured hardness of the hypergraph planted clique problem to show hardness of other statistical problems, and the latter also provided evidence for this conjecture by studying Markov chain based algorithms like in \cite{jerrum1992large}.

\section{Technical Preliminaries}
\label{sec:defns+tech_prelim}

Unless stated otherwise, all logarithms are taken base $2$. $[n] := \{1,2,...,n\}$.\\

\begin{definition}[Randomized logspace algorithm]\label{defn:bdd-space-alg}\ \\
A randomized logspace algorithm $\A$ is a probabilistic random-access Turing Machine. Given multiple access to a binary input $x$ of length $n$ and read-once access to a uniformly random bit string $r_{ro}$ of length at most $\poly(n)$, it can compute the output bit $\A(x,r_{ro})$ using at most $O(\log n)$ bits of multiple access working space and $\poly(n)$ time.\\
\end{definition}

\begin{definition}[Multiple access randomized logspace algorithm]\label{defn:multacc-bdd-space-alg}\ \\
A multiple access randomized logspace algorithm $\A$ is a probabilistic random-access Turing Machine. Given multiple access to a binary input $x$ of length $n$, read-once access to a uniformly random bit string $r_{ro}$, and multiple access to a uniformly random bit string $r_{ma}$ (each of length at most $\poly(n)$) it can compute the output bit $\A(x,r_{ro},r_{ma})$ using at most $O(\log n)$ bits of multiple access working space and $\poly(n)$ time.\\
\end{definition}

\begin{definition}[Hypothesis testing problem]\label{defn:hyp-test}\ \\
A hypothesis testing problem $\mathcal{P}$ consists of non-decreasing positive integer sequences $\ell = \omega(1)$ and ${\sf size_{\mathcal{P}}}(\ell)$ along with two sequences of hypotheses ${\sf H}_0(\ell)$ and ${\sf H}_1(\ell)$. For $i \in \{0,1\}$, ${\sf H}_i(\ell)$ is a set of distributions over binary strings of length ${\sf size_{\mathcal{P}}}(\ell)$. For a problem indexed by $\ell$, the input $x \in \{0,1\}^{{\sf size_{\mathcal{P}}}(\ell)}$ is distributed with probability $1/2$ according to some distribution $P_0 \in {\sf H}_0(\ell)$ or with probability $1/2$ according to some distribution $P_1 \in {\sf H}_1(\ell)$.\\
\end{definition}

\begin{definition}[Solving a hypothesis testing problem]\label{defn:solve-hyp-test}\ \\
A randomized algorithm $\A$ with access to a uniformly random bit string $r$ solves a hypothesis testing problem $\mathcal{P}$ if
\[\inf\limits_{P_0 \in {\sf H}_0(\ell)}\underset{\hspace{-1em}\mathrlap{x \sim P_0, r}}{\Prob}\left(\A(x,r) = 0 \right) + \inf\limits_{P_1 \in {\sf H}_1(\ell)}\underset{\hspace{-1em}\mathrlap{x \sim P_1, r}}{\Prob}\left(\A(x,r) = 1 \right) = 1+ \Omega(1).\]
\end{definition}

\begin{definition}[Erd\H{o}s-R\'enyi graph distribution: $\G(n, 1/2)$]\label{def:er}\ \\
Let $G = ([n],E)$ be a graph with vertex set of size $n$. The edge set $E$ is created by including each possible edge independently with probability $\frac{1}{2}$. The distribution on graphs (represented by their $n^2$-bit adjacency matrix) thus formed is denoted $\G(n, 1/2)$.\\
\end{definition}

\begin{definition}[Planted Clique graph distribution: $\G(n, 1/2,k)$]\label{def:pc}\ \\
Let $G = ([n],E)$ be a graph with vertex set of size $n$. Let $K \subset [n]$ be a set of size $k$ chosen uniformly at random from all ${n \choose k}$ subsets of size $k$. For all distinct pairs of vertices $u,v \in K$, we add the edge $(u,v)$ to $E$. 
	For all remaining distinct pairs of vertices $u,v$, 
	we add the edge $(u,v)$ to $E$ independently with probability $\frac{1}{2}$. The distribution on graphs (represented by their $n^2$-bit adjacency matrix) thus formed is denoted $\G(n, 1/2,k)$.\\
\end{definition}

\begin{definition}[Planted Clique Detection Problem: ${\sf{PC_D}}(n,k)$]\label{def:pc-detection}\ \\
Given non-decreasing positive integer sequences $n=\omega(1)$ and $k \leq n$, this is the hypothesis testing problem induced by
\begin{align*}
{\sf H}_0(n) = \left\{\G(n,1/2)\right\} \hspace{0.1cm} \text{ and } \hspace{0.1cm}
{\sf H}_1(n) =
\left\{\G(n, 1/2,k)\right\}
\end{align*}
where the input to problem is the triple given by the number of vertices $n$, the size of the planted clique $k$, and the adjacency matrix of the random graph.\\
\end{definition}

%%%%%%%%%%%%%%%%%%%%%%%%%%%%%%%%%%%%%%%%%%%%%%%%%%%%%%%%%%%%%%%%%%%%%%%%

\section{Harvesting Multiple Access Randomness from the Input}
\label{sec:rand-harvesting}

\subsection{Planted Clique}
\label{subsec:rand-from-pc}

\begin{conjecture}[Logspace Planted Clique Conjecture: {\sf PC-Conj-Space}]\label{conj:logspace-pc}\ \\
Let $\A$ be any randomized logspace algorithm whose read-once uniformly random bit string is denoted $r_{ro}$. Let $n = \omega(1)$ and $k$ be any sequences of non-decreasing positive integers such that $k = O(n^{\frac{1}{2}-\delta})$ for some constant $\delta > 0$. Then
	\[
	\underset{\hspace{-1em}\mathrlap{A_G \sim \G(n,1/2),~r_{ro}}}{\Prob}\left( \A(\{n,k,A_G\},r_{ro}) = 0 \right) +
	\underset{\hspace{-1em}\mathrlap{A_G \sim \G(n,1/2,k),~r_{ro}}}{\Prob}\left(\A(\{n,k,A_G\},r_{ro}) = 1\right) = 1+o(1)   .\]\\
\end{conjecture}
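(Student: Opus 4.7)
The statement is labeled as a conjecture, and no unconditional proof is within reach of current techniques; the most defensible ``proof plan'' is therefore to derive Conjecture~\ref{conj:logspace-pc} from a stronger, already-conjectured hardness assumption. The cleanest derivation is the observation that every randomized logspace algorithm uses only randomness and runs in $\poly(n)$ time, so it is in particular a randomized polynomial-time algorithm. Hence the standard Planted Clique Conjecture --- that no randomized polynomial time algorithm solves ${\sf PC_D}(n,k)$ for $k = O(n^{1/2-\delta})$ --- immediately implies the statement via class containment. The first ``step'' of the plan is thus simply to inherit the assumption from the well-established time-bounded conjecture, which is essentially what the paper tacitly does when it states Conjecture~\ref{conj:logspace-pc}.

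A more intrinsic program, which would give additional support specific to space-bounded models and is the real technical content worth aiming at, is to accumulate evidence by ruling out the strongest restricted subclasses of randomized logspace we can. The plan is to extend the line of work of \cite{mardia2020space}, which already establishes the $\Theta(\sqrt{n})$ threshold phenomenon for the planted clique problem in a space-bounded setting, so as to rule out the full read-once randomized logspace model below the threshold. The natural steps are: (i) isolate a family of logspace-computable statistics of the adjacency matrix that could distinguish ${\sf G}(n,1/2)$ from ${\sf G}(n,1/2,k)$; (ii) show via Nisan-style pseudorandomness arguments together with low-degree or branching-program lower bounds that for $k = O(n^{1/2-\delta})$ both distributions induce near-identical distributions on every such statistic; (iii) close the argument by a standard contradiction: any successful logspace tester would contradict (ii).

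The hard part, and precisely the reason the statement must remain a conjecture, is step (iii): separating ${\sf BPL}$ (or even read-once randomized logspace) from ${\sf P}$ on an explicit average-case problem is beyond current lower bound technology. No such separation is known for \emph{any} natural average-case problem of the richness of planted clique detection, and any direct proof would constitute a major breakthrough in space complexity. Consequently, in practice the conjecture is to be taken as an axiom, justified by (a) its equivalence modulo only efficiency bookkeeping to the standard polynomial-time planted clique conjecture and (b) the auxiliary space-specific evidence surveyed in \cite{mardia2020space} and in Section~\ref{subsec:security-of-cl}, rather than by any standalone argument.
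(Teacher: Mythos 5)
This statement is a conjecture, not a theorem, and the paper offers no proof of it — it is stated as an assumption (analogous to the standard polynomial-time Planted Clique Conjecture) whose purpose is to serve as the hypothesis from which other randomized-logspace hardness results are derived. You correctly recognize this and correctly observe that, because the paper's Definition~\ref{defn:bdd-space-alg} restricts randomized logspace algorithms to run in $\poly(n)$ time, the usual time-bounded planted clique conjecture implies Conjecture~\ref{conj:logspace-pc} by class containment; this is exactly the relationship the paper itself acknowledges in Section~\ref{sec:Introduction} when it notes that its results yield no new hardness beliefs, only a weaker hypothesis from which to transfer them. Nothing further needs to be said, and your auxiliary ``program'' for space-specific evidence is a reasonable (and explicitly unproven) sketch, but it is beside the point since the target is a conjecture rather than a claim requiring proof.
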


\begin{lemma}[Small amounts of multiple access randomness from the Planted Clique input]\label{lem:basic-randomness-harvesting}\ \\
Let $n=\omega(1)$, $k$, and $m= o(n/k)$ be any non-decreasing positive integer sequences such that $k=O(n^{\frac{1}{2}-\delta})$ for some positive constant $\delta > 0$. Given $n$ and $k$, assume we can deterministically compute and store $m$ using an additional $O(\log n)$ bits of space. 

If {\sf PC-Conj-Space} (Conjecture~\ref{conj:logspace-pc}) is true, no multiple access randomized logspace algorithm using at most ${n \choose 2} - {n-m \choose 2} = o(n^2 / k)$ multiple access random bits\footnote{This upper bound is required only on the multiple access randomness used. We do not assume any upper bounds on the read-once randomness other than those implied by the space constraints on the algorithm.} can solve ${\sf{PC_D}}(n-m,k)$.
\end{lemma}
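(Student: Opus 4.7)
The plan is to argue by contradiction: assuming a multiple access randomized logspace algorithm $\A'$ solves ${\sf PC_D}(n-m,k)$ with at most $\binom{n}{2}-\binom{n-m}{2}$ multiple access random bits, I will build a (plain, read-once) randomized logspace algorithm $\A$ solving ${\sf PC_D}(n,k)$, contradicting {\sf PC-Conj-Space}. The algorithm $\A$ takes the larger input $\{n,k,A_G\}$ with $A_G$ drawn from either $\G(n,1/2)$ or $\G(n,1/2,k)$, and simulates $\A'$ on the induced subgraph on the first $n-m$ vertices, using the bits of $A_G$ corresponding to edges incident to the last $m$ vertices as the source of multiple access randomness for $\A'$.

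More concretely, first I would fix a canonical bijection between the set of multiple access random bit positions of $\A'$ (of cardinality at most $\binom{n}{2}-\binom{n-m}{2}$) and the set of edge slots of $A_G$ that touch at least one vertex in $\{n-m+1,\dots,n\}$. Because the input $A_G$ is read-access multiple times, and because $n,k,m$ are all computable in $O(\log n)$ bits (using the hypothesis that $m$ is space-efficiently computable from $n,k$), the bijection can be evaluated on the fly in logspace. So whenever $\A'$ queries its multiple access randomness tape at a position $j$, the outer algorithm $\A$ looks up the corresponding bit in $A_G$; whenever $\A'$ queries position $(u,v)$ of its input adjacency matrix (with $u,v\in[n-m]$), $\A$ forwards this to $A_G$; and the read-once tape of $\A'$ is supplied from $\A$'s own read-once tape. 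The total working memory used is $O(\log n)$, so $\A$ is a valid randomized logspace algorithm.

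Next I would carry out the distributional analysis. If $A_G\sim\G(n,1/2)$, then the induced subgraph on $[n-m]$ is $\G(n-m,1/2)$ and the harvested bits are independent fair coin flips, jointly independent of the induced subgraph, so $\A'$ sees exactly the distribution of a null instance of ${\sf PC_D}(n-m,k)$ together with genuine multiple access randomness. If $A_G\sim\G(n,1/2,k)$, let $E$ be the event that none of the vertices $\{n-m+1,\dots,n\}$ lies in the planted clique. A union bound gives $\Prob(E^c)\le mk/n = o(1)$ since $m=o(n/k)$. Conditioned on $E$, the planted clique is uniform over $k$-subsets of $[n-m]$, the induced subgraph on $[n-m]$ is exactly $\G(n-m,1/2,k)$, and the edges incident to the last $m$ vertices are independent fair coin flips that are independent of the induced subgraph; hence $\A'$ again sees the correct joint distribution for a planted instance together with fresh multiple access randomness. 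Combining these two cases, the success probability of $\A$ is at least that of $\A'$ minus $o(1)$, which is $1+\Omega(1)$, contradicting {\sf PC-Conj-Space}.

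The main step that needs care, rather than any deep obstacle, is checking that the simulation respects the logspace constraint. In particular, the bijection between $\A'$'s multiple access randomness positions and the harvested edge slots of $A_G$ must be computable in $O(\log n)$ working space, and likewise the translation between vertex labels of the induced subgraph and those of the host graph. Both are simple arithmetic operations on $O(\log n)$-bit indices, which is why the hypothesis ``$m$ is computable and storable in $O(\log n)$ bits'' is invoked. Given this, the contradiction with Conjecture~\ref{conj:logspace-pc} is immediate and completes the proof.
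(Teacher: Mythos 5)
Your proposal is correct and follows the paper's own proof in all essentials: harvest the edge indicators incident to the last $m$ vertices as multiple access randomness, run the hypothesized solver on the induced subgraph on $[n-m]$, condition on no planted vertex landing among the last $m$ vertices (probability $1-o(1)$ by a union bound since $m=o(n/k)$), and derive a contradiction with {\sf PC-Conj-Space}. The only cosmetic difference is that you phrase it as a direct proof by contradiction (assume the small instance is solvable, build a logspace solver for the large instance), whereas the paper runs the contrapositive (apply the conjecture to conclude the constructed $\mathcal{S'}$ fails, then transfer failure back to $\mathcal{S}$); these are logically equivalent.
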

\begin{proof}
Let $\mathcal{S}$ be any multiple access randomized logspace algorithm that uses at most ${n \choose 2} - {n-m \choose 2}$ multiple access random bits when run on an instance of ${\sf PC_D}(n-m,k)$. We will show that $\mathcal{S}$ fails to solve ${\sf{PC_D}}(n-m,k)$.

The key idea is to use randomness from the input itself, since we have multiple access to the input. Let $\mathcal{S'}$ be an algorithm which takes an input $\{n,k,A'_G\}$ corresponding to ${\sf{PC_D}}(n,k)$. Let $A_G$ denote the adjacency matrix of the subgraph induced by $A'_G$ (the adjacency matrix input to $\mathcal{S'}$) on its first $n-m$ vertices. Let $r_G$ be a binary string containing the ${n \choose 2} - {n-m \choose 2}$ edge indicator bits for possible edges with at least one endpoint in the last $m$ vertices. Let $\mathcal{S'}$ run $\mathcal{S}$ on $A_G$ using $r_G$ as its source of multiple access random bits and output the answer given by $\mathcal{S}$.

\begin{itemize}[leftmargin=0pt]
\item \textbf{$\mathcal{S'}$ is a randomized logspace algorithm:}
We first make a technical remark. The number of the bits in the input triple $\{$number of vertices, number of planted vertices, adjacency matrix$\}$ for both problems ${\sf{PC_D}}(n-m,k)$ and ${\sf{PC_D}}(n,k)$ are within polynomial factors of each other, since $m = o(n/k) = o(n)$. Thus a logspace algorithm for one is also a logspace algorithm for the other. Hence we can ignore the distinction about which of the two problems we are running the algorithm on when determining its space complexity.

By assumption, $\mathcal{S'}$ can deterministically compute and store $m$ (and hence $n-m$) using $O(\log n)$ bits of working space. This lets $\mathcal{S'}$ deterministically simulate multiple access to $n-m$, $k$, $A_G$, and $r_G$ from multiple access to $n$, $k$, and $A'_G$ using $O(\log n)$ bits of working space. Finally, $\mathcal{S}$ (and hence $\mathcal{S'}$) can be implemented as a randomized logspace algorithm given multiple access to $n-m$, $k$, $A_G$, and $r_G$ because $r_G$ has ${n \choose 2} - {n-m \choose 2}$ bits.

Because $k = O(n^{\frac{1}{2}-\delta})$, {\sf PC-Conj-Space} (Conjecture~\ref{conj:logspace-pc}) combines with the fact above to imply that $\mathcal{S'}$ fails to solve ${\sf PC_D}(n,k)$. Formally, \[\underset{\hspace{-1em}\mathrlap{A'_G \sim \G(n,1/2),r_{ro}}}{\Prob}\left( \mathcal{S'}(\{n,k,A'_G\},r_{ro}) = 0 \right) +
\underset{\hspace{-1em}\mathrlap{A'_G \sim \G(n,1/2,k),r_{ro}}}{\Prob}\left(\mathcal{S'}(\{n,k,A'_G\},r_{ro}) = 1\right)=1+o(1).\]

\item \textbf{$\mathcal{S}$ fails to solve ${\sf{PC_D}}(n-m,k)$:} Let ${\sf LastVerticesNotInClique}$ denote the event that under the random choice of clique vertices in ${\sf{PC_D}}(n,k)$, all $k$ planted clique vertices are in the first $n-m$ vertices. Conditioned on ${\sf LastVerticesNotInClique}$, it is clear that $\mathcal{S}$ (when invoked by $\mathcal{S'}$) is being run on an instance of ${\sf{PC_D}}(n-m,k)$. This is because 
\begin{enumerate}
\item Irrespective of whether the input graph is $\G(n,1/2)$ or $\G(n,1/2,k)$, $r_G$ consists of iid uniform random bits that are independent of $A_G$.
\item If $A'_G \sim \G(n,1/2)$, then $A_G \sim \G(n-m,1/2)$. On the other hand, if $A'_G \sim \G(n,1/2,k)$, then $A_G \sim \G(n-m,1/2,k)$.
\end{enumerate}
Conditioned on {\sf LastVerticesNotInClique}, the probability that $\mathcal{S'}$ outputs the correct answer is equal to the probability that $\mathcal{S}$ outputs the correct answer when run on a instance of ${\sf PC_D}(n-m,k)$. Because $m = o(n/k)$, a union bound gives $\Pr({\sf LastVerticesNotInClique}^c) \leq m \cdot \frac{k}{n} = o(1)$. Hence we must have \[\underset{\hspace{-1em}\mathrlap{A_G \sim \G(n-m,1/2),r_{ro},r_{ma}}}{\Prob}\left( \mathcal{S}(\{n-m,k,A_G\},r_{ro},r_{ma}) = 0 \right) + \underset{\hspace{-1em}\mathrlap{A_G \sim \G(n-m,1/2,k),r_{ro},r_{ma}}}{\Prob}\left(\mathcal{S}(\{n-m,k,A_G\},r_{ro},r_{ma}) = 1\right)=1+o(1).\qedhere\]
\end{itemize}

\end{proof}

\subsection{$k$-Partite Planted Clique}
\label{subsec:rand-from-kpc}

\begin{definition}[$k$-Partite Planted Clique graph distribution: ${\sf kG}(\ell \cdot k, 1/2,k)$]\label{def:kpc}\ \\
This is the planted clique graph distribution ${\sf G}(\ell \cdot k, 1/2,k)$ conditioned on the event that for all $i \in \{1,2,...,k\}$, there is exactly one planted clique vertex in $[i \cdot \ell] \setminus [(i-1) \cdot \ell]$.\\
\end{definition}

\begin{definition}[$k$-Partite Planted Clique Detection Problem: ${\sf{kPC_D}}(\ell \cdot k, k)$]\label{def:kpc-detection}\ \\
Given non-decreasing positive integer sequences $\ell = \omega(1)$ and $k$, this is the hypothesis testing problem induced by \[{\sf H}_0(\ell) = \left\{{\sf G}( \ell \cdot k,1/2)\right\} \hspace{0.1cm} \text{ and } \hspace{0.1cm}
{\sf H}_1(\ell) = \left\{{\sf kG}(\ell \cdot k, 1/2,k)\right\}\]
where the input to problem is the triple given by the number of vertices $\ell \cdot k$, the size of the planted clique $k$, and the adjacency matrix of the random graph.\\
\end{definition}

\begin{conjecture}[Logspace $k$-Partite Planted Clique Conjecture: {\sf kPC-Conj-Space}]\label{conj:logspace-kpc}\ \\
Let $\A$ be any randomized logspace algorithm whose read-once uniformly random bit string is denoted $r_{ro}$. Let $\ell = \omega(1)$ and $k$ be any sequence of non-decreasing positive integers such that $k = O(\ell^{1-\delta})$ for some constant $\delta > 0$. Then
	\[
	\underset{\hspace{-1em}\mathrlap{A_G \sim {\sf G}(\ell \cdot k,1/2),~r_{ro}}}{\Prob}\left( \A(\{\ell \cdot k,k,A_G\},r_{ro}) = 0 \right) +
	\underset{\hspace{-1em}\mathrlap{A_G \sim {\sf kG}(\ell \cdot k,1/2,k),~r_{ro}}}{\Prob}\left(\A(\{\ell \cdot k,k,A_G\},r_{ro}) = 1\right) = 1+o(1)   .\]\\
\end{conjecture}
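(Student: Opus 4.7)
Since the final statement is a conjecture rather than a theorem, there is no proof in the usual sense; I interpret the task as sketching a plan to justify it rigorously. My plan would be to accumulate evidence from two complementary directions, mirroring the support that underlies Conjecture~\ref{conj:logspace-pc}. The first natural attempt is a conditional argument: show that a violation of Conjecture~\ref{conj:logspace-kpc} yields a violation of Conjecture~\ref{conj:logspace-pc}. Starting from an instance of ${\sf PC_D}(n,k)$ with $k = O(n^{\frac{1}{2}-\delta})$ and setting $\ell = n/k$, I would sample a uniformly random partition of the $n$ vertices into $k$ blocks of size $\ell$ and feed the relabeled graph to a hypothetical randomized logspace solver for ${\sf kPC_D}(\ell \cdot k, k)$. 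Under the null, any partition yields ${\sf G}(\ell \cdot k, 1/2)$. Under the planted alternative, conditioned on the event that the random partition places exactly one planted clique vertex in each of the $k$ blocks, the output is distributed exactly as ${\sf kG}(\ell \cdot k, 1/2, k)$; a standard balls-and-bins calculation shows this event has constant probability in the relevant parameter regime, which would suffice for a contradiction.

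The second and more convincing line of support is to appeal to the low-degree polynomial lower bounds of \cite{pmlr-v125-brennan20a}, who established that no low-degree polynomial can distinguish the null and planted distributions of ${\sf kPC_D}(\ell \cdot k, k)$ when $k = O(\ell^{1-\delta})$. This framework is widely believed to capture the known efficient algorithmic approaches to planted clique type problems --- spectral methods, sum-of-squares relaxations, statistical queries --- whose logspace implementations would a fortiori also be captured. Combined with Lemma~\ref{lem:kpc-to-pc}, which provides the reverse implication transferring ${\sf kPC_D}$ hardness to ${\sf PC_D}$ hardness with only a modest amount of auxiliary randomness, the $k$-partite and vanilla variants stand or fall together, so the hardness beliefs that motivate Conjecture~\ref{conj:logspace-pc} apply here essentially unchanged.

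The hard part is that the direct reduction described above is obstructed by exactly the difficulty this paper is devoted to overcoming: sampling and storing a uniformly random partition of $[n]$ into $k$ blocks of size $\ell$ requires $\widetilde{\Theta}(n \log n)$ truly random bits with multiple-access semantics, which a randomized logspace algorithm does not have (see Section~\ref{subsec:obstacle}). Consequently, a clean implication from Conjecture~\ref{conj:logspace-pc} to Conjecture~\ref{conj:logspace-kpc} appears to require the very randomness-harvesting machinery that the paper develops in the subsequent lemmas, and the cleanest path is to fall back on the low-degree evidence as the principal unconditional justification. A fully rigorous proof would require either new average-case lower bounds against randomized logspace, or a logspace-computable reduction from ${\sf PC_D}$ to ${\sf kPC_D}$ that bypasses the random-partition sampling step --- an intriguing question whose resolution would mirror the central technical challenge of this paper.
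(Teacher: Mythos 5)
You are right that this statement is a conjecture, not a theorem, and the paper offers no proof. Its stated justification is precisely your second line of support: \cite{pmlr-v125-brennan20a} established low-degree polynomial lower bounds for the $k$-partite planted clique problem in the regime $k = O(\ell^{1-\delta})$, and low-degree lower bounds are widely regarded as a strong predictor of hardness against known efficient algorithmic paradigms, which would include their logspace implementations. On that point your proposal matches the paper's stance.

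However, your first line of support contains a concrete error that would sink the sketched reduction even if one ignored the randomness-access obstacle you correctly flag. Starting from ${\sf PC_D}(n,k)$ with $n = \ell k$ and a uniformly random partition of $[n]$ into $k$ ordered blocks of size $\ell$, the probability that every block receives exactly one of the $k$ planted vertices is
\[
\frac{\ell^k}{\binom{k\ell}{k}} \;\approx\; \frac{k!}{k^k} \;=\; \Theta\!\left(\sqrt{k}\, e^{-k}\right),
\]
which for $k = \omega(1)$ (the only non-trivial regime here) tends to zero, not to a constant. Since the conditioning event has vanishing probability, a solver for ${\sf kPC_D}$ would only be guaranteed correct on a negligible fraction of the random partitions, and outside that event the relabeled graph is not distributed as ${\sf kG}(\ell\cdot k,1/2,k)$ at all, so you cannot extract a contradiction to Conjecture~\ref{conj:logspace-pc}. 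This also means the assertion that the two conjectures ``stand or fall together'' overstates what is actually known: Lemma~\ref{lem:kpc-to-pc} transfers an algorithm for ${\sf PC_D}$ into one for ${\sf kPC_D}$ (by randomly permuting a ${\sf kPC_D}$ instance), so it shows ${\sf kPC}$ hardness implies ${\sf PC}$ hardness, but not conversely. Conjecture~\ref{conj:logspace-kpc} is therefore a strictly stronger assumption than Conjecture~\ref{conj:logspace-pc}, and the paper deliberately rests it on the low-degree evidence from \cite{pmlr-v125-brennan20a} rather than on any reduction from vanilla planted clique.
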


\begin{lemma}[Multiple access randomness from the $k$-Partite Planted Clique input]\label{lem:basic-kpc-randomness-harvesting}\ \\
Let $\ell = \omega(1)$, $k$, $k_s \leq k-1$, and $m = o(\ell)$ be any non-decreasing positive integer sequences such that $k = O(\ell^{1-\delta})$ for some positive constant $\delta>0$. Given $\ell \cdot k$ and $k$, assume we can deterministically compute and store $k_s$ and $m$ using an additional $O(\log \ell)$ bits of space. 

If {\sf kPC-Conj-Space} (Conjecture~\ref{conj:logspace-kpc}) is true, no multiple access randomized logspace algorithm using at most ${\ell \cdot k \choose 2} - {\ell \cdot k-m \choose 2} = o(\ell^2 \cdot k)$ multiple access random bits can solve ${\sf{kPC_D}}(\ell \cdot k_s,k_s)$.
\end{lemma}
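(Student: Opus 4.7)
The plan is to mirror the contradiction strategy of Lemma~\ref{lem:basic-randomness-harvesting}, now taking advantage of the exact self-reducibility of the $k$-partite planted clique distribution. Assume, for contradiction, that some multiple access randomized logspace algorithm $\mathcal{S}$ solves ${\sf{kPC_D}}(\ell \cdot k_s, k_s)$ using at most $M := {\ell k \choose 2} - {\ell k - m \choose 2}$ multiple access random bits. I will build a randomized logspace algorithm $\mathcal{S}'$ for ${\sf{kPC_D}}(\ell \cdot k, k)$ that harvests its ``multiple access randomness'' from edge indicators of the input graph itself, thereby contradicting {\sf kPC-Conj-Space} since $k = O(\ell^{1-\delta})$.

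Given an input $\{\ell k, k, A'_G\}$, $\mathcal{S}'$ splits $A'_G$ into two disjoint pieces. The first is $A_G$, the adjacency matrix of the subgraph induced on the first $\ell \cdot k_s$ vertices, which will be fed to $\mathcal{S}$ as an instance of ${\sf{kPC_D}}(\ell \cdot k_s, k_s)$. The second is the bit string $r_G$ holding the ${\ell k \choose 2} - {\ell k - m \choose 2}$ edge indicators for edges with at least one endpoint among the last $m$ vertices; this will serve as the multiple access random string consumed by $\mathcal{S}$. Since $k_s$ and $m$ can be computed in $O(\log \ell)$ additional space from $\ell k$ and $k$ by hypothesis, any bit-index access $\mathcal{S}$ makes into its input or into its multiple access randomness can be translated on the fly into a bit-index access into $A'_G$; combined with the fact that $\mathcal{S}$ itself runs in $O(\log \ell)$ working space and $\poly(\ell)$ time, this shows $\mathcal{S}'$ is a randomized logspace algorithm (on inputs of comparable size, so as in Lemma~\ref{lem:basic-randomness-harvesting} we may ignore the distinction between ${\sf{kPC_D}}(\ell \cdot k_s, k_s)$ and ${\sf{kPC_D}}(\ell \cdot k, k)$ for logspace accounting).

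Correctness will follow by conditioning on the event ${\sf LastVerticesNotInClique}$ that no planted clique vertex lies among the last $m$ positions of $A'_G$. Because $m = o(\ell)$, those last $m$ positions sit entirely inside the $k$-th group $\{(k-1)\ell + 1, \ldots, \ell k\}$, so the event reduces to the single planted vertex of that group landing among its first $\ell - m$ slots; its probability is $(\ell - m)/\ell = 1 - o(1)$. Conditioned on this event, every edge indicator in $r_G$ has at least one endpoint outside the planted clique and is therefore an independent Bernoulli$(1/2)$ bit, so $r_G$ is iid uniform and independent of $A_G$. By the $k$-partite self-reducibility, $A_G \sim {\sf kG}(\ell k_s, 1/2, k_s)$ in the planted case (each of the first $k_s$ groups contributes its single planted vertex) and $A_G \sim {\sf G}(\ell k_s, 1/2)$ in the null case, so $\mathcal{S}$ is genuinely being run on a ${\sf{kPC_D}}(\ell \cdot k_s, k_s)$ instance. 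Combining the success probability of $\mathcal{S}$ with the $o(1)$ loss from ${\sf LastVerticesNotInClique}^c$ will give that $\mathcal{S}'$ solves ${\sf{kPC_D}}(\ell \cdot k, k)$, contradicting {\sf kPC-Conj-Space}.

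The step most worth care will be the self-reducibility alignment: it is crucial that $m$ not straddle more than one group of $\ell$ vertices (which $m = o(\ell)$ guarantees) and that $\ell \cdot k_s$ slice off exactly the first $k_s$ groups (which is why the statement is phrased in terms of $\ell \cdot k_s$ and requires $k_s \leq k - 1$). Once this alignment is verified, the remainder is a routine adaptation of the contradiction argument in Lemma~\ref{lem:basic-randomness-harvesting}.
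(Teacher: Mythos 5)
Your proposal is correct and takes essentially the same approach as the paper, which simply notes that the argument of Lemma~\ref{lem:basic-randomness-harvesting} goes through after swapping in the subgraph on the first $\ell \cdot k_s$ vertices as the smaller instance, relying on the exact self-reducibility of the $k$-partite distribution. Your explicit check that $m = o(\ell)$ confines the harvested edge indicators to group $k$, so the conditioning event has probability $1-o(1)$ and the randomness is disjoint from and independent of $A_G$, is precisely the detail the paper's brief note leaves implicit.
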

\begin{proof}
The proof is analogous to that of Lemma~\ref{lem:basic-randomness-harvesting}, so we note the slight change required and omit the details. The only difference from Lemma~\ref{lem:basic-randomness-harvesting} is that the reduction uses the subgraph induced on the first $\ell \cdot k_s$ vertices (instead of the first $n-m$ vertices) as the instance of ${\sf{kPC_D}}(\ell \cdot k_s,k_s)$ to solve. The auxiliary multiple access random bits harvested from the input are still the edge indicators for possible edges with at least one endpoint in the last $m$ vertices. The key reason this works is that the $k$-Partite planted clique problem has a nice `self-reducibility' structure so that the subgraph induced on the first $\ell \cdot k_s$ vertices is an instance of ${\sf{kPC_D}}(\ell \cdot k_s,k_s)$. \qedhere
\end{proof}

\begin{lemma}[More multiple access randomness from the $k$-Partite Planted Clique input]\label{lem:advanced-kpc-randomness-harvesting}\ \\
Let $\ell = \omega(1)$, $k$ and $k_s$ be any non-decreasing positive integer sequences such that $k  = O(\ell^{1-\delta})$ for some constant $\delta > 0$ and $k_s \leq \frac{k}{\sqrt{\ell}} $. Given $\ell \cdot k$ and $k$, assume we can deterministically compute and store $k_s$ using an additional $O(\log \ell)$ bits of space. 
	
If the {\sf kPC-Conj-Space} (Conjecture~\ref{conj:logspace-kpc}) is true, no multiple access randomized logspace algorithm using at most ${\ell \cdot k \choose 2}$ multiple access random bits can solve ${\sf{kPC_D}}(\ell \cdot k_s,k_s)$.
\end{lemma}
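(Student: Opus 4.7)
The plan is to generalize Lemma~\ref{lem:basic-kpc-randomness-harvesting} by harvesting randomness from an entire disjoint block of $\ell k$ vertices rather than only $o(\ell)$ of them. Suppose for contradiction that a multiple access randomized logspace algorithm $\mathcal{S}$ solves ${\sf kPC_D}(\ell k_s, k_s)$ using at most $\binom{\ell k}{2}$ multiple access random bits. Construct $\mathcal{S}'$ on inputs of ${\sf kPC_D}(\ell \cdot 2k, 2k)$ as follows: feed the subgraph on the first $\ell k_s$ vertices to $\mathcal{S}$ as its input, and feed the $\binom{\ell k}{2}$ edge indicators of the subgraph on the last $\ell k$ vertices to $\mathcal{S}$ as its multiple access random tape, returning $\mathcal{S}$'s output. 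The logspace-computability hypotheses on $k, k_s, \ell$ make this indexing arithmetic fit in $O(\log \ell)$ space, and since $2k = O(\ell^{1-\delta})$, Conjecture~\ref{conj:logspace-kpc} applies at size $\ell \cdot 2k$. Under the overall null, both induced subgraphs consist of iid uniform bits; under the overall alternative, the exact self-reducibility of ${\sf kG}$ makes the first subgraph an independent ${\sf kG}(\ell k_s, 1/2, k_s)$ and the second an independent ${\sf kG}(\ell k, 1/2, k)$.

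Let $\alpha$ be the alternative-accept probability of $\mathcal{S}$ on an independent ${\sf kG}(\ell k_s, 1/2, k_s)$ input with a truly uniform random tape, and let $\beta$ be the same probability when the random tape is instead drawn from ${\sf kG}(\ell k, 1/2, k)$. By the soundness of $\mathcal{S}$ its advantage against a genuinely uniform tape is $\Omega(1)$. Since $\mathcal{S}'$'s null-reject and alternative-accept probabilities equal $\mathcal{S}$'s null-reject (truly random tape) and $\beta$ respectively, $\mathcal{S}'$ succeeds as long as $|\alpha - \beta| = o(1)$; in that case $\mathcal{S}'$ directly contradicts Conjecture~\ref{conj:logspace-kpc} and we are done.

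If instead $|\alpha - \beta| = \Omega(1)$, a hybrid argument produces a multiple access randomized logspace distinguisher $\mathcal{D}$ for ${\sf G}(\ell k, 1/2)$ versus ${\sf kG}(\ell k, 1/2, k)$ as follows. On input adjacency matrix $A_H$ on $\ell k$ vertices, $\mathcal{D}$ uses its own multiple access random bits to simulate a fresh sample $a \sim {\sf kG}(\ell k_s, 1/2, k_s)$: spend $k_s \lceil \log \ell \rceil$ bits to pick the clique representative inside each of the $k_s$ groups and $\binom{\ell k_s}{2}$ further bits for the non-clique edges, so that any edge or membership query against $a$ can be answered deterministically on the fly. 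Then run $\mathcal{S}$ with input $a$ and random tape $A_H$ and return (possibly flipped) the output. By construction $\mathcal{D}$ solves ${\sf kPC_D}(\ell k, k)$ with advantage $|\alpha - \beta| = \Omega(1)$ and consumes $\binom{\ell k_s}{2} + k_s \lceil \log \ell \rceil = O((\ell k_s)^2)$ multiple access random bits.

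Finally, apply Lemma~\ref{lem:basic-kpc-randomness-harvesting} to $\mathcal{D}$ with parameters $\ell' = \ell$, $k' = 2k$, $k_s' = k$, and $m' = \lceil c\, \ell k_s^2 / k \rceil$ for a suitable constant $c$. The hypothesis $k_s \leq k/\sqrt{\ell}$ gives $m' = O(k) = o(\ell)$ as required (since $k = O(\ell^{1-\delta}) = o(\ell)$), while $\binom{\ell \cdot 2k}{2} - \binom{\ell \cdot 2k - m'}{2} = \Theta(\ell k m') = \Theta(\ell^2 k_s^2)$ dominates the multiple access randomness used by $\mathcal{D}$; Lemma~\ref{lem:basic-kpc-randomness-harvesting} then converts $\mathcal{D}$ into a refutation of Conjecture~\ref{conj:logspace-kpc}, completing the contradiction. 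The main obstacle is quantitative: we must keep $\mathcal{D}$'s simulation of $a$ down to $O((\ell k_s)^2)$ multiple access bits so that Lemma~\ref{lem:basic-kpc-randomness-harvesting} actually fires under the restriction $k_s \leq k/\sqrt{\ell}$; the remaining verifications (self-reducibility of ${\sf kG}$, independence of the two induced subgraphs, and the $O(\log \ell)$-space bookkeeping inside $\mathcal{S}'$ and $\mathcal{D}$) are routine.
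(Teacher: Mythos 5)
Your proposal takes essentially the same route as the paper's proof: view the input as ${\sf kPC_D}(\ell\cdot 2k,2k)$, feed the bottom block as the tape for $\mathcal{S}$, and close the gap between $\mathcal{S}$'s behaviour on a truly uniform tape versus a ${\sf kG}(\ell k,1/2,k)$ tape by constructing a small distinguisher for ${\sf kPC_D}(\ell k,k)$ (your $\mathcal{D}$, the paper's $\mathcal{H}$) that internally simulates a fresh ${\sf kG}(\ell k_s,1/2,k_s)$ input from $O((\ell k_s)^2)$ multiple access bits and then appeals to Lemma~\ref{lem:basic-kpc-randomness-harvesting}, with $k_s\le k/\sqrt{\ell}$ being exactly what makes the budget fit. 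Two small presentational points worth noting: the dichotomy $|\alpha-\beta|=o(1)$ versus $|\alpha-\beta|=\Omega(1)$ is not exhaustive for arbitrary sequences, so it is cleaner (and is what the paper does) to first use the conjecture to force $\mathcal{S}'$ to fail and $\mathcal{S}$'s assumed success to directly conclude $\alpha-\beta=\Omega(1)$, rather than case-splitting; and your $k_s\lceil\log\ell\rceil$ bits per group only give an \emph{exactly} uniform clique representative when $\ell$ is a power of two, so as stated you need an approximate-sampling step with $o(1)$ total-variation error (the paper handles this via the approximate-permutation primitive of Lemma~\ref{lem:permutation}).
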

\begin{proof}
Let $\mathcal{S}$ be any multiple access randomized logspace algorithm that uses at most ${\ell \cdot k \choose 2}$ multiple access random bits when run on an instance of ${\sf{kPC_D}}(\ell \cdot k_s,k_s)$. We will show that $\mathcal{S}$ must fail to solve ${\sf{kPC_D}}(\ell \cdot k_s,k_s)$. 

Let $\mathcal{S'}$ be an algorithm which takes as input $\{\ell \cdot 2k,2k,A_G\}$ corresponding to ${\sf{kPC_D}}(\ell \cdot 2k,2k)$. Let $A_s$ (respectively $A_r$) denote the adjacency matrix of the subgraph induced by $A_G$ (the adjacency matrix input to $\mathcal{S'}$) on its first $\ell \cdot k_s$ (respectively last $\ell \cdot k$) vertices. Let $r_G$ be a binary string containing the ${\ell \cdot k \choose 2}$ edge indicator bits for possible edges in the graph corresponding to $A_r$. Let $\mathcal{S'}$ run $\mathcal{S}$ on $A_s$ using $r_G$ as its source of at most ${\ell \cdot k \choose 2}$ multiple access random bits and output the answer given by $\mathcal{S}$.

\begin{itemize}[leftmargin=0pt]
\item \textbf{$\mathcal{S'}$ is a randomized logspace algorithm:} The space complexity analysis of $\mathcal{S'}$ is like in the proof of Lemma~\ref{lem:basic-randomness-harvesting}. We omit the details.
\end{itemize}

Combined with the {\sf kPC-Conj-Space} (Conjecture~\ref{conj:logspace-kpc}) and the fact that $2k = O(\ell^{1-\delta})$, this means $\mathcal{S'}$ fails to solve ${\sf{kPC_D}}(\ell \cdot 2k,2k)$. Formally, \[\underset{\hspace{-1em}\mathrlap{A_G \sim \G(\ell \cdot 2k,1/2),r_{ro}}}{\Prob}\left( \mathcal{S'}(\{\ell \cdot 2k,2k,A_G\},r_{ro}) = 0 \right) +
\underset{\hspace{-1em}\mathrlap{A_G \sim {\sf kG}(\ell \cdot 2k,1/2,2k),r_{ro}}}{\Prob}\left(\mathcal{S'}(\{\ell \cdot 2k,2k,A_G\},r_{ro}) = 1\right) = 1+o(1).\]

\begin{itemize}[leftmargin=0pt]
\item \textbf{$\mathcal{S}$ fails to solve ${\sf{kPC_D}}(\ell \cdot k_s,k_s)$:}
Letting $r_{ma}$ denote a set of ${\ell \cdot k \choose 2}$ independent uniformly random multiple access bits, we want to show that \[\underset{\hspace{-1em}\mathrlap{A_s \sim {\sf G}(\ell \cdot k_s,1/2),r_{ro},r_{ma}}}{\Prob}\left( \mathcal{S}(\{\ell \cdot k_s,k_s,A_s\},r_{ro},r_{ma}) = 0 \right) +
\underset{\hspace{-1em}\mathrlap{A_s \sim {\sf kG}(\ell \cdot k_s,1/2,k_s),r_{ro},r_{ma}}}{\Prob}\left( \mathcal{S}(\{\ell \cdot k_s,k_s,A_s\},r_{ro},r_{ma}) = 1 \right)=1+o(1).\]
\begin{enumerate}
\item If the input graph $A_G$ is drawn from ${\sf G}(\ell \cdot 2k,1/2)$, then $A_s \sim {\sf G}(\ell \cdot k_s,1/2)$ and $r_G$ is a set of independent uniformly random bits. Hence we can relate the performance of $\mathcal{S'}$ to the performance of $\mathcal{S}$ to obtain \[\underset{\hspace{-1em}\mathrlap{A_s \sim {\sf G}(\ell \cdot k_s,1/2),r_{ro},r_{ma}}}{\Prob}\left( \mathcal{S}(\{\ell \cdot k_s,k_s,A_s\},r_{ro},r_{ma}) = 0 \right)=\underset{\hspace{-1em}\mathrlap{A_G \sim \G(\ell \cdot 2k,1/2),r_{ro}}}{\Prob}\left( \mathcal{S'}(\{\ell \cdot 2k,2k,A_G\},r_{ro}) = 0 \right).\]
\item If the input graph $A_G$ is drawn from ${\sf kG}(\ell \cdot 2k,1/2,2k)$, then we do have $A_s \sim {\sf kG}(\ell \cdot k_s,1/2,k_s)$ as we might hope. However, $r_G$ is not a collection of independent uniformly random bits. This is because $A_r \sim {\sf kG}(\ell \cdot k,1/2,k)$. Hence $\mathcal{S'}$ is actually using a source of multiple access random bits that is not truly independent and uniform.
\end{enumerate}

Luckily, efficient algorithms should not be able to distinguish between ${\sf kG}(\ell \cdot k,1/2,k)$ and a source of truly random bits. This allows us to relate the performance of $\mathcal{S'}$ when using $A_r$ as its source of multiple access randomness to its performance when using a truly random source of multiple access bits. If 
\begin{align}
\label{eqn:boost-proof}
\lvert \underbrace{\underset{\hspace{-1em}\mathrlap{A_s \sim {\sf kG}(\ell \cdot k_s,1/2,k_s),r_{ro},r_{ma}}}{\Prob}\left( \mathcal{S}(\{\ell \cdot k_s,k_s,A_s\},r_{ro},r_{ma}) = 1 \right)}_{:=1-\mathcal{S}_0} - \underset{\hspace{-1em}\mathrlap{A_G \sim {\sf kG}(\ell \cdot 2k,1/2,2k),r_{ro}}}{\Prob}\left(\mathcal{S'}(\{\ell \cdot 2k,2k,A_G\},r_{ro}) = 1\right) \rvert  = o(1),
\end{align} 
then clearly we have the desired conclusion that $\mathcal{S}$ fails to solve ${\sf{kPC_D}}(\ell \cdot k_s,k_s)$. We now show that (\ref{eqn:boost-proof}) must hold. Let $r_G$ denote the dependent collection of ${\ell \cdot k \choose 2}$ multiple access random bits obtained from $A_r \sim {\sf kG}(\ell \cdot k,1/2,k)$. By construction, the output of $\mathcal{S'}$ is identical to the output of $\mathcal{S}$ when run with $r_G$ as its source of multiple access randomness, which means \[\underbrace{\underset{\hspace{-1em}\mathrlap{A_s \sim {\sf kG}(\ell \cdot k_s,1/2,k_s),r_{ro},r_{G}}}{\Prob}\left( \mathcal{S}(\{\ell \cdot k_s,k_s,A_s\},r_{ro},r_{G}) = 1 \right)}_{:=\mathcal{S}_1} = \underset{\hspace{-1em}\mathrlap{A_G \sim {\sf kG}(\ell \cdot 2k,1/2,2k),r_{ro}}}{\Prob}\left(\mathcal{S'}(\{\ell \cdot 2k,2k,A_G\},r_{ro}) = 1\right).\] Hence, showing (\ref{eqn:boost-proof}) is equivalent to showing $\mathcal{S}_0 + \mathcal{S}_1 = 1+o(1)$. We will reason about $\mathcal{S}_0$ and $\mathcal{S}_1$ using a hypothetical algorithm $\mathcal{H}$ whose input $\{\ell \cdot k, k,A_{h_G}\}$ is an instance of the problem ${\sf kPC_D}(\ell \cdot k,k)$. $\mathcal{H}$ outputs the answer obtained by running $\mathcal{S}$ using its input $A_{h_G}$ as the multiple access randomness $\mathcal{S}$ needs and $\{\ell \cdot k_s,k_s,A_{h_r}\}$ as the `input' to $\mathcal{S}$. Here $A_{h_r}$ is an adjacency matrix on $\ell \cdot k_s$ vertices we will define later.

We show that $\mathcal{H}$ can be implemented as a multiple access randomized logspace algorithm that uses $r := {\ell \cdot k_s \choose 2} + k_s \cdot (10 \ell (\lceil \log \ell \rceil)^2)$ multiple access random bits. By assumption, $\mathcal{H}$ can compute $\ell \cdot k_s,k_s$ from its input using $O(\log \ell)$ bits of space. It can also clearly provide $\mathcal{S}$ with multiple access to its input adjacency matrix $A_{h_G}$ for use as randomness. Since $\mathcal{S}$ is a multiple access randomized logspace machine and the inputs to all problems we consider are of size $\poly(\ell)$, we only need to show that  multiple access to the (yet to be defined) matrix $A_{h_r}$ takes $O(\log \ell)$ bits of space.

Of the $r$ multiple access random bits, let the first ${\ell \cdot k_s \choose 2}$ be the edge indicators of an Erd\H{o}s-R\'enyi ${\sf G}(\ell \cdot k_s,1/2)$ graph. Partition the rest into $k_s$ sets of $(10 \ell (\lceil \log \ell \rceil)^2)$ independent uniform random bits. For all $j \in [k_s]$, use the $j^{th}$ such set to sample an efficiently computable random function (which is approximately a uniformly random permutation) $\pi_j : [\ell] \rightarrow [\ell]$ using Lemma~\ref{lem:permutation}. Because these functions are all sampled independently, the `sub-additivity' or `tensorization' property of total variation distance \cite[Fact 3.1]{brennan2019optimal} implies that the total variation distance between the set of functions $\pi_j$ obtained from Lemma~\ref{lem:permutation} and a set of independent uniformly random permutations is $O(k_s/\ell^4) = o(1)$.

We now define $A_{h_r}$ so that it is `approximately' distributed as ${\sf kPC_D}(\ell \cdot k_s,1/2,k_s)$. Let it be the adjacency matrix of the $\ell \cdot k_s$ vertex graph whose edge set is as follows. Given two vertices $(j_1-1) \cdot \ell + i_1, (j_2-1) \cdot \ell + i_2$ with $j_1,j_2 \in [k_s]$ and $i_1,i_2 \in [\ell]$, if $\pi_{j_1}(i_1) = \pi_{j_2}(i_2) = 1$, there is an edge between the two vertices. Otherwise, there is an edge between the two vertices if and only if there is an edge between the corresponding two vertices in the Erd\H{o}s-R\'enyi ${\sf G}(\ell \cdot k_s,1/2)$ graph sampled using the the first ${\ell \cdot k_s \choose 2}$ multiple access random bits available to $\mathcal{H}$.

It is straightforward to observe that because the functions $\pi_j$ can all be implemented using an additional $O(\log \ell)$ bits of space (Lemma~\ref{lem:permutation}), the algorithm $\mathcal{H}$ can provide the algorithm $\mathcal{S}$ with multiple access to the adjacency matrix $A_{h_r}$ using $O(\log \ell)$ bits of working space. This completes the proof that $\mathcal{H}$ can be implemented as a multiple access randomized logspace algorithm that uses $r$  multiple access random bits. Because \[r \leq 2{\ell \cdot k_s \choose 2} \leq (\ell\cdot k_s)^2 \leq  \frac{\ell \cdot k \cdot (2k)}{2} \leq  {\ell \cdot (k+1) \choose 2} - {\ell \cdot (k+1)-2k \choose 2},\] Lemma~\ref{lem:basic-kpc-randomness-harvesting} implies that our hypothetical algorithm $\mathcal{H}$ must fail to solve ${\sf kPC_D}(\ell \cdot k,k)$. Formally, this means
\[
\underbrace{\underset{\hspace{-1em}\mathrlap{A_G \sim {\sf G}(\ell \cdot k,1/2),~r_{ro},r_{ma}}}{\Prob}\left( \mathcal{H}(\{\ell \cdot k,k,A_G\},r_{ro},r_{ma}) = 0 \right)}_{:=\mathcal{H}_0} +
\underbrace{\underset{\hspace{-1em}\mathrlap{A_G \sim {\sf kG}(\ell \cdot k,1/2,k),~r_{ro},r_{ma}}}{\Prob}\left(\mathcal{H}(\{\ell \cdot k,k,A_G\},r_{ro},r_{ma}) = 1\right)}_{:=\mathcal{H}_1} = 1+o(1)   .\]

Recall that that the total variation distance between the set of functions $\pi_j$ used by $\mathcal{H}$ and a set of independent uniformly random permutations is $o(1)$. Further, if $\mathcal{H}$ used independent uniformly random permutations instead of the $\pi_j$'s from Lemma~\ref{lem:permutation}, $A_{h_r}$ would be distributed exactly as ${\sf kPC_D}(\ell \cdot k_s,1/2,k_s)$. Because of how $\mathcal{H}$ was constructed, the data processing inequality for total variation distance \cite[Fact 3.1]{brennan2019optimal} thus implies $\lvert \mathcal{S}_i-\mathcal{H}_i \rvert = o(1)$ for $i \in \{0,1\}$. This completes our proof.\qedhere
\end{itemize}
\end{proof}

\subsection{Clique Leakage $k$-Partite Planted Clique}
\label{subsec:rand-from-clkpc}

\begin{definition}[Clique Leakage $k$-Partite PC graph distribution: ${\sf clkG}(\ell \cdot k, 1/2,k)$]\label{def:clkpc}\ \\
This is the $k$-partite planted clique graph distribution ${\sf kG}(\ell \cdot k, 1/2,k)$ conditioned on the event that the first vertex in the graph (i.e the vertex named $1$) is in the set of planted clique vertices.\\
\end{definition}

\begin{definition}[Clique Leakage $k$-Partite PC Detection Problem: ${\sf{clkPC_D}}(\ell \cdot k, k)$]\label{def:clkpc-detection}\ \\
This is the hypothesis testing problem defined analogous to ${\sf{kPC_D}}(\ell \cdot k, k)$ using ${\sf clkG}(\ell \cdot k, 1/2,k)$ instead of ${\sf kG}(\ell \cdot k, 1/2,k)$.\\
\end{definition}

\begin{conjecture}[Clique Leakage Logspace $k$-Partite PC Conjecture: {\sf clkPC-Conj-Space}]\label{conj:logspace-clkpc}\ \\
This is the conjecture analogous to {\sf kPC-Conj-Space} involving ${\sf{clkPC_D}}(\ell \cdot k, k)$ instead of  ${\sf{kPC_D}}(\ell \cdot k, k)$.\\
\end{conjecture}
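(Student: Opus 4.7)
The plan is to build a case for {\sf clkPC-Conj-Space} by deriving it from the more conservative {\sf kPC-Conj-Space} together with a detection-to-recovery equivalence in the logspace regime, along the lines sketched in Section~\ref{subsec:security-of-cl}. Suppose for contradiction that some randomized logspace algorithm $\A$ distinguishes ${\sf G}(\ell \cdot k, 1/2)$ from ${\sf clkG}(\ell \cdot k, 1/2, k)$ with advantage $\Omega(1)$. The first step is to upgrade $\A$ into a randomized logspace algorithm that solves the clique-leakage \emph{recovery} problem, using a detection-to-recovery lift analogous to the one of \cite{mardia2020space}; this step exploits the fact that the $k$-partite planted clique is exactly self-reducible (any prefix of classes yields another $k$-partite planted clique instance), so one can peel off clique vertices one class at a time while only calling the detection oracle on natural sub-instances.

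Next, the clique leakage is removed by enumeration. Given an ordinary (non-leakage) instance of ${\sf kPC_D}(\ell \cdot k, k)$, iterate over all $\ell$ candidate vertices $v$ in the first partition class; for each $v$, relabel the graph so that $v$ becomes vertex $1$ and run the clique-leakage recovery algorithm obtained in the previous step. Accept if any iteration returns a set that is actually a $k$-clique in the input. On a null Erd\H{o}s--R\'enyi instance no $k$-clique of the required partite form exists with high probability, so every iteration fails; on a planted instance, exactly one $v$ equals the true clique vertex in the first class, and on that iteration the recovery algorithm sees a legitimate clique-leakage instance and (by assumption) returns the planted clique. The enumeration counter fits inside $O(\log \ell)$ bits, and checking whether a returned vertex set is a clique is logspace with multiple access to the input, so the composition remains a randomized logspace algorithm, contradicting {\sf kPC-Conj-Space}.

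The main obstacle I anticipate is the detection-to-recovery lift in the logspace model. The argument of \cite{mardia2020space} is stated for the vanilla planted clique problem and uses randomness in a nontrivial way; transporting it to the $k$-partite clique-leakage setting requires verifying that (i) the confidence-boosting step that converts $\Omega(1)$ advantage into high confidence can be carried out with read-once randomness only (here one can invoke a generator such as Nisan's, or piggy-back on the randomness-harvesting of Lemma~\ref{lem:basic-kpc-randomness-harvesting} applied to the graph itself), and (ii) the iterative vertex-identification procedure remains faithful when one clique vertex is already exposed, i.e.\ that fixing vertex $1$ in the planted clique does not create statistical artifacts that break the conditioning arguments underlying the lift. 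Because a constant amount of conditioning shifts the planted distribution only by an $O(1)$-factor correction to the degree-$D$ likelihood ratio, this should go through, but the bookkeeping around read-once versus multiple-access randomness is delicate and is where the real work lies.

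Since the statement is ultimately a hardness assumption rather than an unconditional theorem, I would complement the conditional reduction above with low-degree evidence, directly computing the degree-$D$ likelihood ratio between ${\sf G}(\ell k, 1/2)$ and ${\sf clkG}(\ell k, 1/2, k)$ and showing that the $k = O(\ell^{1-\delta})$ threshold established without leakage in \cite{pmlr-v125-brennan20a} survives the absorption of one known clique vertex into the observer's prior (which is equivalent to a $(k-1)$-clique problem on the remaining classes up to constants). Together, the conditional reduction and the low-degree calculation constitute the kind of justification one can realistically hope to produce; an unconditional proof of {\sf clkPC-Conj-Space} would require unconditional lower bounds against randomized logspace for a planted statistical problem, which is beyond what current techniques deliver.
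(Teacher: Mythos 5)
Since the statement is a conjecture, the paper offers no proof but only evidence (Section~\ref{subsec:security-of-cl} and Lemma~\ref{lem:clkpc-recovery-hardness}), and your proposal reproduces that evidence structure almost exactly: the enumeration over candidate first-class vertices with relabeling is precisely the reduction in Lemma~\ref{lem:clkpc-recovery-hardness}, the appeal to \cite{mardia2020space} for a detection-to-recovery lift matches the paper's heuristic reasoning, and the suggestion of a direct low-degree computation mirrors the paper's reliance on \cite{pmlr-v125-brennan20a}. One caveat worth flagging: the paper deliberately proves only the direction ``deterministic logspace leakage-recovery $\Rightarrow$ logspace non-leakage detection'' and leaves the detection-to-recovery lift for leakage variants as an explicit \emph{belief} rather than a lemma, whereas your write-up treats that lift as bookkeeping one could push through; the paper's more conservative stance is the reason the statement is labeled a conjecture rather than a corollary of {\sf kPC-Conj-Space}.
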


\begin{lemma}[Multiple access randomness from the Clique Leakage $k$-Partite PC input]\label{lem:clkpc-randomness-harvesting}\ \\
Let $\ell = \omega(1)$, $k$ and $k_s$ be any non-decreasing positive integer sequences such that $k = O(\ell^{1-\delta})$ for some constant $\delta > 0$ and $k_s \leq k$. Given $\ell \cdot k$ and $k$, assume we can deterministically compute and store $k_s$ using an additional $O(\log \ell)$ bits of space. 

If the {\sf clkPC-Conj-Space} (Conjecture~\ref{conj:logspace-clkpc}) is true, no multiple access randomized logspace algorithm using at most ${\ell \cdot k \choose 2}$ multiple access random bits can solve ${\sf{clkPC_D}}(\ell \cdot k_s,k_s)$.
\end{lemma}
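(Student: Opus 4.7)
The plan is to follow the template of Lemma~\ref{lem:basic-randomness-harvesting} and Lemma~\ref{lem:advanced-kpc-randomness-harvesting}: construct a randomized logspace algorithm $\mathcal{S}'$ that solves ${\sf{clkPC_D}}(\ell \cdot 4k, 4k)$ whenever a multiple access randomized logspace algorithm $\mathcal{S}$ solves ${\sf{clkPC_D}}(\ell \cdot k_s, k_s)$ using at most ${\ell \cdot k \choose 2}$ multiple access random bits. Since $4k = O(\ell^{1-\delta})$, this will contradict {\sf clkPC-Conj-Space}.

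On input $\{\ell \cdot 4k, 4k, A_G\}$, the algorithm $\mathcal{S}'$ first identifies $A_s$, the subgraph of $A_G$ induced on the first $\ell \cdot k_s$ vertices; the self-reducibility of the clique leakage $k$-partite planted clique distribution (inherited from its $k$-partite cousin) ensures that $A_s \sim {\sf G}(\ell \cdot k_s, 1/2)$ under the null and $A_s \sim {\sf clkG}(\ell \cdot k_s, 1/2, k_s)$ under the alternate. Next, let $V_1$ denote the set of vertices in $\{\ell \cdot k_s + 1, \ldots, \ell \cdot 4k\}$ that are \emph{not} adjacent to vertex~$1$ in $A_G$. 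Let $r_G$ be the string listing the edge indicators of $A_G$ restricted to pairs in $V_1$, enumerated in lexicographic order on $V_1 \times V_1$. If $|V_1| \geq \ell \cdot k$, $\mathcal{S}'$ runs $\mathcal{S}$ on $\{\ell \cdot k_s, k_s, A_s\}$ supplying the first ${\ell \cdot k \choose 2}$ bits of $r_G$ as multiple access randomness and returns $\mathcal{S}$'s answer; otherwise it outputs $0$.

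Correctness rests on two observations. First, in the alternate case vertex~$1$ is guaranteed to be a planted clique vertex and all other clique vertices are adjacent to it, so $V_1$ contains no clique vertex; consequently the induced subgraph on $V_1$ consists entirely of non-clique edges, each independent ${\sf Ber}(1/2)$. In the null case the same conclusion is immediate. In both cases, conditional on $V_1$, the bits of $r_G$ are i.i.d.~uniform and independent of $A_s$, since $A_s$ depends only on edges inside $[\ell \cdot k_s]$ and the clique indicators on $[\ell \cdot k_s]$, all disjoint from the edges determining $V_1$ and the edges within $V_1$. Second, $|V_1|$ stochastically dominates ${\sf Bin}(3k(\ell-1), 1/2)$ in both cases (in the alternate case only the $3k$ clique vertices from groups $k_s+1, \ldots, 4k$ are forced to be neighbors of vertex~$1$), so by a standard Chernoff bound $|V_1| \geq \ell \cdot k$ with probability $1-o(1)$. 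Together these imply that the joint distribution of $(A_s, r_G)$ on the event $\{|V_1| \geq \ell \cdot k\}$ agrees with a genuine input/randomness pair for $\mathcal{S}$ on a ${\sf{clkPC_D}}(\ell \cdot k_s, k_s)$ instance, so the success probability of $\mathcal{S}'$ matches that of $\mathcal{S}$ up to $o(1)$ in each hypothesis.

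The main technical obstacle is verifying that $\mathcal{S}'$ is a randomized logspace algorithm, specifically that it can provide $\mathcal{S}$ with multiple access to the reindexed string $r_G$ in logarithmic working space. Given an index $i \in [{\ell \cdot k \choose 2}]$ queried by $\mathcal{S}$, $\mathcal{S}'$ must produce the $i$-th edge indicator of the subgraph of $A_G$ on $V_1$, where $V_1$ is defined implicitly by non-adjacency to vertex~$1$. This can be handled by streaming through pairs $(u,v)$ with $u < v$ in $\{\ell \cdot k_s + 1, \ldots, \ell \cdot 4k\}$, checking for each whether both endpoints are non-neighbors of vertex~$1$ via two single-bit look-ups into $A_G$, and incrementing a counter until the counter reaches $i$. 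The pointers $u, v$ and the counter fit in $O(\log \ell)$ bits. The same streaming approach, with an appropriately enlarged counter, computes $|V_1|$ for the thresholding test. With these primitives in hand, $\mathcal{S}'$ is a randomized logspace algorithm, and the contradiction argument proceeds as in Lemmas~\ref{lem:basic-randomness-harvesting} and~\ref{lem:advanced-kpc-randomness-harvesting}.
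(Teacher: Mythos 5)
Your proof is correct and follows essentially the same route as the paper: view the subgraph on the first $\ell\cdot k_s$ vertices as the ${\sf clkPC_D}(\ell\cdot k_s,k_s)$ instance, harvest multiple access randomness from edges among vertices outside this prefix that are non-adjacent to vertex $1$ (guaranteed non-clique by the leakage), and condition on having at least $\ell\cdot k$ such vertices. One small slip: groups $k_s+1,\dots,4k$ contain $4k-k_s$ clique vertices, not $3k$; this is harmless since $4k-k_s\geq 3k$, so your stochastic-domination bound $|V_1|\succeq{\sf Bin}(3k(\ell-1),1/2)$ still holds and the rest goes through unchanged.
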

\begin{proof}
Let $\mathcal{S}$ be any multiple access randomized logspace algorithm that uses at most ${\ell \cdot k \choose 2}$ multiple access random bits when run on an instance of ${\sf{clkPC_D}}(\ell \cdot k_s,k_s)$. We will show that $\mathcal{S}$ fails to solve ${\sf{clkPC_D}}(\ell \cdot k_s,k_s)$. 

Let $\mathcal{S'}$ be an algorithm which takes as input $\{\ell \cdot 4k,4k,A_G\}$ corresponding to an instance of ${\sf{clkPC_D}}(\ell \cdot 4k,4k)$. Let $A_s$ denote the adjacency matrix of the subgraph induced by $A_G$ (the adjacency matrix input to $\mathcal{S'}$) on its first $\ell \cdot k_s$ vertices. Let $A_r$ denote the adjacency matrix of the subgraph induced by $A_G$ on the following vertex subset. This vertex subset consists of those of the last $\ell \cdot 3k$ vertices of $A_G$ which \emph{do not have an edge} to its first vertex. Because $k_s \leq k$, there are no common vertices shared by $A_s$ and $A_r$. Let $r_G$ be the binary string containing the edge indicator bits for possible edges in the graph corresponding to $A_r$. Let $\mathcal{S'}$ run $\mathcal{S}$ on $A_s$ using $r_G$ as its source of at most ${\ell \cdot k \choose 2}$ multiple access random bits\footnote{If $A_r$ has fewer than $\ell \cdot k$ vertices, $\mathcal{S'}$ can output any arbitrary fixed answer.} and output the answer given by $\mathcal{S}$.

\begin{enumerate}
\item If the input graph to $\mathcal{S'}$ has the null distribution $A_G \sim \G(\ell \cdot 4k,1/2)$, the input graph to $\mathcal{S}$ is an instance of $A_s \sim \G(\ell \cdot k_s,1/2)$ with $r_G$ a string of independent uniform random bits.
\item If the input graph to $\mathcal{S'}$ has the planted distribution $A_G \sim {\sf clkG}(\ell \cdot 4k,1/2,4k)$, the input graph to $\mathcal{S}$ is an instance of $A_s \sim {\sf clkG}(\ell \cdot k_s,1/2,k_s)$ with $r_G$ a string of independent uniform random bits that are independent of all other randomness. The former is because the first vertex of $A_G$ is also the first vertex of $A_s$. The latter is by construction since none of the vertices in $A_r$ can be in the planted clique vertex set of $A_G$.
\end{enumerate}

\begin{itemize}[leftmargin=0pt]
\item \textbf{$\mathcal{S'}$ is a randomized logspace algorithm:} Using $O(\log \ell)$ space, $\mathcal{S'}$ can count how many vertices there are in $A_r$ as well as provide multiple access to $A_r$ by using its ability to check if any given vertex has an edge to the first vertex in $A_G$. The space complexity analysis of $\mathcal{S'}$ now proceeds like in the proof of Lemma~\ref{lem:basic-randomness-harvesting}. We omit the details. Combined with the {\sf clkPC-Conj-Space} (Conjecture~\ref{conj:logspace-clkpc}) and the fact that $4k = O(\ell^{1-\delta})$, this means $\mathcal{S'}$ fails to solve ${\sf{clkPC_D}}(\ell \cdot 4k,4k)$.
\item \textbf{$\mathcal{S}$ fails to solve ${\sf{clkPC_D}}(\ell \cdot k_s,k_s)$:} Let ${\sf EnoughVertices}$ denote the event that there are at least $\ell \cdot k$ vertices in the subgraph corresponding to the adjacency matrix $A_r$. Standard tail bounds for binomial random variables imply that $\Pr({\sf EnoughVertices}) = 1-o(1)$ irrespective of whether $A_G \sim \G(\ell \cdot 4k,1/2)$ or $A_G \sim {\sf clkG}(\ell \cdot 4k,1/2,4k)$. Conditioned on ${\sf EnoughVertices}$, the random string $r_G$ has enough random bits for $\mathcal{S}$'s multiple access needs, and the probability that $\mathcal{S'}$ outputs the correct answer is equal to the probability that $\mathcal{S}$ outputs the correct answer\footnote{Crucially, the randomness that determines whether ${\sf EnoughVertices}$ occurs is independent of the randomness in $A_s$ and only affects the size of $r_G$, not its independence and uniformity.}. Since $\mathcal{S'}$ fails and $\Pr({\sf EnoughVertices}) = 1-o(1)$, $\mathcal{S}$ must fail to solve ${\sf clkPC_D}(\ell \cdot k_s,k_s)$.\qedhere
\end{itemize}
\end{proof}

\subsection{Clique Leakage Hypergraph Planted Clique}
\label{subsec:rand-from-clhpc}

\begin{definition}[Erd\H{o}s-R\'enyi hypergraph distribution: ${\sf HG}^s(n, 1/2)$]\label{def:h-er}\ \\
Let $G = ([n],E)$ be a $s$-uniform hypergraph with vertex set of size $n$. The hyperedge set $E$ is created by including each possible hyperedge independently with probability $\frac{1}{2}$. The distribution on $s$-uniform hypergraphs thus formed is denoted ${\sf HG}^s(n, 1/2)$.\\
\end{definition}

\begin{definition}[Hypergraph Planted Clique distribution: ${\sf HG}^s(n, 1/2,k)$]\label{def:hpc}\ \\
Let $G = ([n],E)$ be a $s$-uniform hypergraph with vertex set of size $n$. Let $K \subset [n]$ be a set of size $k$ chosen uniformly at random from all ${n \choose k}$ subsets of size $k$. For all sets of $s$ distinct vertices $u_1,u_2,...,u_s \in K$, we add the hyperedge $(u_1,u_2,...,u_s)$ to $E$. 
	For all other possible hyperedges, we add the hyperedge to $E$ independently with probability $\frac{1}{2}$. The distribution on $s$-uniform hypergraphs thus formed is denoted ${\sf HG}^s(n, 1/2,k)$.\\
\end{definition}

\begin{definition}[Clique Leakage Hypergraph Planted Clique distribution: ${\sf clHG}^s(n, 1/2,k)$]\label{def:clhpc}\ \\
This is the hypergraph planted clique distribution ${\sf HG}^s(n, 1/2,k)$ conditioned on the event that the first $s-2$ vertices are in the planted clique vertex set $K$.\\
\end{definition}

\begin{definition}[Clique Leakage Hypergraph PC Detection Problem: ${\sf clHPC}_{\sf D}^s(n,k)$]\label{def:clhpc-detection}\ \\
Given non-decreasing positive integer sequences $n=\omega(1)$ and $k \leq n$, this is the hypothesis testing problem induced by \[{\sf H}_0(n) = \left\{{\sf HG}^s(n,1/2)\right\} \hspace{0.1cm} \text{ and } \hspace{0.1cm} {\sf H}_1(n) = \left\{{\sf clHG}^s(n, 1/2,k)\right\}\] where the input to problem is the quadruple given by the number of vertices $n$, the size of the planted clique $k$, the hyperedge set of the random hypergraph, and $s$.\\
\end{definition}

\begin{conjecture}[Clique Leakage Logspace Hypergraph PC Conjecture: {\sf clHPC-Conj-Space}]\label{conj:logspace-clhpc}\ \\
Let $\A$ be any randomized logspace algorithm whose read-once uniformly random bit string is denoted $r_{ro}$. Let $s \geq 3$ be any constant integer and $n=\omega(1)$, $k \leq n$ be any sequences of non-decreasing positive integers such that $k = O(n^{\frac{1}{2}-\delta})$ for some constant $\delta > 0$. Then \[
	\underset{\hspace{-1em}\mathrlap{A_G \sim {\sf HG}^s(n,1/2),~r_{ro}}}{\Prob}\left( \A(\{n,k,A_G,s\},r_{ro}) = 0 \right) +
	\underset{\hspace{-1em}\mathrlap{A_G \sim {\sf clHG}^s(n,1/2,k),~r_{ro}}}{\Prob}\left(\A(\{n,k,A_G,s\},r_{ro}) = 1\right) = 1+o(1).\]\\
\end{conjecture}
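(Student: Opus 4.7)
The final statement is a hardness conjecture rather than a theorem, so the appropriate ``proof proposal'' is a plan for marshalling evidence in its favor, mirroring the discussion in Section~\ref{subsec:security-of-cl} for the $k$-partite case. The plan has three strands that I would pursue in order.

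First, I would lean on the existing hardness evidence for the non-leakage hypergraph planted clique problem ${\sf HPC}_{\sf D}^s(n,k)$ at $k = O(n^{1/2-\delta})$, which is supported by the low-degree likelihood analysis of \cite{pmlr-v125-brennan20a} and the Markov chain lower bounds of \cite{luo2020tensor}. The clique leakage variant only reveals the identity of $s-2$ planted vertices, which is a constant number of vertices. The heuristic argument is that a null-instance input is symmetric in its vertex labels, so ``learning'' that a particular fixed set of $s-2$ vertices is planted only shifts the distribution by a negligible amount; more formally, $\mathrm{TV}({\sf clHG}^s(n,1/2,k),{\sf HG}^s(n,1/2,k))$ is bounded by a function of $s$ and $k/n$ that does not change the asymptotic threshold at $k = \Theta(\sqrt{n})$. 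Any randomized logspace algorithm that exploited the leakage could therefore be converted into one for the non-leakage problem by guessing the first $s-2$ planted vertices, at a cost of only a $O(n^{s-2})$ factor in success probability on a planted instance, which is a polynomial overhead when $s$ is constant.

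Second, I would invoke the forthcoming reduction from the clique leakage \emph{recovery} problem to the no-leakage \emph{detection} problem (Lemma~\ref{lem:clhpc-recovery-hardness}, sketched in Section~\ref{subsec:security-of-cl}): iterating over all $O(n^{s-2})$ candidate seed sets and running a hypothetical deterministic logspace recovery algorithm on each would yield a deterministic logspace algorithm for ${\sf HPC}_{\sf D}^s(n,k)$, contradicting the hypergraph analogue of the planted clique conjecture. Combined with the belief that space-bounded detection and recovery are equivalent for the planted clique problem and its close variants, as established in the polynomial-time setting and extended in \cite{mardia2020space}, this furnishes a self-consistent chain of assumptions that pins the conjecture to an already well-studied hardness hypothesis.

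Third, the main obstacle to turning any of this into an unconditional theorem is exactly the obstacle faced by all planted clique conjectures: there is no technique currently known for proving unconditional average-case hardness against randomized logspace. The tractable concrete next step, which I would pursue, is to adapt the low-degree likelihood ratio computation of \cite{pmlr-v125-brennan20a} to the clique-leakage setting by conditioning on the fixed seed vertices $\{1,\dots,s-2\}$ and re-expanding the leakage-conditioned planted measure in the Fourier basis on hyperedges. The bulk of the Fourier mass is supported on monomials involving at least one vertex outside the seed, so the calculation should go through verbatim modulo lower-order terms, giving low-degree hardness at the conjectured threshold $k = O(n^{1/2-\delta})$ and thereby providing the same style of evidence for the leakage variant that \cite{pmlr-v125-brennan20a} provided for the unleaked variant.
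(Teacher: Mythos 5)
You correctly recognize that this is a conjecture, not a theorem, and that what is on offer is evidence rather than proof. Your second strand---reducing the clique-leakage \emph{recovery} problem to the unleaked \emph{detection} problem by iterating over candidate seed sets---is precisely the argument the paper makes (Section~\ref{subsec:security-of-cl} and Lemma~\ref{lem:clhpc-recovery-hardness}), and your third strand (extending the low-degree likelihood calculation of \cite{pmlr-v125-brennan20a} to the leakage-conditioned planted measure) is a sensible concrete extension that the paper does not carry out. So the broad shape of your plan matches the paper, with one additional suggestion.

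However, your first strand contains a real technical error. You claim that $\mathrm{TV}({\sf clHG}^s(n,1/2,k),{\sf HG}^s(n,1/2,k))$ is small enough to be negligible. This is false: the leakage distribution ${\sf clHG}^s$ conditions the uniformly random clique location on the event $\{1,\dots,s-2\}\subseteq K$, which occurs under ${\sf HG}^s(n,1/2,k)$ with probability roughly $(k/n)^{s-2}=o(1)$, so the total variation distance between the two planted distributions is $1-o(1)$, not $o(1)$. Relatedly, the ``guess the seed and lose a polynomial factor in success probability'' heuristic does not go through for the \emph{detection} problem the way it would for a search problem: a detection algorithm only outputs a bit, and on the $n^{s-2}-1$ wrong guesses in the planted case the relabeled instance is not distributed as ${\sf clHG}^s$ at all, so the algorithm's behavior is unspecified; and aggregating $n^{s-2}$ trials by OR-ing the outputs wrecks the $o(1)$ error on the null side. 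This is exactly why the paper routes the argument through \emph{recovery} rather than detection: a candidate clique output by a recovery algorithm can be independently \emph{verified} to be a size-$k$ clique, which immunizes the reduction against out-of-distribution behavior on wrong seeds and against error amplification on the null side (where, with high probability, no $k$-clique exists at all by Lemma~\ref{lem:Clique-size-in-Hyp-ER}). If you drop the TV claim and the detection-guessing heuristic and lean entirely on the recovery-based reduction, your strand 1 and strand 2 collapse into the paper's argument.
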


\begin{lemma}[Arbitrarily large polynomial amounts of multiple access randomness from the Clique Leakage Hypergraph Planted Clique input]\label{lem:clhpc-randomness-harvesting}\ \\
Let $n=\omega(1)$ and $k$ be any non-decreasing positive integer sequences such that $k=O(n^{\frac{1}{2}-\delta})$ for some positive constant $\delta > 0$.

If {\sf clHPC-Conj-Space} (Conjecture~\ref{conj:logspace-clhpc}) is true, no multiple access randomized logspace algorithm can solve the planted clique detection problem ${\sf PC_D}(n,k)$.
\end{lemma}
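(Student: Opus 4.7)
The plan is to prove the contrapositive. Suppose for contradiction that some multiple access randomized logspace algorithm $\mathcal{S}$ solves ${\sf PC_D}(n,k)$ using at most $t(n) \leq n^c$ multiple access random bits, for some constant $c \geq 1$. I fix the constant $s = c+2 \geq 3$, which gives $\binom{n+s-2}{s-1} = \Theta(n^{s-1}) \geq t(n)$ for all sufficiently large $n$. Because $s$ is constant and $k = O(n^{1/2-\delta})$, we have $k+s-2 = O((n+s-1)^{1/2-\delta/2})$, so Conjecture~\ref{conj:logspace-clhpc} applies at the shifted parameters. From $\mathcal{S}$ I will build a randomized logspace algorithm $\mathcal{S}'$ that solves ${\sf clHPC}_{\sf D}^s(n+s-1,\,k+s-2)$, contradicting the conjecture.

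Given input $\{n+s-1,\, k+s-2,\, A_H,\, s\}$, the algorithm $\mathcal{S}'$ simulates $\mathcal{S}$ as follows. It identifies the vertices $\{s-1,\ldots,n+s-2\}$ with $[n]$ and defines an adjacency matrix $A_G$ on $[n]$ by declaring $(u,v)$ an edge iff the hyperedge $(1,2,\ldots,s-2,u,v)$ is present in $A_H$. It treats the $\binom{n+s-2}{s-1}$ hyperedge indicators of $A_H$ containing vertex $n+s-1$ as a multiple access random string for $\mathcal{S}$, using any logspace-computable ranking of the $(s-1)$-subsets of $[n+s-2]$ to convert an index into a hyperedge query. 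Since $s$ is a fixed constant, both the query to $A_G$ and the query to the harvested string translate into $O(\log n)$-space lookups in $A_H$, so $\mathcal{S}'$ is a genuine randomized logspace algorithm. Finally, $\mathcal{S}'$ outputs whatever $\mathcal{S}$ outputs on the simulated input.

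For correctness, let $\mathcal{E}$ denote the event that vertex $n+s-1$ is not in the planted clique of $A_H$. In the null case, $\mathcal{E}$ is vacuous, $A_G \sim \G(n,1/2)$, and the harvested bits are iid Bernoulli$(1/2)$ independent of $A_G$. In the planted case, $\Prob(\mathcal{E}) = 1 - k/(n+1) = 1-o(1)$, and conditioning on $\mathcal{E}$ simultaneously guarantees that (a) every hyperedge through vertex $n+s-1$ has at least one non-clique vertex and is therefore an independent Bernoulli$(1/2)$, (b) these indicators are independent of all hyperedges used to define $A_G$, since those involve only vertices in $\{1,\ldots,n+s-2\}$, and (c) the $k$ non-leaked clique vertices of $A_H$ are uniformly distributed over $\{s-1,\ldots,n+s-2\}$, so $A_G \sim \G(n,1/2,k)$. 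Hence $\mathcal{S}'$ succeeds with probability at least that of $\mathcal{S}$ up to an additive $o(1)$ loss, contradicting Conjecture~\ref{conj:logspace-clhpc}. The main subtlety is that a \emph{single} conditioning event has to produce all of (a)--(c); this is precisely where clique leakage of the first $s-2$ vertices pays off, because without it we would need additional conditioning on the clique membership of vertices $1,\ldots,s-2$, which would bias the conditional distribution of the remaining clique vertices and invalidate (c).
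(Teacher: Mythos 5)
Your proposal is correct and follows essentially the same route as the paper's own proof: choose a constant $s$ large enough that $\binom{n+s-2}{s-1}$ exceeds the polynomial randomness budget of $\mathcal{S}$, reduce from ${\sf clHPC}_{\sf D}^s(n+s-1,k+s-2)$ by reading off the simple graph on vertices $\{s-1,\ldots,n+s-2\}$ via hyperedges through the $s-2$ leaked clique vertices, harvest the hyperedges through vertex $n+s-1$ as multiple access randomness, and condition on that last vertex missing the clique. The only cosmetic difference is that you fix $s=c+2$ from an explicit polynomial bound $n^c$, whereas the paper just asserts the existence of a suitable constant $s$; these are equivalent.
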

\begin{proof}
Our proof will strongly mirror that of Lemma~\ref{lem:basic-randomness-harvesting}.
Let $\mathcal{S}$ be any multiple access randomized logspace algorithm. Since such algorithms can only use $\poly(n)$ amounts of randomness, there must exist a constant positive integer $s \geq 3$ such that for all large enough $n$, the number of multiple access random bits $\mathcal{S}$ uses when run on the input corresponding to ${\sf{PC_D}}(n,k)$ is at most ${n+s - 2 \choose s-1}$. We will show that $\mathcal{S}$ fails to solve ${\sf{PC_D}}(n,k)$.

Let $\mathcal{S'}$ be an algorithm which takes as input the quadruple $\{n+s-1,k+s-2,A_G,s\}$ corresponding to an instance of ${\sf clHPC}_{\sf D}^s(n+s-1,k+s-2)$. Here $A_G$ encodes the hyperedge set of the $s$-uniform hypergraph on $n+s-1$ vertices. Let $A_{\sf PC}$ denote the adjacency matrix of an $n$-vertex graph whose $n$ vertices correspond to the vertex subset $V_{\sf PC} = \{s-1,s,...,n+s-2\}$ of the hypergraph $A_G$. The edge set of $A_{\sf PC}$ depends on the hyperedge set of $A_G$ as follows. For any two distinct vertices $i < j \in V_{\sf PC}$, there is an edge in $A_{\sf PC}$ if and only if the size $s$ hyperedge $(1,2,...,s-2,i,j)$ exists in $A_G$. Let $r_G$ denote the string of size ${n+s - 2 \choose s-1}$ containing the hyperedge indicators for all possible  hyperedges in $A_G$ involving the last vertex (i.e the vertex named $n+s-1$) in $A_G$. Let $\mathcal{S'}$ run $\mathcal{S}$ on $A_{\sf PC}$ using $r_G$ as its source of multiple access random bits\footnote{If $\mathcal{S}$ needs more multiple access random bits than the size of $r_G$ (as is possible if $n$ is not large enough) $\mathcal{S'}$ can output any arbitrary fixed answer.} and output the answer given by $\mathcal{S}$.

\begin{itemize}[leftmargin=0pt]
\item \textbf{$\mathcal{S'}$ is a randomized logspace algorithm:} The space complexity analysis of $\mathcal{S'}$ is essentially like in the proof of Lemma~\ref{lem:basic-randomness-harvesting}. We omit the details. Combined with the {\sf clHPC-Conj-Space} (Conjecture~\ref{conj:logspace-clhpc}) and the fact that $k+s-2 = O((n+s-1)^{\frac{1}{2}-\delta})$, this means $\mathcal{S'}$ fails to solve ${\sf clHPC}_{\sf D}^s(n+s-1,k+s-2)$.
\item \textbf{$\mathcal{S}$ fails to solve ${\sf{PC_D}}(n,k)$:} Let ${\sf LastVertexNotInClique}$ denote the event that the $(n+s-1)^{th}$ vertex in $A_G$ is not in the planted clique set when $A_G \sim {\sf clHG}^s(n+s-1, 1/2,k+s-2)$. Observe that $\Pr({\sf LastVertexNotInClique}) =1-o(1)$. Conditioned on {\sf LastVertexNotInClique}, it is clear that $\mathcal{S}$ (when invoked by $\mathcal{S'}$) is being run on an instance of ${\sf{PC_D}}(n,k)$. This is because 
\begin{enumerate}[leftmargin=0pt]
\item Irrespective of whether the input hypergraph is ${\sf HG}^s(n+s-1,1/2)$ or ${\sf clHG}^s(n+s-1,1/2,k+s-2)$, $r_G$ consists of iid uniform random bits that are independent of $A_{\sf PC}$.
\item If $A_G \sim {\sf HG}^s(n+s-1,1/2)$, then $A_{\sf PC} \sim \G(n,1/2)$. On the other hand, if $A_G \sim {\sf clHG}^s(n+s-1,1/2,k+s-2)$, then $A_{\sf PC} \sim \G(n,1/2,k)$.
\end{enumerate}
Conditioned on {\sf LastVerticesNotInClique}, the probability that $\mathcal{S'}$ outputs the correct answer is equal to the probability that $\mathcal{S}$ outputs the correct answer when run on a instance of ${\sf PC_D}(n,k)$. Since $\mathcal{S'}$ fails and $\Pr({\sf LastVerticesNotInClique}) = 1-o(1)$, $\mathcal{S}$ must fail to solve ${\sf PC_D}(n,k)$.\qedhere
\end{itemize}
\end{proof}

%%%%%%%%%%%%%%%%%%%%%%%%%%%%%%%%%%%%%%%%%%%%%%%%%%%%%%%%%%%%%%%%%%%%%%%%

\section{Evidence for Planted Clique Hardness despite Clique Leakage}
\label{sec:clique-leakage-hardness-evidence}

\begin{definition}[Hypergraph PC Detection Problem: ${\sf HPC}_{\sf D}^s(n,k)$]\label{def:hpc-detection}\ \\
This is the hypothesis testing problem analogous to ${\sf clHPC}_{\sf D}^s(n,k)$ that uses ${\sf HG}^s(n, 1/2,k)$ instead of ${\sf clHG}^s(n, 1/2,k)$.\\
\end{definition}

\begin{conjecture}[Logspace Hypergraph PC Conjecture: {\sf HPC-Conj-Space}]\label{conj:logspace-hpc}\ \\
This is the conjecture analogous to {\sf clHPC-Conj-Space} involving ${\sf HPC}_{\sf D}^s(n,k)$ instead of  ${\sf clHPC}_{\sf D}^s(n,k)$.\\
\end{conjecture}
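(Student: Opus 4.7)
Because the final statement is a conjecture rather than a theorem, the appropriate ``proof proposal'' is a plan for justifying its plausibility, paralleling how the polynomial-time planted clique conjectures have been supported. The plan is to argue for {\sf HPC-Conj-Space} in two complementary ways: by a black-box reduction to a more widely-held belief, and by direct evidence against restricted algorithm classes that already subsume natural randomized logspace strategies.

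The first step is to observe that any randomized logspace algorithm (Definition~\ref{defn:bdd-space-alg}) is, in particular, a randomized polynomial-time algorithm using $\poly(n)$ random bits. Consequently, any algorithm refuting {\sf HPC-Conj-Space} would simultaneously refute the analogous polynomial-time hypergraph planted clique hardness conjecture, which is already supported in \cite{pmlr-v125-brennan20a} via low-degree polynomial lower bounds and in \cite{zhang2018tensor,luo2020tensor} via Markov chain techniques in the spirit of \cite{jerrum1992large}. Thus {\sf HPC-Conj-Space} is strictly weaker than, and implied by, an already widely-believed conjecture, which gives immediate prior evidence.

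For a more self-contained argument I would then carry out three steps in order. First, transcribe the low-degree likelihood calculation for the $s$-uniform hypergraph planted clique model from \cite{pmlr-v125-brennan20a} to confirm that degree-$\polylog(n)$ polynomials cannot distinguish ${\sf HG}^s(n,1/2)$ from ${\sf HG}^s(n,1/2,k)$ when $k = O(n^{\frac{1}{2}-\delta})$. Second, generalize the Markov chain hitting-time arguments of \cite{jerrum1992large,luo2020tensor} to the $s$-uniform setting to rule out the natural local-search style algorithms for finding a planted clique in this regime. Third, make the heuristic observation that randomized logspace is a more restrictive computational model than either of these classes (limited to $O(\log n)$ working bits and read-once randomness), so any impossibility result against them is evidence for the strictly stronger impossibility that {\sf HPC-Conj-Space} asserts.

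The principal obstacle is the same one blocking a proof of the ordinary logspace planted clique conjecture (Conjecture~\ref{conj:logspace-pc}): we currently have no techniques that yield unconditional super-polynomial lower bounds against general randomized logspace algorithms on natural average-case problems. Proving {\sf HPC-Conj-Space} outright would likely require new average-case branching program lower bounds, together with a treatment of randomness more delicate than anything in the current toolkit. So the realistic target of this section is not a formal proof but precisely the web of consistency evidence described above, mirroring the evidentiary status of Conjecture~\ref{conj:logspace-pc}.
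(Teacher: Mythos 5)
Since the statement is a conjecture, the paper offers no proof of it; the paper's implicit justification is precisely your first observation, stated explicitly in the introduction: any randomized logspace algorithm is also a randomized polynomial-time algorithm, so {\sf HPC-Conj-Space} is a strictly weaker assertion than the polynomial-time hypergraph planted clique conjecture, which \cite{pmlr-v125-brennan20a} supports via low-degree lower bounds. Your first step captures this exactly and is the load-bearing part of the argument, so you and the paper are aligned on the essential point.

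Two small caveats worth noting. First, the paper points out (in a footnote in Section~\ref{subsec:security-of-cl}) that \cite{pmlr-v125-brennan20a} actually analyzes the $k$-partite version of the hypergraph planted clique problem, not ${\sf HPC}_{\sf D}^s(n,k)$ itself; transferring that evidence to ${\sf HPC}_{\sf D}^s$ requires a small additional reduction in the spirit of Lemma~\ref{lem:kpc-to-pc}. Your proposal skips this step, which would be worth including. Second, your third ``heuristic observation'' overreaches slightly: randomized logspace is not actually a subclass of low-degree polynomial algorithms or of the Markov-chain class studied in \cite{jerrum1992large,luo2020tensor}, so the impossibility results against those classes do not formally dominate {\sf HPC-Conj-Space}. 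The only formally clean implication is logspace $\subseteq$ polynomial time; the rest is indirect evidence for the polynomial-time conjecture, from which the logspace version then follows as a corollary. Your first paragraph already makes this point correctly, so the third-step framing of logspace as ``more restrictive than either of these classes'' should be dropped or softened. Beyond those two points, the remainder of your plan (re-deriving the low-degree and Markov-chain calculations in the $s$-uniform setting) is more ambitious than what the paper itself does -- the paper simply cites the existing evidence -- but it is a reasonable direction for strengthening the conjecture's support.
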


\begin{definition}[Deterministic Logspace Algorithm for Planted Clique Recovery]\label{defn:recovery-planted-clique-variants}\ \\
Given a `planted clique input' (i.e a random object distributed as any of the variants of the planted clique distribution ${\sf G}(n,1/2,k)$ / ${\sf kG}(\ell \cdot k, 1/2,k)$ / ${\sf clkG}(\ell \cdot k, 1/2,k)$ / ${\sf HG}^s(n,1/2,k)$ / ${\sf clHG}^s(n,1/2,k)$), a deterministic logspace algorithm $\A$ solves the planted clique recovery task if it outputs the true planted clique $K$ with at least constant probability over the randomness in the input. That is, if $G$ denotes the planted clique input\footnote{Recall that this input also contains problem parameters such as $n$ (or $\ell \cdot k$), $k$, and $s$.}, the following happens with at least constant probability. For any ${\sf vertex} \in [n]$ (or ${\sf vertex} \in [\ell \cdot k]$), $\A(\{G,{\sf vertex}\})$ is $1$ if ${\sf vertex} \in K$ and $0$ if ${\sf vertex} \notin K$.\\
\end{definition}

\begin{lemma}[Hardness of Clique Leakage Hypergraph PC Recovery from {\sf HPC-Conj-Space}]\label{lem:clhpc-recovery-hardness}\ \\
Let $n = \omega(1)$ and $k$ be non-decreasing positive integer sequences such that $\omega((\log n)^{\frac{1}{s-1}}) = k = O(n^{\frac{1}{2}-\delta})$ for some constants $s \geq 3$ and $\delta>0$.

If {\sf HPC-Conj-Space} is true, no deterministic logspace algorithm can solve the clique leakage hypergraph planted clique recovery problem (Definition~\ref{defn:recovery-planted-clique-variants} with ${\sf clHG}^s(n,1/2,k)$).
\end{lemma}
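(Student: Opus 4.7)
The plan is to prove the contrapositive: assume $\A$ is a deterministic logspace algorithm that recovers the clique in ${\sf clHG}^s(n,1/2,k)$ with at least some constant success probability $p > 0$, and construct from it a deterministic (hence randomized) logspace algorithm $\A'$ solving ${\sf HPC}_{\sf D}^s(n,k)$, contradicting \textsf{HPC-Conj-Space}. The central idea, previewed in Section~\ref{subsec:security-of-cl}, is to brute-force guess the first $s-2$ clique vertices. Since $s$ is a constant, iterating over all subsets $S \subseteq [n]$ of size $s-2$ takes only $O(\log n)$ bits of working space.

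In more detail, for each candidate $S = \{v_1 < v_2 < \cdots < v_{s-2}\}$, let $\pi_S$ be the permutation of $[n]$ sending $v_i \mapsto i$ for $i \in [s-2]$ and sending the remaining vertices to $\{s-1,\ldots,n\}$ in their original order. Both $\pi_S$ and $\pi_S^{-1}$ can be evaluated on any single vertex in $O(\log n)$ space, so $\A'$ can simulate multiple-access queries to the ``relabeled'' hypergraph $\pi_S(A_G)$ without materializing it. The algorithm $\A'$ then runs $\A$ on $\pi_S(A_G)$ and verifies that the purported clique is correct by (a) counting, via a single pass over $[n]$, that exactly $k$ vertices $u$ satisfy $\A(\{\pi_S(A_G),u\})=1$, and (b) iterating over all $\binom{n}{s}$ candidate $s$-tuples and checking that whenever $\A$ flags all $s$ coordinates, the hyperedge is present in $\pi_S(A_G)$. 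If some $S$ passes both checks, $\A'$ outputs $1$; otherwise $0$.

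For correctness on a null input $A_G \sim {\sf HG}^s(n,1/2)$, a standard first-moment calculation gives expected number of $k$-cliques at most $\binom{n}{k} 2^{-\binom{k}{s}}$. The hypothesis $k = \omega((\log n)^{1/(s-1)})$ forces $\binom{k}{s} = \omega(k \log n)$, so with probability $1-o(1)$ no $k$-clique exists and every iteration fails the verification, whence $\A'$ outputs $0$. For correctness on a planted input $A_G \sim {\sf HG}^s(n,1/2,k)$, let $S^\star$ be the lex-smallest $(s-2)$-subset of the true planted clique $K$. By the uniform symmetry of $K$, the distribution of $A_G$ conditioned on $S \subseteq K$, after relabeling through $\pi_S$, is exactly ${\sf clHG}^s(n,1/2,k)$; hence on the iteration $S = S^\star$ the algorithm $\A$ outputs the correct clique with probability $\geq p$, and both verification checks pass, so $\A'$ outputs $1$. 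Combining the two cases satisfies the advantage requirement of Definition~\ref{defn:solve-hyp-test}, contradicting \textsf{HPC-Conj-Space}.

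The main subtlety I expect to be worth spelling out is the verification step, because the candidate clique produced by $\A$ has $k$ vertices which may be much larger than $\log n$ and hence cannot be stored in working memory. The trick is that $\A$ is deterministic, so ``is vertex $u$ in the clique produced by $\A$ for this iteration $S$'' can be recomputed on demand by rerunning $\A(\{\pi_S(A_G),u\})$; this lets both the size check and the hyperedge check proceed by outer loops over $[n]$ and over $s$-tuples of $[n]$, using only the $O(\log n)$ working space of $\A$ itself at any one time. Since $s$ is constant, the outer loop over $S$ and the $s$-tuple verification each incur only a polynomial-time overhead, so $\A'$ remains a logspace, polynomial-time algorithm.
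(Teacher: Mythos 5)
Your proposal is correct and follows essentially the same route as the paper's own proof: iterate over all $(s-2)$-subsets $S$ using $O(\log n)$ space, relabel so $S$ occupies the first $s-2$ labels, run the deterministic recovery algorithm $\A$ on the relabeled hypergraph, verify its output by recomputing $\A$'s answers on demand (counting flagged vertices and scanning all $s$-tuples for the clique check), and output $1$ iff some iteration produces a valid $k$-clique. Your first-moment argument for the null case is exactly the folklore calculation in Lemma~\ref{lem:Clique-size-in-Hyp-ER}, and your treatment of the verification step matches the paper's footnote.

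One fine point worth tightening (the paper is similarly terse here): you assert that on the iteration $S=S^\star$, where $S^\star$ is the lex-smallest $(s-2)$-subset of the planted clique $K$, the relabeled hypergraph $\pi_{S^\star}(A_G)$ is distributed \emph{exactly} as ${\sf clHG}^s(n,1/2,k)$. This does not quite follow from your preceding sentence. The preceding sentence is a statement about a \emph{fixed} $S$ conditioned on $S\subseteq K$, and it is true; but $S^\star$ is a function of $K$, and because $S^\star$ consists of the $s-2$ \emph{smallest} clique vertices, the relabeled residual clique $\pi_{S^\star}(K\setminus S^\star)$ is biased towards large labels and is not uniform over $(k-s+2)$-subsets of $\{s-1,\ldots,n\}$. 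The clean way to close the step is a short averaging argument over all $(s-2)$-subsets: with $N := \sum_{S}\mathbb{1}[S\subseteq K]\cdot\mathbb{1}[\A\text{ succeeds on }\pi_S(A_G)]$, the fixed-$S$ conditioning gives $\Exs[N]\geq p\binom{k}{s-2}$, while $N\leq\binom{k}{s-2}$ always, hence $\Prob[N\geq 1]\geq p$, and $N\geq 1$ suffices for $\A'$ to output $1$. This is what the paper's phrase ``there must be at least one tuple $t$ such that $A_t\sim{\sf clHG}^s$'' is implicitly appealing to.
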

\begin{proof}
Suppose a deterministic logspace algorithm $\A$ can solve the planted clique recovery problem with ${\sf clHG}^s(n,1/2,k)$. We construct a deterministic logspace algorithm that solves the ${\sf HPC}_{\sf D}^s(n,k)$, yielding a contradiction.

Let $\mathcal{S}$ be an algorithm that gets as input $\{n,k,A_G,s\}$ corresponding to the input of ${\sf HPC}_{\sf D}^s(n,k)$. $\mathcal{S}$ iterates over all $n^{s-2}$ tuples $(v_1,v_2,...,v_{s-2})$ where each $v_i \in [n]$. It can do this efficiently in lexicographic order by maintaining and incrementing $s-2$ counters named $v_1,v_2,...,v_{s-2}$ that use $O(\log n)$ bits each. For a given tuple, it checks if $v_1 < v_2 < ... < v_{s-2}$. This can be done using $O(\log n)$ bits of space. If this condition is not met, $\mathcal{S}$ moves on to the next tuple\footnote{Essentially, this is a space efficient way to iterate over all vertex sets (rather than tuples) of size $s-2$.}. If this condition \textit{is} met, however, $\mathcal{S}$ does the following.

Letting $t$ denote the current tuple $(v_1,...,v_{s-2})$, we define a $s$-uniform hypergraph $A_t$ that is essentially $A_G$ but with the vertices renamed. Let ${\sf rename}_t: [n] \rightarrow [n]$ be a permutation we will define soon. $A_t$ has a hyperedge $(i_1,i_2,...,i_s)$ if an only if $A_G$ has a hyperedge $({\sf rename}_t(i_1),{\sf rename}_t(i_2),...,{\sf rename}_t(i_s))$. For $i \in [s-2]$, we define ${\sf rename}_t(i) := v_i$. For $i \in [n] \setminus [s-2]$ we define ${\sf rename}_t(i)$ to be the $(i-(s-2))^{th}$ smallest positive integer not in $\{v_1,v_2,...,v_{s-2}\}$. Observe that ${\sf rename}_t$ is a permutation and can be computed using $O(\log n)$ bits of space for any input $i \in [n]$. Hence we can check the existence of any given $s$-uniform hyperedge in $A_t$ using $O(\log n)$ bits of space. Crucially, if the $\{v_1,...,v_{s-2}\}$ corresponding to $t$ are all planted clique vertices in $A_G$, the first $s-2$ vertices are planted clique vertices in $A_t$.

$\mathcal{S}$ now uses the deterministic logspace algorithm $\A$ on the renamed hypergraph $A_t$ and counts the number of vertices in $A_t$ for which $\A$ outputs a $1$ (i.e the number of vertices in $A_t$ that $\A$ believes are in a planted clique). If this number is less than $k$, $\mathcal{S}$ moves on to the next tuple. If this number is at least $k$, $\mathcal{S}$ does the following. If the vertices in $A_t$ for which $\A$ outputs $1$ form a clique\footnote{This can be checked with $O(\log n)$ bits of working space by iterating over all size $s$ subsets of $[n]$ (using the scheme described earlier in the proof). For every set of $s$ vertices $\{i_1,...,i_s\}$ such that $\A$ outputs $1$ on all of them, we check for a hyperedge $(i_1,...,i_s)$ in $A_t$.}, $\mathcal{S}$ terminates and outputs $1$ (indicating its belief that the input hypergraph has a planted clique). If not, $\mathcal{S}$ moves on to the next tuple. If $\mathcal{S}$ iterates over all tuples without terminating as described above, it terminates and outputs a $0$ (indicating its belief that the input hypergraph does not have a planted clique).

First, it is clear by construction that $\mathcal{S}$ is a deterministic logspace algorithm. We now argue that $\mathcal{S}$ solves ${\sf HPC}_{\sf D}^s(n,k)$. It is impossible for $\mathcal{S}$ to output $1$ if there is no clique of size $k$ in $A_G$. If $A_G \sim {\sf HG}^s(n, 1/2)$, it is folklore (Lemma~\ref{lem:Clique-size-in-Hyp-ER}) that $A_G$ can not have a clique of size $k$ except with probability $o(1)$. Hence the probability that $\mathcal{S}$ outputs the correct answer is at least $1-o(1)$.

If, on the other hand, $A_G \sim {\sf HG}^s(n, 1/2,k)$, there must be at least one tuple $t = (v_1,v_2,...,v_{s-2})$ such that $A_t \sim {\sf clHG}^s(n, 1/2,k)$. By assumption, with at least constant probability, $\A$ will correctly identify the planted clique vertices in $A_t$ and $\mathcal{S}$ will output the correct answer $1$. This gives \[
\underset{\hspace{-1em}\mathrlap{A_G \sim {\sf HG}^s(n,1/2)}}{\Prob}\left( \mathcal{S}(\{n,k,A_G,s\}) = 0 \right) +
\underset{\hspace{-1em}\mathrlap{A_G \sim {\sf HG}^s(n,1/2,k)}}{\Prob}\left(\mathcal{S}(\{n,k,A_G,s\}) = 1\right) = 1+\Omega(1)\] and completes the proof.
\end{proof}

\begin{lemma}[Hardness of Clique Leakage $k$-Partite PC Recovery from {\sf kPC-Conj-Space}]\label{lem:clkpc-recovery-hardness}\ \\
	Let $\ell = \omega(1)$ and $k$ be non-decreasing positive integer sequences such that $\omega(\log (\ell \cdot k)) = k = O(\ell^{1-\delta})$ for some constant $\delta>0$.
	
	If {\sf kPC-Conj-Space} is true, no deterministic logspace algorithm can solve the clique leakage $k$-Partite planted clique recovery problem (Definition~\ref{defn:recovery-planted-clique-variants} with ${\sf clkG}(\ell \cdot k,1/2,k)$).
\end{lemma}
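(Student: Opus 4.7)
The plan is to mirror the proof of Lemma~\ref{lem:clhpc-recovery-hardness}: use a supposed deterministic logspace recovery algorithm $\A$ for the clique leakage $k$-partite PC problem to build a deterministic logspace detection algorithm $\mathcal{S}$ for ${\sf kPC_D}(\ell \cdot k, k)$, contradicting {\sf kPC-Conj-Space}. The key observation is that the only difference between ${\sf kG}(\ell \cdot k, 1/2, k)$ and ${\sf clkG}(\ell \cdot k, 1/2, k)$ is the identity of the planted clique vertex in the first block $\{1, 2, \ldots, \ell\}$; in the former, this vertex is uniform over $[\ell]$, while in the latter it is fixed to be vertex $1$. Thus we only need to iterate over $\ell$ possible choices for which vertex in the first block should be ``pretended'' to be vertex $1$, rather than over $n^{s-2}$ tuples as in Lemma~\ref{lem:clhpc-recovery-hardness}.

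Concretely, on input $\{\ell \cdot k, k, A_G\}$, the algorithm $\mathcal{S}$ iterates $v_1 \in [\ell]$ using an $O(\log \ell)$-bit counter. For each $v_1$, define a renamed adjacency matrix $A_{v_1}$ by applying the permutation on $[\ell \cdot k]$ that swaps vertex $1$ and vertex $v_1$ while fixing all other vertices; this permutation preserves the $k$-partite block structure since both swapped vertices lie in the first block, and any entry of $A_{v_1}$ can be recovered from $A_G$ using $O(\log \ell)$ space. $\mathcal{S}$ then runs $\A$ on $A_{v_1}$, counts how many vertices $\A$ labels as planted, and if this count is exactly $k$, checks whether these $k$ vertices actually form a clique in $A_{v_1}$ (this check is doable in $O(\log(\ell \cdot k))$ space by iterating over pairs). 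If so, $\mathcal{S}$ halts and outputs $1$; if no iteration succeeds, $\mathcal{S}$ outputs $0$. By construction $\mathcal{S}$ is a deterministic logspace algorithm.

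For correctness in the null case $A_G \sim {\sf G}(\ell \cdot k, 1/2)$, the hypothesis $k = \omega(\log(\ell \cdot k))$ ensures that the Erd\H{o}s-R\'enyi graph contains no $k$-clique with probability $1 - o(1)$ (standard first-moment bound: the expected number of $k$-cliques is $\binom{\ell k}{k} 2^{-\binom{k}{2}}$, which is $o(1)$ under the stated scaling). Since $\mathcal{S}$ only outputs $1$ when a $k$-clique is witnessed, it outputs $0$ with probability $1 - o(1)$. For correctness in the planted case $A_G \sim {\sf kG}(\ell \cdot k, 1/2, k)$, let $v_1^\star \in [\ell]$ denote the actual planted clique vertex in the first block. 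When the iteration reaches $v_1 = v_1^\star$, the swap-permutation places $v_1^\star$ at position $1$, so $A_{v_1^\star}$ is distributed exactly as ${\sf clkG}(\ell \cdot k, 1/2, k)$. By the recovery guarantee of $\A$, with at least constant probability $\A$ correctly identifies all $k$ planted clique vertices, at which point $\mathcal{S}$ detects a $k$-clique and outputs $1$. Combining the two cases yields constant-advantage detection, contradicting {\sf kPC-Conj-Space}.

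The main subtlety, as in Lemma~\ref{lem:clhpc-recovery-hardness}, is the logspace verification that $\A$'s output actually forms a clique. The hard part is not algorithmic but definitional: the recovery guarantee in Definition~\ref{defn:recovery-planted-clique-variants} only holds with constant probability over the input randomness, and ``solving'' recovery is only sensible for deterministic algorithms since we must repeatedly query $\A$ on the same vertex (once to count labels, once per pair to verify the clique). The determinism of $\A$ is what makes this repeated querying well-defined and is precisely why the definition is stated for deterministic algorithms.
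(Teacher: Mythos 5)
Your proposal is correct and matches the paper's approach: the paper's own proof says exactly that one should mirror Lemma~\ref{lem:clhpc-recovery-hardness}, iterating over all $i \in [\ell]$ and renaming via the transposition swapping vertex $1$ with vertex $i$. Your filled-in details (the swap preserves the $k$-partite block structure, the first-moment bound using $k = \omega(\log(\ell k))$ for the null case, and the observation that at iteration $v_1 = v_1^\star$ the renamed graph has distribution ${\sf clkG}(\ell \cdot k, 1/2, k)$ so the recovery guarantee applies) are all sound; the minor divergence of checking for ``exactly $k$'' rather than ``at least $k$'' labeled vertices is immaterial since the recovery definition forces the correct output to be exactly the $k$-set $K$.
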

\begin{proof}
The proof is analogous to that of Lemma~\ref{lem:clhpc-recovery-hardness} so we omit the details. The only difference is that we now iterate over all vertices $i \in [\ell]$ rather than tuples of size $s-2$ as in Lemma~\ref{lem:clhpc-recovery-hardness}. For any vertex $i \in [\ell]$, the renamed graph simply swaps vertices $1$ and $i$ and leaves the names of all other vertices unchanged.
\end{proof}

%%%%%%%%%%%%%%%%%%%%%%%%%%%%%%%%%%%%%%%%%%%%%%%%%%%%%%%%%%%%%%%%%%%%%%%%

\section{Logspace Reductions using auxiliary Multiple Access Randomness}
\label{sec:reductions-to-other-probs}

\subsection{Sparse Principal Component Analysis \cite{berthet2013complexity}}
\label{subsec:spca}

Setting up some notation from \cite{berthet2013complexity}, define $R_0:=\left\{(\bar{d},\bar{n},\bar{k}) \in \mathbb{N}^3 : 15\sqrt{\frac{\bar{k}\log (120e\bar{d})}{\bar{n}}} \leq 1, \bar{k} \leq \bar{d}^{0.49} \right\}$\footnote{The $120$ is arbitrary and can be any constant greater than $18$. Similarly the $0.49$ can be any constant less than $0.5$.} and $R_{\mu}:=R_0 \cap \left\{\bar{n}^{\mu} \leq \bar{k} \right\} \cap \left\{ \bar{n} < \bar{d}  \right\}$ for any $\mu \in (0,1)$.

The Sparse PCA Hypothesis Testing Problem (${\sf SPCA}(\bar{d},\bar{n},\bar{k},\bar{\theta})$) \cite{berthet2013complexity} is a problem where the input consists of the parameters $\bar{d}$, $\bar{n}$, $\bar{k}$, $\bar{\theta}$ (which must lie in $R_{\mu}$) as well as $\bar{n}$ independent samples $X_1,...,X_{\bar{n}} \in \mathbb{R}^{\bar{d}}$ of a $\bar{d}$-dimensional random vector $X$. The null hypothesis  is the set of all distributions ($\mathcal{D}_0$) whose empirical variance along any direction is `small'. The alternate hypothesis is the set of all distributions ($\mathcal{D}_1^{\bar{k}}(\bar{\theta})$) for which there is a $\bar{k}$-sparse direction along which the empirical variance is `large', as quantified by $\bar{\theta}$. Since we will not need the detailed definitions of $\mathcal{D}_0$ and $\mathcal{D}_1^{\bar{k}}(\bar{\theta})$, we do not provide them here and point the interested reader to \cite{berthet2013complexity}.\\

\begin{lemma}[\cite{berthet2013complexity}'s reduction to Sparse PCA can be implemented in mutiple access randomized logspace]\label{lem: spca-br13-redn}\ \\
Let $\delta \in \left(0,\frac{1}{6} \right)$ and $\mu \in \left[\frac{1}{3},\frac{1}{2}-\delta \right)$ be constants. Let $n=\omega(1)$, $k = \Theta(n^{\frac{1}{2}-\delta})$, and $m=o(n/k)$ be non-decreasing positive integer sequences with the following properties. Given $n$ and $k$, the value $m$ can be computed and stored using $O(\log n)$ additional bits of space and $n-m$ is always a multiple of $2$. Let $\frac{n-m}{2} < \bar{d} \leq \poly(n)$, $\bar{n} = \left\lceil \left(\frac{4(n-m)}{k}\right)^{\frac{1}{1-\mu}} \right\rceil = o(n)$, and $ \bar{n}^{\mu}\leq \bar{k} = \left\lceil \frac{\bar{n}k}{4(n-m)} \right\rceil \leq \bar{d}^{0.49}$ be non-decreasing positive integer sequences. In particular, $(\bar{d},\bar{n},\bar{k}) \in R_{\mu}$. Define $\bar{\theta} := \frac{(\bar{k}-1)k}{n-m} =  \Theta\left(\sqrt{\frac{\bar{k}^{4-\frac{1}{\mu}}}{\bar{n}}}\right)$. Assume that given $n-m$ and $k$, we can deterministically compute and store $\bar{d}$ using $O(\log n)$ additional bits of space.

If there is a randomized logspace algorithm that solves ${\sf SPCA}(\bar{d},\bar{n},\bar{k},\bar{\theta})$, there is a multiple access randomized logspace algorithm that solves ${\sf PC_D}(n-m,k)$ using at most $\left(\bar{d}-\left(\frac{n-m}{2}\right)\right)\cdot \bar{n} + 2 \cdot 10 \bar{d} (\lceil \log \bar{d} \rceil)^2 + \bar{n} \leq \poly(n-m)$ multiple access random bits.
\end{lemma}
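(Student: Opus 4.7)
The plan is to take the Berthet--Rigollet reduction $\Phi$ from \cite{berthet2013complexity} and verify that each of its primitives can be carried out by a logspace machine given multiple access to the planted-clique input and to an auxiliary multiple-access random tape of the claimed size. Write $N := n - m$. I will describe $\Phi$ as mapping a graph $G$ on $N$ vertices to a matrix $M \in \{-1,+1\}^{\bar{n} \times \bar{d}}$ (viewed as $\bar{n}$ samples in $\bar{d}$ dimensions) via four steps: (i) form the $\pm 1$-valued bipartite block $B \in \{-1,+1\}^{(N/2) \times (N/2)}$ with $B_{ij} = 2 A_G(i, j + N/2) - 1$; (ii) embed $B$ into $\widetilde{B} \in \{-1,+1\}^{\bar{n} \times \bar{d}}$ by keeping $\bar{n} < N/2$ rows (valid since $\bar{n}=o(n)$) and appending $\bar{d} - N/2$ columns of fresh uniform $\pm 1$ entries read from the auxiliary tape, per the BR'13 prescription; (iii) sample two approximately uniform permutations $\pi_r:[\bar{n}]\to[\bar{n}]$ and $\pi_c:[\bar{d}]\to[\bar{d}]$ using Lemma~\ref{lem:permutation}; (iv) sample $\bar{n}$ uniform row signs $\varepsilon_i \in \{-1,+1\}$, and set $M_{ij} := \varepsilon_i \cdot \widetilde{B}_{\pi_r(i),\pi_c(j)}$.

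The auxiliary multiple-access random bit count is then $\bar{n} \cdot (\bar{d} - N/2)$ for the padding, at most $10 \bar{d} (\lceil \log \bar{d} \rceil)^2$ for each of the two permutations (bounding the permutation on $[\bar{n}]$ by the larger $\bar{d}$ case of Lemma~\ref{lem:permutation} since $\bar{n} \leq \bar{d}$), and $\bar{n}$ for the row signs, summing to exactly the quantity stated in the lemma and remaining at most $\poly(N)$ because $\bar{d} \leq \poly(n)$ and $\bar{n} = o(n)$. For the space audit, answering a single query $M_{ij}$ fits in $O(\log n)$ working bits: (a) compute $\pi_r(i)$ and $\pi_c(j)$ in $O(\log \bar{d})$ space via Lemma~\ref{lem:permutation}; (b) branch on whether $\pi_c(j) \leq N/2$, reading one entry of the planted-clique adjacency (using the assumed $O(\log n)$-space computability of $m$ and $\bar{d}$ from $n,k$) and mapping it to $\pm 1$, or else reading one padding bit from the auxiliary tape; (c) multiply by $\varepsilon_{\pi_r(i)}$ looked up on the auxiliary tape. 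Composing $\Phi$ with a randomized logspace SPCA solver by recomputing each requested entry of $M$ on the fly --- valid because both the graph and the auxiliary tape admit multiple access --- yields a multiple access randomized logspace algorithm for ${\sf PC_D}(N,k)$.

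For correctness, I would invoke the distributional guarantees of \cite{berthet2013complexity}: with truly uniform $\pi_r, \pi_c$ and with the precise choices $\bar{\theta}=(\bar{k}-1)k/(n-m)$ and $(\bar{d},\bar{n},\bar{k}) \in R_\mu$ engineered to satisfy their hypotheses, the output $M$ under $G \sim \G(N,1/2)$ lies in the SPCA null class $\mathcal{D}_0$, while under $G \sim \G(N,1/2,k)$ it lies in the alternate class $\mathcal{D}_1^{\bar{k}}(\bar{\theta})$. Since Lemma~\ref{lem:permutation} produces each permutation within total variation $O(1/\bar{d}^4) = o(1)$ of uniform, tensorization and the data-processing inequality \cite[Fact 3.1]{brennan2019optimal} show the actual law of $M$ is within $o(1)$ of the ideal BR'13 output on both hypotheses. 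Hence a constant SPCA advantage of the putative randomized logspace solver transfers, up to $o(1)$ loss, into a constant advantage on ${\sf PC_D}(N,k)$ in the sense of Definition~\ref{defn:solve-hyp-test}.

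The main obstacle, once the randomness-harvesting and permutation-simulation primitives from earlier sections are in hand, is parameter bookkeeping: checking that the derived $(\bar{d},\bar{n},\bar{k},\bar{\theta})$ genuinely lie in the BR'13 regime $R_\mu$, i.e.\ that the constraints $\bar{n}^\mu \leq \bar{k} \leq \bar{d}^{0.49}$, $\bar{n} < \bar{d}$, and $15\sqrt{\bar{k}\log(120 e \bar{d})/\bar{n}} \leq 1$ all follow from the assumed scalings $k = \Theta(n^{1/2-\delta})$, $\mu \in [1/3, 1/2-\delta)$, and $\delta \in (0, 1/6)$. This is a tedious but direct algebraic calculation using the explicit formulas for $\bar{n}$ and $\bar{k}$; the space audit and the permutation approximation argument are otherwise mechanical consequences of the primitives developed earlier in the paper.
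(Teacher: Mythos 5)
Your proposal is correct and matches the paper's own proof essentially step for step: both audit the Berthet--Rigollet reduction by taking a half-by-half bipartite block of $A_G$, padding it with $(\bar{d} - (n-m)/2)\cdot\bar{n}$ fresh bits, applying two permutations from Lemma~\ref{lem:permutation} (charged at $2\cdot 10\bar{d}(\lceil\log\bar{d}\rceil)^2$ bits), flipping $\bar{n}$ sign bits, recomputing entries on the fly in $O(\log n)$ space, and transferring correctness via the tensorization and data-processing inequalities for TV distance. The only cosmetic difference is orientation (you build $M \in \{\pm 1\}^{\bar{n}\times\bar{d}}$ with rows as samples whereas the paper builds $\bar{B} \in \{0,1\}^{\bar{d}\times\bar{n}}$ with columns as samples), and you leave implicit the paper's footnoted observation that the initial random relabeling of $A_G$ in BR'13 can be skipped by permutation-invariance of the planted clique distribution.
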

\begin{proof}
The proof follows easily from observing that the reduction of \cite{berthet2013complexity} can be (approximately) implemented space efficiently. Let $\mathcal{S}$ be a randomized logspace algorithm that solves ${\sf SPCA}(\bar{d},\bar{n},\bar{k},\bar{\theta})$. Let $\mathcal{S'}$ be a multiple access randomized algorithm that has $\left(\bar{d}-\left(\frac{n-m}{2}\right)\right)\cdot \bar{n} + 2 \cdot 10 \bar{d} (\lceil \log \bar{d} \rceil)^2 + \bar{n}$ multiple access random bits and $\{n-m,k,A_G \}$ as input corresponding to ${\sf PC_D}(n-m,k)$. Clearly, $\mathcal{S'}$ can compute and store $\bar{d}$, $\bar{n}$, $\bar{k}$, and $\bar{\theta}$ (the latter can be represented by storing its numerator and denominator) using $O(\log n)$ bits of space.

Let $\bar{A}$ denote a $\bar{d} \times \bar{n}$ matrix whose first $\frac{n-m}{2}$ rows are simply the `bottom-left' submatrix of $A_G$. That is, for $i \in [\frac{n-m}{2}]$, the $i^{th}$ row of $\bar{A}$ is simply the $\left(\frac{n-m}{2}+i\right)^{th}$ row of $A_G$ restricted to its first $\bar{n}$ columns\footnote{In \cite{berthet2013complexity}, a uniformly random permutation is applied to the matrix $A_G$ before extracting this `bottom-left' submatrix. However, in our formulation of the planted clique problem, all sets of size $k$ are equally likely to be the planted vertex set, so $A_G$'s distribution (in both the null and planted cases) is invariant to permutations. Hence we can skip this step.}. The rest of the $\bar{d}-\left(\frac{n-m}{2}\right)$ rows of $\bar{A}$ are populated by independent uniform random bits. Let $\pi_{\bar{d}} : [\bar{d}] \rightarrow [\bar{d}]$ and $\pi_{\bar{n}} : [\bar{n}] \rightarrow [\bar{n}]$ be independent random efficiently computable functions (which are approximately uniform random permutations) sampled by invoking Lemma~\ref{lem:permutation} with $10 \bar{d} (\lceil \log \bar{d} \rceil)^2$ and $10 \bar{n} (\lceil \log \bar{n} \rceil)^2$ independent uniform multiple access random bits respectively.

Let $\bar{B}$ denote the $\bar{d} \times \bar{n}$ matrix whose $(i,j)^{th}$ entry is the $(\pi_{\bar{d}}(i),\pi_{\bar{n}}(j))^{th}$ entry of $\bar{A}$ and $\bar{B}_j$ denote the $j^{th}$ column of $\bar{B}$. For $j \in [\bar{n}]$, define $X_j = \eta_j(2\bar{B}_j-1) \in \{-1,1\}^{\bar{d}}$ where $\eta_j$ is an independent uniform random bit. Because the $\pi$'s can be computed space efficiently (see Lemma~\ref{lem:permutation}), $\mathcal{S'}$ can use its $\left(\bar{d}-\left(\frac{n-m}{2}\right)\right)\cdot \bar{n} + 2 \cdot 10 \bar{d} (\lceil \log \bar{d} \rceil)^2 + \bar{n}$ multiple access random bits to provide multiple access to  the vectors $X_1,...,X_{\bar{n}}$. Since $\bar{d}\bar{n} \leq \poly(n)$, as a multiple access randomized logspace algorithm, $\mathcal{S'}$ can invoke $\mathcal{S}$ on $X_1,...,X_{\bar{n}}$ and output the corresponding answer.

We know from Lemma~\ref{lem:permutation} and the tensorization inequality for total variation distance \cite[fact 3.1]{brennan2019optimal} that the pair $(\pi_{\bar{d}},\pi_{\bar{n}})$ we use has vanishing $o(1)$ total variation distance from a pair of independent uniformly random permutations. The data processing inequality then implies that our space efficient algorithm $\mathcal{S'}$ solves ${\sf PC_D}(n-m,k)$ if and only if an analogous algorithm $\mathcal{S''}$ solves ${\sf PC_D}(n-m,k)$. $\mathcal{S''}$ uses a pair of independent uniformly random permutations instead of $(\pi_{\bar{d}},\pi_{\bar{n}})$. Crucially, we do not care about the space complexity of $\mathcal{S''}$ and can invoke results from \cite{berthet2013complexity}, where it was studied. For the rest of the proof, assume the $\pi$'s are independent uniformly random permutations.

As noted in the proof of \cite[Theorem 7]{berthet2013complexity}, if $A_G \sim \G(n-m,1/2)$, then the vectors $X_1,...,X_{\bar{n}}$ are drawn from a distribution in the null hypothesis of ${\sf SPCA}(\bar{d},\bar{n},\bar{k},\bar{\theta})$. On the other hand, if $A_G \sim {\sf G}(n-m,1/2,k)$, the collection of vectors $X_1,...,X_{\bar{n}}$ has total variation distance at most $\frac{16\bar{n}}{n-m}=o(1)$ from a collection of $\bar{n}$ independent samples from some distribution in the alternate hypothesis of ${\sf SPCA}(\bar{d},\bar{n},\bar{k},\bar{\theta})$. This last fact follows from \cite[Lemma 8]{berthet2013complexity} because $\frac{2\bar{n}k}{n-m}\geq \max\{16\log n,8\bar{k}\}$\footnote{\cite[Lemma 8]{berthet2013complexity} actually states a worse upper bound on the total variation distance. However, a close look at their proof shows that the better total variation distance upper bound we state also holds.}. By the data processing inequality, the probability that $\mathcal{S''}$ outputs the correct answer on an instance of ${\sf PC_D}(n-m,k)$ is within $o(1)$ of the probability that $\mathcal{S}$ outputs the correct answer on an instance of ${\sf SPCA}(\bar{d},\bar{n},\bar{k},\bar{\theta})$. This complete the proof.
\end{proof}

\begin{theorem}[Logspace hardness of Sparse PCA (Restricted parameter range)]\label{thm:spca-restricted}\ \\
Let $\bar{d}$, $\bar{n}$, $\bar{k}$, and $\bar{\theta}$ be as defined in Lemma~\ref{lem: spca-br13-redn} with the additional constraints that $m=\omega(log^2 n)$ and $\bar{d}  =\left(\frac{n-m}{2}\right)+ o(\frac{nm}{\bar{n}})$. If the logspace planted clique conjecture {\sf PC-Conj-Space} (Conjecture~\ref{conj:logspace-pc}) is true, no randomized logspace algorithm can solve ${\sf SPCA}(\bar{d},\bar{n},\bar{k},\bar{\theta})$.
\end{theorem}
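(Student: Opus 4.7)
The plan is to chain the two main lemmas already available. Suppose, for contradiction, that a randomized logspace algorithm solves ${\sf SPCA}(\bar{d},\bar{n},\bar{k},\bar{\theta})$. By Lemma~\ref{lem: spca-br13-redn}, this produces a multiple access randomized logspace algorithm that solves ${\sf PC_D}(n-m,k)$ using at most
\[
R := \Bigl(\bar{d}-\tfrac{n-m}{2}\Bigr)\bar{n} + 20\, \bar{d}\,(\lceil \log \bar{d} \rceil)^2 + \bar{n}
\]
multiple access random bits. Since $m = o(n/k)$ and $k = O(n^{1/2-\delta})$, the hypotheses of Lemma~\ref{lem:basic-randomness-harvesting} are met, so under Conjecture~\ref{conj:logspace-pc} no such algorithm can exist provided $R \leq \binom{n}{2} - \binom{n-m}{2} = \Theta(nm)$. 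The entire proof therefore reduces to the estimate $R = o(nm)$.

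I would bound the three terms of $R$ separately. The first term is $o(nm/\bar{n}) \cdot \bar{n} = o(nm)$ directly from the restricted-range hypothesis $\bar{d} = \frac{n-m}{2} + o(nm/\bar{n})$. The third term, $\bar{n}$, is $o(n)$ by the setup in Lemma~\ref{lem: spca-br13-redn} and hence $o(nm)$. The middle term $20\, \bar{d}\,(\lceil \log \bar{d} \rceil)^2$ is where the other restricted-range hypothesis $m = \omega(\log^2 n)$ enters. I would first verify $\bar{d} = O(n)$; this follows because $nm/\bar{n} = O(n)$, equivalently $m = O(\bar{n})$, which holds since $\bar{n} \geq (4(n-m)/k)^{1/(1-\mu)} = \omega(n/k) = \omega(m)$ once $\mu \geq 1/3$ ensures $\frac{1}{1-\mu} > 1$. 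Given $\bar{d} = O(n)$, the middle term is $O(n \log^2 n) = o(nm)$ by $m = \omega(\log^2 n)$.

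There is no serious obstacle: the proof is essentially parameter-chasing in which each restricted-range hypothesis (the lower bound $m = \omega(\log^2 n)$ and the upper bound $\bar{d} = \frac{n-m}{2} + o(nm/\bar{n})$) is used to kill exactly one of the three terms of $R$, while the third term ($\bar{n}$) is negligible for free. Once these bounds are in place, $R = o(nm) \leq \binom{n}{2} - \binom{n-m}{2}$ for all large $n$, contradicting Lemma~\ref{lem:basic-randomness-harvesting} and completing the proof.
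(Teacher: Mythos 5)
Your proposal is correct and follows essentially the same route as the paper: invoke Lemma~\ref{lem: spca-br13-redn} to get a multiple-access algorithm for ${\sf PC_D}(n-m,k)$, verify that the number of multiple-access random bits used is $o(nm) \leq \binom{n}{2}-\binom{n-m}{2}$, and conclude via Lemma~\ref{lem:basic-randomness-harvesting}. The paper states the bound $o(mn)$ without elaboration, whereas you spell out the term-by-term estimate (in particular, deducing $\bar{d}=O(n)$ from $m=O(\bar{n})$, which in turn follows because $\mu\geq 1/3$ makes $1/(1-\mu)>1$ so that $\bar{n}=\omega(n/k)=\omega(m)$); this is a correct and helpful expansion of the same argument rather than a different approach.
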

\begin{proof}
We have $\left(\bar{d}-\left(\frac{n-m}{2}\right)\right)\cdot \bar{n} + 2 \cdot 10 \bar{d} (\lceil \log \bar{d} \rceil)^2 + \bar{n} = o(mn) \leq {n \choose 2} - {n-m \choose 2}$ for large enough $n$. Since $m$ can be computed and stored using $O(\log n)$ bits of space given $n$ and $k$, we can obtain the desired conclusion by combining Lemma~\ref{lem: spca-br13-redn} with Lemma~\ref{lem:basic-randomness-harvesting}.
\end{proof}
Theorem~\ref{thm:spca-restricted} shows that we can use Conjecture~\ref{conj:logspace-pc} to deduce the randomized logspace hardness of the Sparse PCA problem as long as $\bar{d}$ is not too small and not too large compared to $\bar{n}$. However, we believe the Sparse PCA problem is hard even for much larger values of $\bar{d}$. Theorem~\ref{thm:spca-full}, which follows immediately by combining Lemma~\ref{lem: spca-br13-redn} with Lemma~\ref{lem:clhpc-randomness-harvesting}, can show randomized logspace hardness for Sparse PCA instances with such parameters.\\
\begin{theorem}[Logspace hardness of Sparse PCA]\label{thm:spca-full}\ \\	Let $\bar{d}$, $\bar{n}$, $\bar{k}$, and $\bar{\theta}$ be as defined in Lemma~\ref{lem: spca-br13-redn}. If the clique leakage logspace hypergraph planted clique conjecture {\sf clHPC-Conj-Space} (Conjecture~\ref{conj:logspace-clhpc}) is true, no randomized logspace algorithm can solve ${\sf SPCA}(\bar{d},\bar{n},\bar{k},\bar{\theta})$.
\end{theorem}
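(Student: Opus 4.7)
The plan is to derive the theorem by directly chaining the two main technical results already in hand: the space-efficient implementation of the Berthet--Rigollet reduction (Lemma~\ref{lem: spca-br13-redn}) and the strong ``unlimited randomness'' harvesting result from clique leakage hypergraph planted clique (Lemma~\ref{lem:clhpc-randomness-harvesting}). I will proceed by contrapositive: assume a randomized logspace algorithm $\mathcal{A}$ solves ${\sf SPCA}(\bar{d},\bar{n},\bar{k},\bar{\theta})$ and derive that some multiple access randomized logspace algorithm solves ${\sf{PC_D}}(n-m,k)$, which contradicts Conjecture~\ref{conj:logspace-clhpc} via Lemma~\ref{lem:clhpc-randomness-harvesting}.

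Concretely, first I would feed $\mathcal{A}$ into Lemma~\ref{lem: spca-br13-redn}. The lemma produces a multiple access randomized logspace algorithm $\mathcal{B}$ that solves ${\sf{PC_D}}(n-m,k)$ using at most $(\bar{d}-(n-m)/2)\cdot\bar{n}+2\cdot 10\bar{d}(\lceil\log\bar{d}\rceil)^2+\bar{n}\le\poly(n-m)$ multiple access random bits. Unlike in Theorem~\ref{thm:spca-restricted}, I do \emph{not} need to control this count carefully, because the hypothesis we are now willing to assume, Conjecture~\ref{conj:logspace-clhpc}, is strong enough to rule out multiple access randomized logspace algorithms for planted clique irrespective of how many multiple access random bits they consume.

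Next I would invoke Lemma~\ref{lem:clhpc-randomness-harvesting}, which states that if {\sf clHPC-Conj-Space} holds then no multiple access randomized logspace algorithm can solve ${\sf{PC_D}}(N,k)$ whenever $N=\omega(1)$ and $k=O(N^{1/2-\delta'})$ for some constant $\delta'>0$. Setting $N=n-m$, I need to verify the scaling assumption: by Lemma~\ref{lem: spca-br13-redn} we have $k=\Theta(n^{1/2-\delta})$ and $m=o(n/k)=o(n)$, so $n-m=n(1-o(1))$ and therefore $k=O((n-m)^{1/2-\delta''})$ for any $\delta''\in(0,\delta)$ once $n$ is large enough. Thus the existence of $\mathcal{B}$ contradicts the conclusion of Lemma~\ref{lem:clhpc-randomness-harvesting}, forcing $\mathcal{A}$ not to exist.

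This argument is essentially mechanical once the two earlier lemmas are in place, so there is no serious obstacle. The only small point requiring care is parameter book-keeping: confirming that $n-m$ inherits the $\omega(1)$ and sub-polynomial clique-size conditions required by Lemma~\ref{lem:clhpc-randomness-harvesting} from the assumptions on $n$, $k$, and $m$ imported from Lemma~\ref{lem: spca-br13-redn}. The strength of this approach, compared to Theorem~\ref{thm:spca-restricted}, is precisely that we no longer impose any upper bound on $\bar{d}$ beyond the polynomial-size constraint $\bar{d}\le\poly(n)$, since Lemma~\ref{lem:clhpc-randomness-harvesting} tolerates arbitrary polynomial multiple access randomness usage.
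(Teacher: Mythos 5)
Your proof is correct and follows the paper's intended route exactly: chain Lemma~\ref{lem: spca-br13-redn} (which yields a multiple access randomized logspace algorithm for ${\sf PC_D}(n-m,k)$ using $\poly(n)$ multiple access random bits) with Lemma~\ref{lem:clhpc-randomness-harvesting} (which rules out any such algorithm under {\sf clHPC-Conj-Space}, regardless of the amount of multiple access randomness), together with the parameter check that $k=\Theta(n^{1/2-\delta})=O((n-m)^{1/2-\delta})$ since $m=o(n)$. This is precisely what the paper does.
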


\subsection{Planted Submatrix Detection \cite{ma2015computational}}
\label{subsec:submat}

Following \cite{ma2015computational}, for any real number $x$ and integer $t$, denote the $t$-bit truncation of $x$ as $[x]_t := 2^{-t}\lfloor 2^t x \rfloor$. We extend this notation to matrices of real numbers by applying the truncation entrywise.\\

\begin{definition}[Planted Gaussian Matrix: ${\sf PGM}(\bar{p},\bar{\theta},\bar{t})$]\label{def:gauss-mat}\ \\
This is the distribution on $\bar{p} \times \bar{p}$ matrices $X = (X_{ij})$ whose entries are independent truncated Gaussians $[\mathcal{N}(\bar{\theta}_{ij},1)]_{\bar{t}}$. Here $\bar{\theta} = (\bar{\theta}_{ij})$ is \textit{any} $\bar{p} \times \bar{p}$ matrix with real entries.\\
\end{definition}

\begin{definition}[Planted Submatrix Detection: ${\sf Submat}(\bar{p},\bar{k},\bar{\lambda},\bar{t})$ \cite{ma2015computational}]\label{def:submat-detection}\ \\
Let $\bar{p} = \omega(1)$, $\bar{k} \leq \bar{p}$, and $\bar{t}$ be non-decreasing positive integer sequences. Let $\bar{\lambda}$ be a sequence of positive numbers. This is the hypothesis testing problem (with $\bar{p}$, $\bar{k}$, $[\bar{\lambda}]_{\bar{t}}$, and $\bar{t}$ part of the input.) induced by ${\sf H}_0(\bar{p}) = {\sf PGM}(\bar{p},0_{\bar{p}},\bar{t})$\footnote{$0_{\bar{p}}$ denotes the $\bar{p} \times \bar{p}$ all zeros matrix.} and \[{\sf H}_1(\bar{p}) = \left\{ {\sf PGM}(\bar{p},\bar{\theta},\bar{t}): \exists U,V 
\subset [\bar{p}] \text{ such that } \lvert U \rvert, \lvert V \rvert \geq \bar{k}, \bar{\theta}_{ij} \geq \bar{\lambda} \text{ if } (i,j) \in U \times V, \bar{\theta}_{ij}=0 \text{ otherwise}  \right\}.\]
\end{definition}

\begin{lemma}[\cite{ma2015computational}'s reduction from planted clique to ${\sf Submat}(\bar{p},\bar{k},\bar{\lambda},\bar{t})$ can be implemented in multiple access randomized logspace]\label{lem:submat-redn}\ \\
Let $\ell = \omega(1)$, $k_s$, $\bar{p}$, $\omega(1) = \bar{k} \leq \bar{p}$, and $\bar{t} = \lceil 4\log \bar{p} \rceil$ be non-decreasing sequences of positive integers. Let $\bar{\lambda}$ be a sequence of positive numbers. Suppose $20\bar{k} = k_s =  O(\ell^{1-\delta})$ for some constant $\delta >0$. Further, $\bar{p}$ is a factor of $\frac{\ell \cdot k_s}{2}$ such that $ 2k_s \leq \bar{p} \leq \frac{\ell \cdot k_s}{2}$ and $\bar{\lambda} \leq \frac{\bar{p}}{(\ell \cdot k_s)\sqrt{6 \log (\ell \cdot k_s)}}$. Assume that $k_s$, $\bar{p}$, and $[\bar{\lambda}]_{\bar{t}}$ can be computed and stored given $\ell$ using $O(\log \ell)$ bits of space.

If there is a randomized logspace algorithm that solves ${\sf Submat}(\bar{p},\bar{k},\bar{\lambda},\bar{t})$, there is a multiple access randomized logspace algorithm that solves ${\sf kPC_D}(\ell \cdot k_s,k_s)$ (or ${\sf clkPC_D}(\ell \cdot k_s,k_s)$) using at most $O((\ell\cdot k_s)^2 \log \ell)$ multiple access random bits.
\end{lemma}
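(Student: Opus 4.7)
The plan is to follow the template of Lemma~\ref{lem: spca-br13-redn} and verify that each step of the \cite{ma2015computational} reduction can be implemented in $O(\log \ell)$ working space using at most $O((\ell k_s)^2 \log \ell)$ multiple access random bits. Let $\mathcal{S}$ be the assumed randomized logspace algorithm for ${\sf Submat}(\bar{p}, \bar{k}, \bar{\lambda}, \bar{t})$. I will construct a multiple access randomized logspace algorithm $\mathcal{S}'$ that on input $\{\ell \cdot k_s, k_s, A_G\}$ produces a $\bar{p} \times \bar{p}$ matrix $Y$ and invokes $\mathcal{S}$ on $\{\bar{p}, \bar{k}, [\bar{\lambda}]_{\bar{t}}, \bar{t}, Y\}$, outputting its answer.

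The algorithm $\mathcal{S}'$ proceeds in three stages. First, it uses Lemma~\ref{lem:permutation} to sample an efficiently computable random function $\pi : [\ell k_s] \to [\ell k_s]$ whose distribution is within $o(1)$ total variation distance of a uniformly random permutation, costing $O(\ell k_s \log^2(\ell k_s))$ multiple access random bits and $O(\log \ell)$ working space. The purpose of $\pi$ is to convert the structured $k$-partite (or clique leakage $k$-partite) planted location into a (near) uniformly random $k_s$-subset of $[\ell k_s]$, which is the input distribution for which \cite{ma2015computational} designed their reduction. Second, $\mathcal{S}'$ partitions $\{1,\dots,\ell k_s/2\}$ and $\{\ell k_s/2+1,\dots,\ell k_s\}$ into $\bar{p}$ contiguous blocks of size $\tfrac{\ell k_s}{2\bar{p}}$ each (well-defined since $\bar{p}$ divides $\tfrac{\ell k_s}{2}$), and for every $(i,j) \in [\bar{p}]^2$ defines $S_{ij}$ to be the sum of edge indicators of the permuted graph going from block $i$ on the left to block $j$ on the right. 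Each $S_{ij}$ is an $O(\log \ell)$-bit integer that $\mathcal{S}'$ recomputes on demand from multiple access to $A_G$ and $\pi$. Third, for each $(i,j)$, $\mathcal{S}'$ samples a truncated Gaussian via inverse-CDF using $O(\bar{t}) = O(\log \ell)$ fresh multiple access random bits, and sets $Y_{ij}$ to be the appropriately rescaled sum of $S_{ij}$ and this noise, exactly as in \cite{ma2015computational}.

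The space usage of $\mathcal{S}'$ is $O(\log \ell)$ throughout: the parameters $k_s$, $\bar{p}$, and $[\bar{\lambda}]_{\bar{t}}$ fit in $O(\log \ell)$ bits by assumption; block indices and running sums use $O(\log \ell)$ bits; and $\mathcal{S}$ itself is a randomized logspace algorithm on an input of polynomial size. The multiple access randomness totals $O(\ell k_s \log^2(\ell k_s))$ for $\pi$ plus $\bar{p}^2 \cdot O(\log \ell) \leq O((\ell k_s)^2 \log \ell)$ for the Gaussians, summing to $O((\ell k_s)^2 \log \ell)$ as claimed.

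For correctness, I will apply the data processing inequality together with two total variation estimates. Replacing $\pi$ with a true uniformly random permutation contributes $o(1)$ by Lemma~\ref{lem:permutation}; after this replacement, the analysis in \cite{ma2015computational} (under the constraints $20\bar{k} = k_s$, $\bar{p} \geq 2k_s$, and $\bar{\lambda} \leq \bar{p}/((\ell k_s)\sqrt{6\log(\ell k_s)})$) shows that $Y$ is exactly distributed as ${\sf PGM}(\bar{p}, 0_{\bar{p}}, \bar{t})$ when $A_G \sim {\sf G}(\ell k_s, 1/2)$ and has $o(1)$ total variation distance from a distribution in ${\sf H}_1(\bar{p})$ of ${\sf Submat}(\bar{p}, \bar{k}, \bar{\lambda}, \bar{t})$ when $A_G \sim {\sf kG}(\ell k_s, 1/2, k_s)$ (the clique leakage variant differs only in a further conditioning, which affects neither the block sum marginals nor the signal calculation beyond $o(1)$). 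I expect the main subtlety to be the space-efficient on-demand sampling of the truncated Gaussians: $\mathcal{S}$ may query any $Y_{ij}$ multiple times, so $\mathcal{S}'$ must recompute the same noise sample from its multiple access randomness tape on each query. This is handled by allotting a fixed length window of the multiple access randomness tape to each $(i,j)$ pair and performing inverse-CDF sampling to $\bar{t}$ bits of precision from those bits, which fits in $O(\log \ell)$ space since $\bar{t} = O(\log \ell)$ and the necessary cumulative probabilities admit closed-form approximations computable in $O(\log \ell)$ space.
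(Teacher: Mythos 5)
The overall template — inline a permutation to uniformize the planted location, generate a $\bar p\times\bar p$ matrix from the permuted graph plus multiple access randomness, invoke the assumed ${\sf Submat}$ solver, and bound total variation by the data processing inequality — matches the paper's approach, and your space and randomness accounting is in the right ballpark. But there is a genuine gap in the central step: the matrix construction you describe is not the reduction in \cite{ma2015computational} and would not produce the required null distribution.

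The actual reduction (as the paper implements it) does the following. With $N_2 := \ell k_s/2$, sample two $N_2\times N_2$ matrices $B_0$, $B_1$ with i.i.d.\ entries from two carefully constructed discrete distributions $Q_0$, $Q_1$ supported on $[-\bar M,\bar M]_{\bar w}$. Form $B$ by, for each entry, choosing the $B_0$ sample if the corresponding edge indicator (from the lower-left quarter of the permuted adjacency matrix) is $0$ and the $B_1$ sample if it is $1$. Then partition $B$ into $(N_2/\bar p)^2$ blocks of size $\bar p\times\bar p$, sum the blocks and divide by $N_2/\bar p$, and truncate. The whole point of introducing $Q_0$ and $Q_1$ (the ``Gaussian cloning'' device) is that each $B$-entry already lives in the continuous Gaussian world, so the fold of $B$ is essentially Gaussian, and under the null it matches ${\sf PGM}(\bar p,0_{\bar p},\bar t)$ up to $o(1)$ total variation. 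You instead compute a block edge \emph{count} $S_{ij}$ and then add a single Gaussian per $(i,j)$. Under the null, $S_{ij}$ is a centered binomial, so your $Y_{ij}$ is a binomial convolved with a Gaussian, not a Gaussian, and this is not $o(1)$-close to ${\sf PGM}(\bar p,0_{\bar p},\bar t)$ for the relevant parameter scalings without additional work that \cite{ma2015computational} does not provide. Consequently your appeal to ``the analysis in \cite{ma2015computational}'' does not apply to the algorithm you constructed. This also changes the randomness budget: the paper spends $2N_2^2\bar T$ bits precisely because it needs one $Q$-sample per entry of the $N_2\times N_2$ matrix, not one per entry of the $\bar p\times\bar p$ matrix.

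A second, smaller issue: the genuinely delicate part of the paper's space analysis is verifying that $Q_0$ and $Q_1$ themselves can be sampled in $O(\log\ell)$ space. The paper does this by noting the support has $\poly(\ell)$ elements so a linear-search inverse-CDF works, and by arguing that each pmf value, though defined via integrals of smooth functions, can be computed to the needed $\bar T$ bits of precision in $O(\log\ell)$ space. You gesture at ``closed-form approximations'' for Gaussian CDFs, but since you never introduce $Q_0,Q_1$ this part of the argument is missing entirely. If you fix the matrix construction to match \cite{ma2015computational}, you will need to address this sampling step explicitly.

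One presentational difference that is fine: the paper factors the permutation step through Lemma~\ref{lem:kpc-to-pc} (reducing ${\sf kPC_D}/{\sf clkPC_D}$ to ${\sf PC_D}$ first, then running a permutation-free reduction on ${\sf PC_D}$ since that distribution is permutation-invariant), whereas you inline the permutation directly. Either packaging works.
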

\begin{proof}
Because of the reduction in Lemma~\ref{lem:kpc-to-pc}, it suffices to demonstrate a multiple access randomized logspace algorithm that solves ${\sf PC_D}(\ell \cdot k_s,k_s)$ using at most $O((\ell\cdot k_s)^2 \log \ell)$ multiple access random bits. We show that the reduction from \cite[Theorem 4]{ma2015computational} can yield such an algorithm since $\bar{k} = \omega(1)$.

Our parameters are slightly more constrained than the corresponding parameters in \cite{ma2015computational}. However, since we have only introduced (rather than removed) constraints, the reduction's proof of correctness from \cite{ma2015computational} carries over unchanged, and we only need to reason about its space complexity. These extra constraints are extremely mild and technical, and do not affect the scaling of any parameters involved. They ensure that the number of vertices ($\ell \cdot k_s$) is an integral multiple of the number of planted clique vertices ($k_s$) because we want to reduce from $k$-partite versions of the clique problem. Further, they ensure $k_s$, $\bar{p}$, and $[\bar{\lambda}]_{\bar{t}}$ can be stored and computed given $\ell$ using $O(\log \ell)$ bits of space. The first of these is useful in downstream applications when we need to invoke our randomness harvesting lemmas from Sections~\ref{subsec:rand-from-kpc} and \ref{subsec:rand-from-clkpc}. The latter are useful in our current proof to ensure our algorithm `knows' the size of the ${\sf Submat}$ instance it should reduce to\footnote{\cite{ma2015computational} does not need these technical conditions because they use a (very weakly) non-uniform hardness assumption.}.

Let $\mathcal{S}$ be a randomized logspace algorithm that solves ${\sf Submat}(\bar{p},\bar{k},\bar{\lambda},\bar{t})$. Let $\mathcal{S'}$ be a multiple access randomized algorithm that has $\Theta((\ell \cdot k_s)^2 \log \ell)$ multiple access random bits and $\{\ell \cdot k_s,k_s,A_G \}$ as input corresponding to ${\sf PC_D}(\ell \cdot k_s,k_s)$. $\mathcal{S'}$ can compute and store $\bar{p}$ and $\bar{k}$ (and hence $\bar{t}$ and $[\bar{\lambda}]_{\bar{t}}$) using $O(\log \ell)$ bits of space. Let $\bar{M}:=\sqrt{6\log (\ell \cdot k_s)}$, $\bar{\mu}:=\frac{1}{2\bar{M}}$, $\bar{w}:=\bar{t}+6\lceil\log(\ell \cdot k_s)\rceil$, and $\bar{T}:=\lceil \log \bar{M}\rceil + \bar{w} + 3 \lceil \log (\ell \cdot k_s)\rceil$. Let $Q_0$ and $Q_1$ be two distributions (to be defined later) on the support of $[-\bar{M},\bar{M}]_{\bar{w}}$. Any sample from either distribution can be described using $\lceil \log \bar{M} \rceil + \bar{w} = O(\log \ell)$ bits. Further assume (an assumption we will justify in the final paragraph of this proof) that sampling an independent sample from either of them can be done using $O(\log \ell)$ bits of space and $\bar{T}$ independent uniform random bits.

Define the integer $N_2 := \frac{\ell \cdot k_s}{2}$ and let $B_0$ (respectively $B_1$) be a $N_2 \times N_2$ matrix filled with independent sample from $Q_0$ (respectively $Q_1$). $\mathcal{S'}$ can provide multiple  access to entries of $B_0$ and $B_1$ since it has multiple access to $2 \cdot  N_2^2 \cdot \bar{T} = O((\ell \cdot k_s)^2 \log \ell)$ independent uniform random bits. Letting $A \in \{0,1\}^{N_2 \times N_2}$ be the lower left quarter of the input adjacency matrix $A_G$, define $B$ as the $N_2 \times N_2$ matrix whose $(i,j)^{th}$ entry is ${B_0}_{ij}(1-{A_0}_{ij})+{B_1}_{ij}({A_0}_{ij})$. Noting that $\bar{p}$ divides $N_2$ by assumption, partition $B$ into $(\frac{N_2}{\bar{p}})^2$ consecutive blocks of size $\bar{p} \times \bar{p}$ and let $X$ denote the $\bar{p} \times \bar{p}$ matrix that is the sum of all blocks divided by $\frac{N_2}{\bar{p}}$. Clearly $\mathcal{S'}$ can provide multiple access to entries of $X$. $\mathcal{S'}$ now outputs the answer given by $\mathcal{S}$ when invoked on a $[X]_{\bar{t}}$ along with the corresponding parameters. Since $\bar{p} \leq \poly(\ell)$, $\mathcal{S'}$ can simulate this call to $\mathcal{S}$.

To complete the proof, we justify that the distributions $Q_0$ and $Q_1$ used by \cite{ma2015computational} can indeed be sampled from using $\bar{T}$ independent uniform random bits and $O(\log \ell)$ bits of space. Remark 5 in \cite{ma2015computational} describes how the `inverse CDF' technique can be used to do this if we can compute any desired entry in the CDF (equivalently pmf) of either distribution. This technique can easily be implemented space efficiently using linear search because the support of each distribution has at most $\bar{M}2^{\bar{w}+1} \leq \poly(\ell)$ elements. Hence we only need that the probability of any sample of either distribution can be computed using $O(\log \ell)$ bits of space. These values are all, by construction, representable using $\bar{T}$ bits of space. Further, they are $\bar{T}$ bit truncations of real numbers defined using simple univariate operations such as addition, subtraction, multiplication, division, and integration of smooth functions, which can all be computed up to $s$ bits of accuracy using $O(s)$ bits of space. Computing all these operations using some large enough constant multiple of $\bar{T}$ bits of precision (which is still $O(\log \ell)$) will give us a number whose first $\bar{T}$ bits are indeed the correct pmf value. Hence each of these values can be computed using $O(\log \ell)$ bits of space.
\end{proof}

\begin{theorem}[Logspace hardness of planted submatrix detection (Restricted parameter range)]\label{thm:submat-restricted}\ \\
Let $\bar{p}$, $\bar{k}$, $\bar{\lambda}$ and $\bar{t}$ be as described in Lemma~\ref{lem:submat-redn} with the additional constraint that $\bar{k} = O(\ell^{\frac{1}{2}-\delta})$ for some constant $\delta > 0$. If the logspace $k$-partite planted clique conjecture {\sf kPC-Conj-Space} (Conjecture~\ref{conj:logspace-kpc}) is true, no randomized logspace algorithm can solve ${\sf Submat}(\bar{p},\bar{k},\bar{\lambda},\bar{t})$.
\end{theorem}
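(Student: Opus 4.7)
The plan is to combine the reduction Lemma~\ref{lem:submat-redn} with the randomness harvesting Lemma~\ref{lem:basic-kpc-randomness-harvesting}. Suppose, for contradiction, that there is a randomized logspace algorithm $\mathcal{S}$ solving ${\sf Submat}(\bar{p},\bar{k},\bar{\lambda},\bar{t})$. Invoking Lemma~\ref{lem:submat-redn}, we obtain a multiple access randomized logspace algorithm $\mathcal{S}'$ that solves ${\sf kPC_D}(\ell \cdot k_s, k_s)$ using at most $C(\ell \cdot k_s)^2 \log \ell$ multiple access random bits, where $k_s = 20\bar{k} = O(\ell^{\frac{1}{2}-\delta})$ and $C$ is some absolute constant. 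The goal then is to choose auxiliary parameters $k$ and $m$ so that Lemma~\ref{lem:basic-kpc-randomness-harvesting} applies and contradicts the existence of $\mathcal{S}'$.

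Concretely, I would choose $k := \lceil \ell^{1-\delta/2} \rceil$ (so that $k = O(\ell^{1-\delta/2})$ satisfies the hypothesis of {\sf kPC-Conj-Space} and, for large enough $\ell$, $k_s \leq k-1$ since $k_s = O(\ell^{\frac{1}{2}-\delta})$). For the harvested randomness budget, I need $m = o(\ell)$ such that
\[
{\ell \cdot k \choose 2} - {\ell \cdot k - m \choose 2} \;=\; \Theta(\ell \cdot k \cdot m) \;\geq\; C(\ell \cdot k_s)^2 \log \ell.
\]
Setting $m := \lceil \ell^{1-\delta} \log \ell \rceil$ works: with this choice, $\ell \cdot k \cdot m = \Theta(\ell^{3-3\delta/2} \log \ell)$ dominates $C(\ell \cdot k_s)^2 \log \ell = O(\ell^{3-2\delta} \log \ell)$, and also $m = o(\ell)$ since $\delta > 0$. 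One also needs to verify that $k$ and $m$ can be computed and stored from $\ell \cdot k$ and $k$ in $O(\log \ell)$ additional bits of space, which is immediate because both are simple closed-form functions of $\ell$ and $\delta$.

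With these choices of $k$ and $m$ verified, Lemma~\ref{lem:basic-kpc-randomness-harvesting} (with this $k$, this $k_s$, and this $m$) asserts that under {\sf kPC-Conj-Space}, no multiple access randomized logspace algorithm using at most ${\ell \cdot k \choose 2} - {\ell \cdot k - m \choose 2}$ multiple access random bits can solve ${\sf kPC_D}(\ell \cdot k_s, k_s)$. Since our $\mathcal{S}'$ uses at most $C(\ell \cdot k_s)^2 \log \ell$ multiple access random bits, which is at most ${\ell \cdot k \choose 2} - {\ell \cdot k - m \choose 2}$ for all sufficiently large $\ell$, this contradicts the existence of $\mathcal{S}'$, and hence of $\mathcal{S}$.

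The main obstacle, and really the only non-bookkeeping step, is the arithmetic verification that the three scaling constraints fit simultaneously: (i) $k_s \leq k-1$ (so the self-reducibility of the $k$-partite planted clique instance applies), (ii) $k = O(\ell^{1-\delta'})$ for some constant $\delta' > 0$ (so {\sf kPC-Conj-Space} is applicable), and (iii) $(\ell \cdot k_s)^2 \log \ell = o(\ell \cdot k \cdot m)$ with $m = o(\ell)$ (so the harvested randomness budget from Lemma~\ref{lem:basic-kpc-randomness-harvesting} dominates the reduction's demand). The gap $\bar{k} = O(\ell^{\frac{1}{2}-\delta})$ in the theorem's hypothesis (as opposed to the weaker $\bar{k} = O(\ell^{1-\delta})$ that would suffice if we had stronger harvesting) is precisely what buys the polynomial slack between $k_s$ and $k$ needed to make (iii) feasible, and this is why the theorem's parameter range is restricted.
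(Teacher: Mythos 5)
Your proposal is correct and takes essentially the same route as the paper: invoke Lemma~\ref{lem:submat-redn} to get a multiple access randomized logspace algorithm for ${\sf kPC_D}(\ell \cdot k_s, k_s)$ with $k_s = 20\bar{k}$, then choose auxiliary $k$ and $m$ so that Lemma~\ref{lem:basic-kpc-randomness-harvesting} forbids such an algorithm. The only cosmetic difference is the choice of $m$: you take $m = \lceil \ell^{1-\delta}\log\ell\rceil$ while the paper more simply sets $k = m = \lceil \ell^{1-\delta/2}\rceil$; both satisfy $m = o(\ell)$ and make $O((\ell k_s)^2 \log\ell) = o(\ell k m)$, so the argument goes through either way.
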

\begin{proof}
Let $k = m = \lceil \ell^{1-\frac{\delta}{2}} \rceil$ be non-decreasing positive integer sequences. $k_s = 20\bar{k} = O(\ell^{\frac{1}{2}-\delta})$, which gives $O((\ell\cdot k_s)^2 \log \ell) = o(\ell \cdot k \cdot m) \leq {\ell \cdot k \choose 2} - {\ell \cdot k-m \choose 2}$ for large enough $\ell$. Combining Lemma~\ref{lem:basic-kpc-randomness-harvesting} (using $k_s$, $k$, and $m$ as defined) with Lemma~\ref{lem:submat-redn} and using the fact that $k_s$ and $m$ are computable from $\ell$ using $O(\log \ell)$ bits of space completes the proof.
\end{proof}
An example illustrates the impact of the restriction $\bar{k} = O(\ell^{\frac{1}{2}-\delta})$ required in Theorem~\ref{thm:submat-restricted}. Consider parameter sequences such that $\bar{\lambda} = \Theta(1 / \sqrt{\log \bar{p}})$ and $\bar{k} = \Theta(\bar{p}^{\frac{1}{3}})$. Even though we believe that a submatrix detection instance with such parameters can not be solved by randomized logspace algorithms, we can not use Theorem~\ref{thm:submat-restricted} to show this. We can, however, use Theorem~\ref{thm:submat-full} to circumvent this restriction.\\

\begin{theorem}[Logspace hardness of planted submatrix detection]\label{thm:submat-full}\ \\
Let $\bar{p}$, $\bar{k}$, $\bar{\lambda}$ and $\bar{t}$ be as described in Lemma~\ref{lem:submat-redn}. If the clique leakage logspace $k$-partite planted clique conjecture {\sf clkPC-Conj-Space} (Conjecture~\ref{conj:logspace-clkpc}) is true, no randomized logspace algorithm can solve ${\sf Submat}(\bar{p},\bar{k},\bar{\lambda},\bar{t})$.
\end{theorem}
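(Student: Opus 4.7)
The plan is to compose the reduction in Lemma~\ref{lem:submat-redn} (which turns a logspace solver for submatrix detection into a multiple access randomized logspace solver for ${\sf clkPC_D}(\ell \cdot k_s,k_s)$) with the impossibility result in Lemma~\ref{lem:clkpc-randomness-harvesting} (which, under {\sf clkPC-Conj-Space}, rules out such solvers using at most $\binom{\ell \cdot k}{2}$ multiple access random bits). Specifically, I would assume for contradiction that there is a randomized logspace algorithm $\mathcal{S}$ solving ${\sf Submat}(\bar{p},\bar{k},\bar{\lambda},\bar{t})$, apply Lemma~\ref{lem:submat-redn} with $k_s := 20\bar{k}$ to obtain a multiple access randomized logspace algorithm $\mathcal{S}'$ solving ${\sf clkPC_D}(\ell \cdot k_s,k_s)$ using $O((\ell \cdot k_s)^2 \log \ell)$ multiple access random bits, and then select an auxiliary parameter sequence $k$ so that Lemma~\ref{lem:clkpc-randomness-harvesting} applies to $\mathcal{S}'$ and delivers the contradiction.

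The one numerical step that needs care is choosing $k$. I would take $k := \lceil C \cdot k_s \sqrt{\log \ell} \rceil$ for a sufficiently large absolute constant $C$, so that the randomness budget $O((\ell \cdot k_s)^2 \log \ell)$ of $\mathcal{S}'$ is at most $\binom{\ell \cdot k}{2}$ for all large enough $\ell$. One then verifies each hypothesis of Lemma~\ref{lem:clkpc-randomness-harvesting}. The constraint $k_s \leq k$ holds by construction. Because Lemma~\ref{lem:submat-redn} assumes $k_s = O(\ell^{1-\delta})$ for a constant $\delta > 0$, the new parameter satisfies $k = O(\ell^{1-\delta}\sqrt{\log \ell}) = O(\ell^{1-\delta'})$ for the positive constant $\delta' := \delta/2$, which keeps us in the regime where {\sf clkPC-Conj-Space} is posited. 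Finally, the assumption in Lemma~\ref{lem:submat-redn} that $k_s$ is computable from $\ell$ in $O(\log \ell)$ bits of space, combined with the simple arithmetic defining $k$, lets us compute $k_s$ from $\ell \cdot k$ and $k$ within the space budget required by Lemma~\ref{lem:clkpc-randomness-harvesting}.

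With all hypotheses checked, Lemma~\ref{lem:clkpc-randomness-harvesting} under {\sf clkPC-Conj-Space} implies that $\mathcal{S}'$ cannot solve ${\sf clkPC_D}(\ell \cdot k_s,k_s)$, contradicting its construction from $\mathcal{S}$ and proving the theorem. The argument is essentially a bookkeeping composition; the only substantive point is the matching of parameters, where the polynomial gap $\ell^{\delta}$ in the hardness assumption comfortably absorbs both the $\sqrt{\log \ell}$ factor needed to cover the $\log \ell$ overhead in the randomness count of Lemma~\ref{lem:submat-redn} and the $\ell^{\delta/2}$ slack we reserve so that the chosen $k$ still lies inside the hard regime of {\sf clkPC-Conj-Space}. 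This is precisely the improvement that clique leakage buys over Lemma~\ref{lem:basic-kpc-randomness-harvesting}, which required $k_s = O(\ell^{\frac{1}{2}-\delta})$ and thus forced the restricted parameter range of Theorem~\ref{thm:submat-restricted}.
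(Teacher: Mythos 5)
Your proof is correct and follows the paper's approach exactly: compose Lemma~\ref{lem:submat-redn} with Lemma~\ref{lem:clkpc-randomness-harvesting} after choosing an auxiliary parameter $k$ large enough that $\binom{\ell\cdot k}{2}$ covers the $O((\ell\cdot k_s)^2\log\ell)$ randomness budget while still keeping $k = O(\ell^{1-\delta'})$ for a positive constant $\delta'$. The only cosmetic difference is the choice of $k$ --- the paper takes $k := \lceil\ell^{1-\delta/2}\rceil$ while you take the tighter $k := \lceil C\, k_s\sqrt{\log\ell}\rceil$; both verify all the hypotheses of the harvesting lemma, so the argument is the same.
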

\begin{proof}
Let $k = \lceil \ell^{1-\frac{\delta}{2}} \rceil$ be a non-decreasing positive integer sequence. Since $k_s = O(\ell^{1-\delta})$, we have $O((\ell\cdot k_s)^2 \log \ell) \leq {\ell \cdot k \choose 2}$ for large enough $\ell$. Combining Lemma~\ref{lem:clkpc-randomness-harvesting} (using $k_s$ and $k$ as defined) with Lemma~\ref{lem:submat-redn} and using the fact that $k_s$ is computable from $\ell$ using $O(\log \ell)$ bits of space completes the proof.
\end{proof}

\subsection{Testing Almost $k$-wise Independence \cite{alon2007testing}}
\label{subsec:k-wise}

\begin{definition}[$(\bar{\epsilon},\bar{k})$-wise Independence: Definition 2.1 in \cite{alon2007testing}]\label{def:eps-k-wise-indep}\ \\
A distribution $D$ supported on $\{0,1\}^{\bar{n}}$ is $(\bar{\epsilon},\bar{k})$-wise independent if for any $\bar{k}$ indices $i_1<i_2<...<i_{\bar{k}}$ and any vector $v \in \{0,1\}^{\bar{k}}$, $\lvert \underset{x \sim D}{\Prob}\left(x_{i_1}...x_{i_{\bar{k}}} = v\right)-2^{-{\bar{k}}}\rvert \leq \bar{\epsilon}$. Let $\mathcal{D}_{(\bar{\epsilon},\bar{k})}$ denote the set of all $(\bar{\epsilon},\bar{k})$-wise independent distributions.\\
\end{definition}

\begin{definition}[Testing $(\bar{\epsilon},\bar{k})$-wise vs $(\bar{\epsilon'},\bar{k})$-wise Independence: ${\sf Indep}(\bar{n},\bar{k},\bar{\epsilon},\bar{\epsilon'},\bar{s})$]\label{def:eps-vs-eps-prime}\ \\
Let $\bar{n}=\omega(1)$, $\bar{k} \leq \bar{n}$, and $\bar{s} \leq \poly(\bar{n})$ be non-decreasing positive integer sequences. Let $0 < \bar{\epsilon} < \bar{\epsilon'} < 1$ be sequences. The input consists of all the aforementioned parameters and $\bar{s}$ independent samples from a distribution $D$ supported on $\{0,1\}^{\bar{n}}$.
This is the hypothesis testing problem induced by \[{\sf H}_0(\bar{n}) = \left\{D: D \in \mathcal{D}_{(\bar{\epsilon},\bar{k})} \right\} \hspace{0.1cm} \text{ and } \hspace{0.1cm} {\sf H}_1(\bar{n}) = \left\{D: D \notin \mathcal{D}_{(\bar{\epsilon'},\bar{k})}\right\}.\]
\end{definition}

\begin{lemma}[\cite{alon2007testing}'s reduction from ${\sf PC_D}(n,k)$ to ${\sf Indep}(\bar{n},\bar{k},\bar{\epsilon},\bar{\epsilon'},\bar{s})$ can be implemented in multiple access randomized logspace]\label{lem:k-wise-redn}\ \\
Let $0 < \alpha \leq 1$ and $\delta > 0$ be any constants and let $\ell = \omega(1)$, $n=2^{\ell}$, $\omega(\log^2 n) = k = O(n^{\frac{1}{2}-\delta})$, $\bar{k} = 1+\ell$, $\bar{n} = 2^{\frac{\bar{k}}{\alpha}}$, and $\bar{s} \leq \poly(\bar{n})$ be non-decreasing positive integer sequences. Define sequences $0 < \bar{\epsilon} < \bar{\epsilon'} < 1$ as $\bar{\epsilon} := \frac{2\alpha \log^2(\bar{n})}{\bar{n}^{\alpha}}$ and $\bar{\epsilon'} := \frac{k-2}{\bar{n}^{\alpha}}$. Assume that $\bar{s}$ can be computed and stored using $O(\log n)$ bits of space given $n$ and $k$. 

If there is a randomized logspace algorithm that solves ${\sf Indep}(\bar{n},\bar{k},\bar{\epsilon},\bar{\epsilon'},\bar{s})$, there is a multiple access randomized logspace algorithm that solves ${\sf PC_D}(n,k)$.
\end{lemma}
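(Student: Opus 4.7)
The plan mirrors the structure of Lemmas~\ref{lem: spca-br13-redn} and \ref{lem:submat-redn}: take the reduction of \cite{alon2007testing} from ${\sf PC_D}(n,k)$ to ${\sf Indep}(\bar n,\bar k,\bar\epsilon,\bar\epsilon',\bar s)$ and verify that it can be implemented in the multiple access randomized logspace model rather than just in polynomial time. Given a randomized logspace tester $\mathcal{S}$ for ${\sf Indep}$, I would build a multiple access randomized logspace algorithm $\mathcal{S'}$ that on input $\{n,k,A_G\}$ feeds $\mathcal{S}$ the $\bar s$ samples in $\{0,1\}^{\bar n}$ produced by the \cite{alon2007testing} sampler applied to $A_G$, and returns $\mathcal{S}$'s answer. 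Since $\mathcal{S}$ is a logspace algorithm it may query any coordinate of any sample many times, and $\mathcal{S'}$ must therefore be able to recompute each bit on demand; this is precisely why the reduction is phrased in the multiple access (rather than read-once) randomness model.

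The central task is to check that any one coordinate of any one sample can be produced in $O(\log n)$ working space. The \cite{alon2007testing} sampler draws each of the $\bar s$ samples independently by first choosing a constant number of vertices of $[n]\cong\{0,1\}^{\ell}$ uniformly at random, using $O(\ell)=O(\log n)$ bits per sample read from the multiple access random tape, and then forming each of the $\bar n$ coordinates as an explicit simple function of the chosen vertex labels, the coordinate index (which has $\bar k/\alpha = O(\log n)$ bits), and at most one lookup into $A_G$. All subsidiary operations (binary encoding of vertex and coordinate indices, parities/XORs, adjacency queries) are standard $O(\log n)$-space primitives. The assumption that $\bar s$ is computable and storable in $O(\log n)$ space given $n,k$ lets $\mathcal{S'}$ loop over samples, and since $\bar n, \bar s \le \poly(n)$ the composite algorithm meets the polynomial time bound demanded of a logspace machine. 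The multiple access randomness budget is $\bar s \cdot O(\log n) \le \poly(n)$, which is allowed without any restriction by the statement of the lemma.

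Correctness is then inherited verbatim from \cite{alon2007testing}: for the stated $\bar\epsilon, \bar\epsilon', \bar n, \bar k$, when $A_G \sim \G(n,1/2)$ the induced sample distribution is $(\bar\epsilon,\bar k)$-wise independent, whereas when $A_G \sim \G(n,1/2,k)$ with $k=\omega(\log^2 n)$ the $k$-clique causes a specific set of $\bar k$ coordinates of the induced distribution to deviate from uniform by at least $\bar\epsilon' = (k-2)/\bar n^{\alpha}$, violating $(\bar\epsilon',\bar k)$-wise independence. Thus $\mathcal{S}$'s testing advantage on ${\sf Indep}$ transfers directly to a distinguishing advantage for $\mathcal{S'}$ on ${\sf PC_D}(n,k)$.

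The main obstacle I anticipate is essentially bookkeeping: writing out the coordinate function of the \cite{alon2007testing} sampler explicitly enough to see that it really is an $O(\log n)$-space computation, and carefully pinning down the identifications $[n]\leftrightarrow\{0,1\}^{\ell}$ and $[\bar n]\leftrightarrow\{0,1\}^{\bar k/\alpha}$. Unlike the Sparse PCA and submatrix reductions, the sampler here is fully discrete and involves no continuous distributions or Haar-random permutations, so no total-variation-distance wrapper or invocation of Lemma~\ref{lem:permutation} should be required; this is exactly the feature that makes a purely logspace implementation clean once the mappings are set up.
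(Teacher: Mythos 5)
Your proposal is correct and takes essentially the same approach as the paper: recompute each coordinate of each sample on demand in $O(\log n)$ space by drawing a random row index per sample and looking up $A_G$ (with auxiliary random bits for the diagonal and for coordinates beyond $n$). The paper spells out the sampler a bit more concretely — it forms the $n\times\bar n$ matrix $B$ whose first $n$ columns are $A_G$ with random diagonal and whose remaining $\bar n - n$ columns are fresh random bits, then uses $\ell\cdot\bar s$ multiple access random bits to select $\bar s$ uniformly random rows of $B$ — but this is exactly the bookkeeping you flag as the remaining task, and your reasoning about why it fits in multiple access randomized logspace is the same.
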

\begin{proof}
The proof is a straightforward observation that the reduction of \cite{alon2007testing} can be implented using a multiple access randomized logspace algorithm. We only reason about the space complexity of this reduction. Its correctness follows immediately from the analysis in \cite{alon2007testing}. Let $\mathcal{S}$ be any randomized logspace algorithm that solves ${\sf Indep}(\bar{n},\bar{k},\bar{\epsilon},\bar{\epsilon'},\bar{s})$.

Let $\mathcal{S'}$ be a multiple access randomized logspace algorithm that receives as input an instance $\{n,k,A_G\}$ of ${\sf PC_D}(n,k)$. It can store and compute all the parameters $\bar{n},\bar{k},\bar{\epsilon},\bar{\epsilon'},\bar{s}$ using $O(\log n)$ bits of space. Consider a matrix $B \in \{0,1\}^{n \times \bar{n}}$ whose first $n$ columns are the adjacency matrix $A_G$ augmented with independent uniform random bits on the diagonal. The last $\bar{n}-n$ columns of $B$ consists entirely of more independent uniform random bits. Because $\mathcal{S'}$ has multiple access to $\poly(n)$ independent uniform random bits (and $\bar{n} \leq \poly(n)$), it can simulate multiple access to $B$. Consider a further $\ell \cdot \bar{s}$ multiple access random bits. $\mathcal{S'}$ can use them to simulate multiple access to $\bar{s}$ independent uniformly random rows of $B$. $\mathcal{S'}$ interprets these rows as $\bar{s}$ samples from a distribution on $\{0,1\}^{\bar{n}}$, invokes $\mathcal{S}$ on them, and outputs $\mathcal{S}$'s answer. $\mathcal{S'}$ can do this because $\bar{s} \cdot \bar{n} \leq \poly(n)$, so invoking $\mathcal{S}$ takes $O(\log n)$ bits of space. \cite{alon2007testing} shows that $\mathcal{S'}$ solves ${\sf PC_D}(n,k)$. \qedhere
\end{proof}

Combining Lemma~\ref{lem:k-wise-redn} with Lemma~\ref{lem:clhpc-randomness-harvesting} immediately yields the following.

\begin{theorem}[Logspace hardness of testing almost $k$-wise independence]\label{thm:k-wise}\ \\
Let $\bar{n}$, $\bar{k}$, $\bar{\epsilon}$, $\bar{\epsilon'}$, and $\bar{s}$ be as defined in Lemma~\ref{lem:k-wise-redn}. If the clique leakage logspace hypergraph planted clique conjecture {\sf clHPC-Conj-Space} (Conjecture~\ref{conj:logspace-clhpc}) is true, no randomized logspace algorithm can solve ${\sf Indep}(\bar{n},\bar{k},\bar{\epsilon},\bar{\epsilon'},\bar{s})$.
\end{theorem}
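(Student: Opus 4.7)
The plan is to prove Theorem~\ref{thm:k-wise} by a direct contrapositive combination of the two ingredients stated immediately before it. Specifically, assume for contradiction that there is a randomized logspace algorithm $\mathcal{A}$ that solves ${\sf Indep}(\bar{n},\bar{k},\bar{\epsilon},\bar{\epsilon'},\bar{s})$. I would then feed $\mathcal{A}$ into Lemma~\ref{lem:k-wise-redn} to produce a multiple access randomized logspace algorithm $\mathcal{A}'$ that solves ${\sf PC_D}(n,k)$, where $n = 2^{\ell}$ and $k$ are the planted clique parameters specified in the hypotheses of Lemma~\ref{lem:k-wise-redn}.

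Next, I would verify that the parameters $(n,k)$ produced in this step satisfy the hypotheses needed to invoke Lemma~\ref{lem:clhpc-randomness-harvesting}: namely, $n = \omega(1)$ and $k = O(n^{\frac{1}{2}-\delta})$ for some constant $\delta > 0$. Both are immediate from the parameter setup in Lemma~\ref{lem:k-wise-redn} (where $n = 2^{\ell}$ with $\ell = \omega(1)$ and the constant $\delta$ in the bound $k = O(n^{\frac{1}{2}-\delta})$ is inherited verbatim). Note that we do not need any upper bound on the multiple access randomness used by $\mathcal{A}'$, since Lemma~\ref{lem:clhpc-randomness-harvesting} rules out multiple access randomized logspace algorithms for ${\sf PC_D}(n,k)$ irrespective of how much multiple access randomness they use. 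Under {\sf clHPC-Conj-Space}, Lemma~\ref{lem:clhpc-randomness-harvesting} then gives that no such $\mathcal{A}'$ exists, contradicting our assumption.

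There is essentially no obstacle in this proof beyond a parameter consistency check, since the two ingredient lemmas were specifically designed to compose: Lemma~\ref{lem:k-wise-redn} outputs exactly the kind of ``multiple access randomized logspace algorithm for ${\sf PC_D}(n,k)$ (using any polynomial amount of multiple access randomness)'' that Lemma~\ref{lem:clhpc-randomness-harvesting} rules out under {\sf clHPC-Conj-Space}. The only thing worth emphasizing is that this clean composition is possible precisely because Lemma~\ref{lem:clhpc-randomness-harvesting} places no restriction on the number of multiple access random bits used, which is what allows us to handle the arbitrarily large $\poly(\bar{n}) = \poly(n)$ sample count $\bar{s}$ appearing in the reduction to testing $(\bar{\epsilon},\bar{k})$-wise versus $(\bar{\epsilon'},\bar{k})$-wise independence.
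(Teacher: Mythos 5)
Your proof is correct and is exactly the paper's argument: the paper proves Theorem~\ref{thm:k-wise} by the one-line observation that Lemma~\ref{lem:k-wise-redn} combined with Lemma~\ref{lem:clhpc-randomness-harvesting} immediately gives the result. Your parameter check and your observation that Lemma~\ref{lem:clhpc-randomness-harvesting} imposes no bound on the amount of multiple access randomness (which is precisely what makes the composition go through for arbitrary $\poly(n)$ sample counts $\bar{s}$) match the paper's intent.
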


\cite{alon2007testing} also consider a different formalization of the testing problem. However, they show that this other problem is equivalent to ${\sf Indep}(\bar{n},\bar{k},\bar{\epsilon},\bar{\epsilon'},\bar{s})$. That is, the two problems are related by the identity reduction. This is why we have only focused on ${\sf Indep}(\bar{n},\bar{k},\bar{\epsilon},\bar{\epsilon'},\bar{s})$.

\section{Auxiliary Lemmas}
\label{sec:auxlem}

\begin{lemma}[Space efficient uniformly random permutation from uniform random bits]\label{lem:permutation}\ \\
Let $n=\omega(1)$ and $r$ be non-decreasing positive integer sequences. Suppose $ 10 n (\lceil \log n \rceil)^2 \leq r \leq \poly(n)$. Given $r$ independent uniform random bits, there is a random function $\pi : [n] \rightarrow [n]$ with the following properties.
\begin{itemize}[leftmargin=0pt]
\item  Given multiple access to the $r$ random bits, $\pi(i)$ can be computed using a further $O(\log n)$ bits of space when given \textit{any} $i \in [n]$.
\item The total variation distance between $\pi$ and a uniformly random permutation from $[n] \rightarrow [n]$ decays as $O\left(n \cdot  \exp\left(\frac{-r}{2n\lceil \log n \rceil}\right)\right)$.
\end{itemize}

\end{lemma}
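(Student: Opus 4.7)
The plan is to define $\pi$ via a rank-based labeling scheme rather than the perhaps more natural rejection sampling approach. Rejection sampling --- trying candidates for $\pi(i)$ until one avoids $\pi(1), \ldots, \pi(i-1)$ --- appears to need recursion depth $\Theta(n)$ and hence $\Omega(n \log n)$ space, because each recursive call to an earlier $\pi(j)$ itself demands all values before $j$. The labeling scheme, by contrast, turns the entire construction into a single counting loop and sidesteps the recursion issue entirely.

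Set $L := \lfloor r/n \rfloor$, so $L \geq 10 (\lceil \log n \rceil)^2 - 1$. Partition the $r$ random bits into $n$ consecutive blocks of $L$ bits, and for each $i \in [n]$ let $\ell_i \in \{0,1\}^L$ denote the contents of block $i$. Define
\[
\pi(i) := 1 + \bigl|\{j \in [n] \setminus \{i\} : (\ell_j, j) <_{\mathrm{lex}} (\ell_i, i)\}\bigr|,
\]
which is always a permutation because $<_{\mathrm{lex}}$ is a strict total order on $\{0,1\}^L \times [n]$. To compute $\pi(i)$ in $O(\log n)$ space, initialize a counter $c \gets 0$, iterate $j$ from $1$ to $n$ (skipping $j = i$), and for each $j$ compare $(\ell_j, j)$ against $(\ell_i, i)$ bit by bit from the most significant bit downward, using multiple access to the random tape at position $(j-1)L + b$ to read bit $b$ of $\ell_j$. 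Increment $c$ whenever $(\ell_j, j) <_{\mathrm{lex}} (\ell_i, i)$ and output $c + 1$. The working memory only stores $j$, $c$, and a constant number of $O(\log n)$-bit pointers.

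For the statistical bound, let $E$ be the event that $\ell_1, \ldots, \ell_n$ are pairwise distinct. Since the labels are i.i.d.\ uniform on $\{0,1\}^L$, conditioning on $E$ and on the unordered multiset of labels leaves the sequence $(\ell_1, \ldots, \ell_n)$ uniform over its $n!$ orderings, so $\pi$ is uniform on the symmetric group on $[n]$. A union bound then shows that the total variation distance between $\pi$ and a uniformly random permutation is at most
\[
\Pr(E^c) \leq \binom{n}{2} 2^{-L} \leq 2 n^2 \cdot 2^{-r/n},
\]
and a direct comparison using the hypothesis $r \geq 10 n (\lceil \log n \rceil)^2$ shows this quantity is much smaller than the target bound $O(n \exp(-r / (2n \lceil \log n \rceil)))$.

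The main obstacle I anticipate is purely bookkeeping: one must verify that computing the bit index $(j-1)L + b$ and comparing two $L$-bit labels on the fly both fit within $O(\log n)$ working memory. Since $r \leq \poly(n)$ and therefore $L \leq \poly(n)$, the relevant arithmetic and bit indexing reduce to standard logspace operations on $O(\log n)$-bit integers, so this obstacle is really only notational.
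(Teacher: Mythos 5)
Your proof is correct, and it takes a genuinely different route from the paper's.

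The paper splits the $r$ bits into many chunks of length $\lceil\log n\rceil$, reads each chunk as an integer in $[2^{\lceil\log n\rceil}]$, and sets $\pi(i)$ to be the $i$th \emph{distinct} element of $[n]$ to appear when the chunks are scanned in order; the space-efficient recomputation uses the same ``scan back and count'' idea as yours, and the failure event (some element of $[n]$ never appears) is controlled by a coupon-collector tail bound, giving precisely the exponent $\exp(-r/(2n\lceil\log n\rceil))$ in the statement. You instead split the bits into exactly $n$ blocks of length $L=\lfloor r/n\rfloor$, attach one block to each index as a tie-broken label, and set $\pi(i)$ to be the lexicographic rank of $(\ell_i,i)$; the failure event (a label collision) is controlled by a birthday-paradox bound $\binom{n}{2}2^{-L}$. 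Both constructions reduce to a single $O(\log n)$-space counting loop, and your collision bound is indeed dominated by the paper's stated bound because $\ln 2 > 1/(2\lceil\log n\rceil)$ and $r/n\ge 10(\lceil\log n\rceil)^2$, so the check you call ``a direct comparison'' does go through. Two differences worth noting: your $\pi$ is always a genuine permutation (the paper's construction can fail to be one on the bad event, though this is immaterial for the total-variation bound), and your failure probability actually decays faster, roughly $\exp(-\Theta(r/n))$ versus the paper's $\exp(-\Theta(r/(n\log n)))$; both satisfy the stated bound, so the lemma is proved either way.
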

\begin{proof}
Divide the $r$ random bits into contiguous chunks of size $\lceil \log n \rceil$\footnote{Formally, the $j^{th}$ chunk consists of the bits $[j \cdot \lceil \log n \rceil] \setminus [(j-1) \cdot \lceil \log n \rceil]$.}. Each chunk represents an integer in $[2^{\lceil \log n \rceil}]$. Define $\pi : [n] \rightarrow [n]$ as follows. Enumerating over chunks in the natural ordering, let $\pi(i)$ be the $i^{th}$ \textit{distinct} integer in $[n]$ that we see represented by the random bits in a chunk. If there are fewer than $i$ distinct integers in $[n]
$ represented in all the chunks combined, let $\pi(i)$ be any fixed integer in $[n]$.

To compute $\pi(i)$, we maintain three $O(\log n)$ bit counters. Using the first counter, we enumerate over chunks in the natural ordering. We increment the second counter every time we see a chunk whose random bits represent an integer in $[n]$ we have not seen so far. We can do this as follows. Every time the first counter sees an integer in $[n]$, we use the third counter to enumerate over all previous chunks to check if we have already seen that integer. Once the second counter reaches $i$, we output the integer represented in the corresponding chunk as $\pi(i)$. If the first counter has enumerated over all chunks and the second counter does not reach $i$, output any fixed value. This shows the space complexity of computing $\pi$ is as claimed.

Standard tail bounds for the coupon collector problem imply that except with probability at most $O\left(n \cdot  \exp\left(\frac{-r}{2n\lceil \log n \rceil}\right)\right)$, each integer $i \in [2^{\lceil \log n \rceil}]$ (and hence in $[n]$) is represented in at least one chunk. Conditioned on every integer in $[n]$ being represented in at least one chunk, $\pi$ clearly defines a uniform random permutation. This proves the total variation bound using the well-known ``conditioning on an event" property of total variation distance \cite[Fact 3.1]{brennan2019optimal}.\qedhere\\
\end{proof}

\begin{lemma}[Reduction from $k$-Partite versions of Planted Clique to ${\sf PC_D}$]\label{lem:kpc-to-pc}\ \\
Let $\ell = \omega(1)$, $r$, and $k$ be non-decreasing positive integer sequences such that $r \geq 10 \ell \cdot k (\lceil \log (\ell \cdot k) \rceil)^2$. Assume there is a multiple access randomized logspace algorithm that solves ${\sf PC_D}(\ell \cdot k,k)$ using at most $r - 10 \ell \cdot k (\lceil \log (\ell \cdot k) \rceil)^2$ multiple access random bits. Then there is a multiple access randomized logspace algorithm that solves ${\sf kPC_D}(\ell \cdot k,k)$ or ${\sf clkPC_D}(\ell \cdot k,k)$ using $r$ multiple access random bits.
\end{lemma}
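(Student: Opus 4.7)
The plan is to implement the standard ``permute the vertex labels'' reduction. The key observation is that for any \emph{fixed} $k$-subset $K_0 \subset [\ell \cdot k]$, applying a uniformly random permutation $\pi : [\ell \cdot k] \to [\ell \cdot k]$ yields $\pi(K_0)$ which is a uniformly random $k$-subset. Consequently, if $A_G \sim {\sf kG}(\ell \cdot k, 1/2, k)$ (or $A_G \sim {\sf clkG}(\ell \cdot k, 1/2, k)$) and we permute the vertex labels by a uniform $\pi$, the distribution of the resulting graph is exactly ${\sf G}(\ell \cdot k, 1/2, k)$: the planted clique lands on a uniformly random $k$-subset, and the non-clique edges remain independent uniform bits. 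Under the null, ${\sf G}(\ell \cdot k, 1/2)$ is invariant under permutation of vertex labels, so the null distribution is preserved exactly.

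First, I would set aside $10 \ell \cdot k (\lceil \log (\ell \cdot k) \rceil)^2$ of the $r$ multiple access random bits and use them, via Lemma~\ref{lem:permutation}, to sample an efficiently computable random function $\pi : [\ell \cdot k] \to [\ell \cdot k]$ whose total variation distance from a uniformly random permutation is $O(\ell \cdot k \cdot \exp(-5 \log(\ell \cdot k))) = o(1)$. Then I would define $A_G^\pi$ as the adjacency matrix whose $(i,j)$-entry equals the $(\pi(i), \pi(j))$-entry of $A_G$. Let $\mathcal{S}$ denote the hypothesized multiple access randomized logspace algorithm solving ${\sf PC_D}(\ell \cdot k, k)$ with at most $r - 10 \ell \cdot k (\lceil \log (\ell \cdot k) \rceil)^2$ multiple access random bits. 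The reduction $\mathcal{S}'$ simulates $\mathcal{S}$ on input $\{\ell \cdot k, k, A_G^\pi\}$, using the remaining multiple access random bits as $\mathcal{S}$'s source of multiple access randomness, and outputs $\mathcal{S}$'s answer.

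For space efficiency, note that any query to an entry of $A_G^\pi$ can be answered by computing $\pi(i)$ and $\pi(j)$ using $O(\log (\ell \cdot k))$ bits of space each (Lemma~\ref{lem:permutation}) and then querying the corresponding entry of the input $A_G$. Since $\mathcal{S}$ itself is a multiple access randomized logspace algorithm, the composition $\mathcal{S}'$ runs in randomized logspace with the claimed $r$ multiple access random bits. For correctness, the data processing inequality and the sub-additivity of total variation distance \cite[Fact 3.1]{brennan2019optimal} imply that the output distribution of $\mathcal{S}'$ is $o(1)$-close to what it would be if $\pi$ were an exactly uniform permutation; under that coupling, $A_G^\pi$ is distributed as ${\sf G}(\ell \cdot k, 1/2)$ when the input is null and as ${\sf G}(\ell \cdot k, 1/2, k)$ when the input is planted (for either ${\sf kG}$ or ${\sf clkG}$), so $\mathcal{S}'$ inherits $\mathcal{S}$'s $\Omega(1)$ advantage on ${\sf PC_D}(\ell \cdot k, k)$ up to a $o(1)$ loss.

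There is no serious obstacle; the only delicate point is bookkeeping the randomness budget so that the $10 \ell \cdot k (\lceil \log (\ell \cdot k) \rceil)^2$ bits consumed by the permutation leave exactly enough multiple access random bits for $\mathcal{S}$, which is arranged by the statement's hypothesis on $r$.
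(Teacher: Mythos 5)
Your proposal is correct and follows essentially the same approach as the paper's proof: sample an approximately uniform permutation via Lemma~\ref{lem:permutation} from a reserved block of $10\ell k(\lceil\log(\ell k)\rceil)^2$ multiple access random bits, query the input through the permuted indices so that the ($k$-partite or clique-leakage) planted distribution becomes the vanilla ${\sf G}(\ell\cdot k,1/2,k)$ distribution, and invoke the data processing inequality to absorb the $o(1)$ total variation error.
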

\begin{proof}
Use the first $10 \ell \cdot k (\lceil \log (\ell \cdot k) \rceil)^2$ multiple access random bits to sample an efficiently computable random function  $\pi: [\ell \cdot k] \rightarrow [\ell \cdot k]$ using Lemma~\ref{lem:permutation}. Given an instance of ${\sf kPC_D}(\ell \cdot k,k)$ or ${\sf clkPC_D}(\ell \cdot k,k)$, invoke the algorithm for ${\sf PC_D}(\ell \cdot k,k)$ on the following graph and output the answer it gives. Given two vertices $i,j$, there is an edge between them if and only if there is an edge between the vertices $\pi(i)$ and $\pi(j)$ in the input graph. Clearly this construction gives a randomized logspace algorithm using $r$ multiple access random bits.
If $\pi$ were a uniformly random permutation, this new graph would be distributed as ${\sf PC_D}(\ell \cdot k,k)$. Since the total variation distance between $\pi$ and a uniformly random permutation is $o(1)$, the data processing inequality for total variation distance \cite[Fact 3.1]{brennan2019optimal} implies that the constructed algorithm solves our problem.\qedhere
\end{proof}

The following fact is folklore.
\begin{lemma}[Maximum clique size in an Erd\H{o}s-R\'enyi hypergraph ${\sf HG}^s(n, 1/2)$]\label{lem:Clique-size-in-Hyp-ER}\ \\
For $n=\omega(1)$, with probability at least $1-o(1)$, the size of the largest clique in an Erd\H{o}s-R\'enyi hypergraph ${\sf HG}^s(n, 1/2)$ is $O((\log n)^{\frac{1}{s-1}})$.
\end{lemma}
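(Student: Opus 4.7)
\begin{proofof}{Lemma~\ref{lem:Clique-size-in-Hyp-ER} (plan)}
The plan is a standard first-moment / union bound argument. Let $k := C(\log n)^{\frac{1}{s-1}}$ for a large enough absolute constant $C=C(s)$ to be chosen later (we may round up to an integer without affecting the asymptotics). I want to show that with probability $1 - o(1)$, the hypergraph ${\sf HG}^s(n,1/2)$ contains no clique of size $k$; since cliques are monotone (every subset of a clique is a clique), this implies the maximum clique size is at most $k = O((\log n)^{\frac{1}{s-1}})$.

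First I would compute the expected number of size-$k$ cliques. Any fixed set $T \subseteq [n]$ of size $k$ induces $\binom{k}{s}$ potential hyperedges, each independently present with probability $\tfrac12$, so $T$ is a clique with probability $2^{-\binom{k}{s}}$. By linearity of expectation, if $N_k$ denotes the number of size-$k$ cliques, then
\[
\E[N_k] \;=\; \binom{n}{k} \cdot 2^{-\binom{k}{s}} \;\leq\; n^k \cdot 2^{-\binom{k}{s}} \;=\; 2^{k \log n - \binom{k}{s}}.
\]
Next I would lower bound the exponent's subtracted term: since $s \geq 3$ is a constant and $k \to \infty$, we have $\binom{k}{s} \geq \tfrac{1}{2 \cdot s!}\,k^s$ for all sufficiently large $n$. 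Plugging in $k = C(\log n)^{\frac{1}{s-1}}$ gives
\[
\binom{k}{s} \;\geq\; \frac{C^{s-1}}{2 \cdot s!}\,k \log n,
\]
so choosing $C$ so that $\frac{C^{s-1}}{2 \cdot s!} \geq 2$ yields $\binom{k}{s} \geq 2k\log n$, and hence $\E[N_k] \leq 2^{-k \log n} = n^{-k} = o(1)$.

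Finally, Markov's inequality gives $\Pr(N_k \geq 1) \leq \E[N_k] = o(1)$, so with probability $1 - o(1)$ there is no clique of size $k$, and therefore the maximum clique size is $O((\log n)^{\frac{1}{s-1}})$. There is essentially no obstacle here; the only mild point to be careful about is that $s$ is treated as a constant throughout, so that $\binom{k}{s}=\Theta(k^s)$ with constants depending only on $s$, and that the chosen $C$ may depend on $s$. Everything else is routine.
\end{proofof}
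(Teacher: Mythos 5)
Your proof is correct and follows the same route as the paper's: a first-moment/union-bound computation giving $\binom{n}{k}\cdot 2^{-\binom{k}{s}} = o(1)$ for $k = \Theta((\log n)^{1/(s-1)})$. The paper's version is terser (it just asserts the bound is $o(1)$ for $t = \omega((\log n)^{1/(s-1)})$); you have filled in the elementary estimates, which is fine.
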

\begin{proof}
This follows from a simple union bound argument. The probability of any fixed vertex subset of size $t \geq s$ being a clique is $2^{-{t \choose s}}$. Since there are ${n \choose t}$ vertex subsets of size $t$, a union bound implies that the probability of there existing a clique of size $t$ is upper bounded by ${n \choose t} \cdot 2^{-{t \choose s}}$. If $t = \omega((\log n)^{\frac{1}{s-1}})$, this upper bound is $o(1)$.
\end{proof}

\section*{Acknowledgments}
We would like to thank Yanjun Han, Ray Li, and Greg Valiant for helpful discussions and feedback that improved the presentation of these results.

\addcontentsline{toc}{section}{References}
\bibliographystyle{alpha}
\bibliography{ref}
	
%\usepackage[backend=biber, citestyle=alphabetic, maxbibnames=999, maxalphanames=999]{biblatex}
%\addbibresource{ref}
%\printbibliography

\end{document}